\newtheorem{theorem}{Theorem}
\newtheorem{lemma}[theorem]{Lemma}
\newtheorem{definition}[theorem]{Definition}
\newtheorem{proposition}[theorem]{Proposition}
\newtheorem{remark}{Remark}
\begin{document}

\def\myparagraph#1{\vspace{2pt}\noindent{\bf #1~~}}

\newcounter{fooTH}
\newcounter{fooEQ}

\def\RA#1{{  \;\stackrel{#1}{\rightarrow}\;   }}
\def\shortldots{{\hspace{1pt}.\hspace{1pt}.\hspace{1pt}.\hspace{1pt}}}

\newcommand{\setTHcounter}[1]{\setcounter{theorem}{#1}\addtocounter{theorem}{-1}}
\newcommand{\saveTHcounter}{\setcounter{fooTH}{\value{theorem}}}
\newcommand{\restoreTHcounter}{\setcounter{theorem}{\thefooTH}}

\newcommand{\setEQcounter}[1]{\setcounter{equation}{#1}\addtocounter{equation}{-1}}
\newcommand{\saveEQcounter}{\setcounter{fooEQ}{\value{equation}}}
\newcommand{\restoreEQcounter}{\setcounter{equation}{\thefooEQ}}

\newcommand{\eqdef}{{\stackrel{\mbox{\tiny \tt ~def~}}{=}}}
\def\calQ{{\cal Q}}
\def\Q{{Q}}

\def\Bad{{\tt Bad}}
\def\BadPar{{\tt BadPar}}
\def\Image{{\tt Image}}
\def\TwoDeltaCeil{{\lceil 2\Delta\rceil}}
\def\OnePointFiveDeltaCeil{{\lceil 3\Delta/2\rceil}}

\long\def\ignore#1{}
\def\myps[#1]#2{\includegraphics[#1]{#2}}
\def\etal{{\em et al.}}
\def\Bar#1{{\bar #1}}
\def\br(#1,#2){{\langle #1,#2 \rangle}}
\def\setZ[#1,#2]{{[ #1 .. #2 ]}}
\def\Pr{\mbox{\rm Pr}}
\def\REACHED{\mbox{\tt REACHED}}
\def\AdjustFlow{\mbox{\tt AdjustFlow}}
\def\GetNeighbors{\mbox{\tt GetNeighbors}}
\def\true{\mbox{\tt true}}
\def\false{\mbox{\tt false}}
\def\Process{\mbox{\tt Process}}
\def\ProcessLeft{\mbox{\tt ProcessLeft}}
\def\ProcessRight{\mbox{\tt ProcessRight}}
\def\Add{\mbox{\tt Add}}

\def\suchthat#1#2{\setof{\,#1\mid#2\,}} 
\def\event#1{\setof{#1}}
\def\q={\quad=\quad}
\def\qq={\qquad=\qquad}
\def\calA{{\cal A}}
\def\calB{{\cal B}}
\def\calC{{\cal C}}
\def\calD{{\cal D}}
\def\calE{{\cal E}}
\def\calF{{\cal F}}
\def\calG{{\cal G}}
\def\calI{{\cal I}}
\def\calH{{\cal H}}
\def\calL{{\cal L}}
\def\calM{{\cal M}}
\def\calN{{\cal N}}
\def\calP{{\cal P}}
\def\calR{{\cal R}}
\def\calS{{\cal S}}
\def\calT{{\cal T}}
\def\calU{{\cal U}}
\def\calV{{\cal V}}
\def\calO{{\cal O}}
\def\calX{{\cal X}}
\def\calY{{\cal Y}}
\def\s{\footnotesize}
\def\calNG{{\cal N_G}}
\def\psfile[#1]#2{}
\def\psfilehere[#1]#2{}
\def\epsfw#1#2{\includegraphics[width=#1\hsize]{#2}}
\def\assign(#1,#2){\langle#1,#2\rangle}
\def\edge(#1,#2){(#1,#2)}
\def\VS{\calV^s}
\def\VT{\calV^t}
\def\slack(#1){\texttt{slack}({#1})}
\def\barslack(#1){\overline{\texttt{slack}}({#1})}
\def\NULL{\texttt{NULL}}
\def\PARENT{\texttt{PARENT}}
\def\GRANDPARENT{\texttt{GRANDPARENT}}
\def\TAIL{\texttt{TAIL}}
\def\HEADORIG{\texttt{HEAD$\_\:$ORIG}}
\def\TAILORIG{\texttt{TAIL$\_\:$ORIG}}
\def\HEAD{\texttt{HEAD}}
\def\CURRENTEDGE{\texttt{CURRENT$\!\_\:$EDGE}}

\def\unitvec(#1){{{\bf u}_{#1}}}
\def\uvec{{\bf u}}
\def\vvec{{\bf v}}
\def\Nvec{{\bf N}}

\def\IND(#1){{{\tt Ind}(\Gamma({#1}))}}
\def\Ind(#1){{{\tt Ind}({#1})}}
\def\ELF{{\Psi}}
\def\sigmainit{{\sigma^{\tt init}}}
\def\omegainit{{\omega^{\tt init}}}
\def\gammainit{{\gamma^{\tt init}}}
\def\Indinit{{{\tt Ind}^{\tt init}}}

\newcommand{\bg}{\mbox{$\bf g$}}
\newcommand{\bh}{\mbox{$\bf h$}}

\newcommand{\bx}{\mbox{$x$}}
\newcommand{\by}{\mbox{\boldmath $y$}}
\newcommand{\bz}{\mbox{\boldmath $z$}}
\newcommand{\bu}{\mbox{\boldmath $u$}}
\newcommand{\bv}{\mbox{\boldmath $v$}}
\newcommand{\bw}{\mbox{\boldmath $w$}}
\newcommand{\bvarphi}{\mbox{\boldmath $\varphi$}}
\newcommand{\bOmega}{\mbox{\boldmath $\Omega$}}
\newcommand{\bsigma}{\mbox{\boldmath $\sigma$}}

\newcommand\myqed{{}}

\def\R#1{{  \;\stackrel{#1}{\rightarrow}\;   }}
\def\SETOF#1{{\langle{#1}\rangle}}

\def\myeq{{\,\stackrel{\bullet}{=}\,}}


\title{
\Large\bf  
Commutativity in the Algorithmic Lov\'asz Local Lemma}
\author{Vladimir Kolmogorov \\ \normalsize Institute of Science and Technology Austria \\ {\normalsize\tt vnk@ist.ac.at}}
\date{}
\maketitle

\begin{abstract}
We consider the recent formulation of 
the Algorithmic Lov\'asz Local Lemma~\cite{HarveyVondrak15,Achlioptas:new,LLL:submitted} for finding objects that avoid ``bad features'', or ``flaws''.
It extends the Moser-Tardos resampling algorithm~\cite{MoserTardos} to more general discrete spaces.
At each step the method picks a flaw present
in the current state and goes to a new state according to some prespecified probability distribution (which
depends on the current state and the selected flaw).
However, it is less flexible than the Moser-Tardos method
since \cite{HarveyVondrak15,Achlioptas:new,LLL:submitted} require a specific flaw selection rule, whereas \cite{MoserTardos} allows
an arbitrary rule (and thus can potentially be implemented more efficiently).

We formulate a new ``commutativity'' condition, and prove that it is sufficient
for an arbitrary rule to work. It also enables an efficient parallelization under an additional
assumption.
We then show that existing resampling oracles
for perfect matchings and permutations do satisfy this condition.
%
\end{abstract}

\section{Introduction}
Let $\Omega$ be a (large) set of objects and $F$ be a set of {\em flaws},
where a flaw $f\in F$ is some non-empty set of ``bad'' objects, i.e.\ $f\subseteq\Omega$. 
Flaw $f$ is said to be {\em present in $\sigma$} if $\sigma\in f$.
Let $F_\sigma=\{f\in F\:|\:\sigma\in f\}$ be the set of flaws present in $\sigma$.
Object $\sigma$ is called {\em flawless} if $F_\sigma=\varnothing$.

The existence of flawless objects can often be shown via a probabilistic method.
First, a probability measure $\omega$ on $\Omega$ is introduced, then
flaws in $F$ become (bad) events that should be avoided.
Proving the existence of a flawless object is now equivalent to
showing that the probability of avoiding all bad events is 
 positive.
This holds if, for example, all events $f\in F$ are independent and the probability of each $f$ is smaller than 1.
The well-known Lov\'asz Local Lemma (LLL)~\cite{LLL75} is a powerful tool that can handle a (limited) dependency between the events.
Roughly speaking, it states that if the dependency graph is sparse enough (e.g.\ has a bounded degree)
and the probabilities of individual bad events are sufficiently small then a flawless object is guaranteed to exist.

LLL has been the subject of intensive research, see e.g.\ \cite{Szegedy:survey}
for a relatively recent survey. One of the milestone results
was the {\em constructive} version of LLL by Moser and Tardos~\cite{MoserTardos}.
It applies to the {\em variable model} in which $\Omega=\calX_1\times\ldots\times \calX_n$
for some discrete sets $\calX_i$, event $f$ depends on a small subset of variables denoted as ${\tt vbl}(f)\subseteq [n]$,
and two events $f,g$ are declared to be dependent if ${\tt vbl}(f)\cap{\tt vbl}(g)\ne \varnothing$.
The algorithm proposed in~\cite{MoserTardos} is strikingly simple:
(i) sample each variable $\sigma_i$ for $i\in[n]$ according to its
distribution; (ii) while $F_\sigma$ is non-empty, pick an arbitrary flaw $f\in F_\sigma$
and resample all variables $\sigma_i$ for $i\in{\tt vbl}(f)$. Moser and Tardos proved that if the LLL condition in~\cite{LLL75}
is satisfied then the expected number of resamplings is finite (polynomial for most of the known applications).

The recent development has been extending algorithmic LLL beyond the variable 
model, and in particular to non-Cartesian spaces. 
The first such work was by Harris and Srinivasan~\cite{Harris:permutations},
who considered the space of permutations. Achlioptas and Iliopoulos~\cite{Achlioptas}
introduced a more abstract framework where the behaviour of the algorithm is specified by a certain multigraph.
Harvey and Vondr\'ak~\cite{HarveyVondrak15} introduced the notion of {\em resampling oracles},
providing an algorithmization of the LLL given
efficiently implementable resampling oracles.
They also characterized the condition under which a resampling oracle exists:
it was called  {\em lopsided association} in~\cite{HarveyVondrak15}, and
was shown  to lie strictly between the original asymmetric LLL condition~\cite{LLL75}
and a more refined {\em lopsidependency} LLL condition~\cite{ErdosSpencer:LLLL}.

By definition, resampling oracles must satisfy a certain property intimately tying them
to the LLL measure~$\omega$.
Achlioptas and Iliopoulos~\cite{Achlioptas:new} gave a first analysis of LLL-inspired
algorithms whose transition probabilities do not need to satisfy this property,
and in~\cite{LLL:submitted} this analysis was simplified and extended.
%
%
%
We will refer to this setting from \cite{Achlioptas:new,LLL:submitted} as ``general algorithmic LLL''
(even though it contains cases that go beyond the probabilistic version of LLL~\cite{LLL75,ErdosSpencer:LLLL}).

This is the setting studied in this paper.
It does not assume any particular structure on sets $\Omega$ and $F$.
Instead, for each object $\sigma\in\Omega$ and flaw $f\in F_\sigma$ the user must provide
an oracle that will used for sampling a new object. It is
 specified by a set of {\em actions} $A(f,\sigma)\subseteq \Omega$ that can be taken to ``address'' flaw $f$,
and a probability distribution $\rho(\sigma'|f,\sigma)$ over $\sigma'\in A(f,\sigma)$.
At each step the algorithm selects a certain flaw $f\in F_\sigma$, samples an action $\sigma'\in A(f,\sigma)$
according to $\rho(\sigma'|f,\sigma)$, and goes there. 
This framework captures the Moser-Tardos algorithm~\cite{MoserTardos},
and can also handle other scenarios such as permutations and perfect matchings
(in which case $\Omega$ cannot be expressed as a Cartesian product).

One intriguing difference between the methods of~\cite{MoserTardos} and~\cite{Achlioptas,HarveyVondrak15,Achlioptas:new,LLL:submitted}
is that~\cite{MoserTardos} allows an arbitrary rule for selecting a flaw $f\in F_\sigma$,
whereas~\cite{Achlioptas,HarveyVondrak15,Achlioptas:new,LLL:submitted} require a specific rule (which depends on a permutation $\pi$
of $F$ chosen in 
advance)\footnote{The papers~\cite{Achlioptas,Achlioptas:new} actually allowed more freedom in the choice of permutation $\pi$, e.g.\ it
may depend on the iteration number. However, once $\pi$ has been chosen, the algorithm should
still  examine some ``current'' set of flaws and pick the lowest one with respect to $\pi$.}. We will say that an algorithm is {\em flexible}
if it is guaranteed to work with any flaw selection rule.
We argue that flexibility can lead to a much more efficient practical implementation:
 it is not necessary to examine all flaws in $F_\sigma$,
the first found flaw will suffice. If the  list of current flaws is updated dynamically
then flexibility could potentially eliminate the need for a costly data structure (such as a priority queue)
and thus save a factor of $\Theta(\log n)$ in the complexity.
The rule may also affect the number of steps in practice; experimentally,
the selection process matters, as noted in~\cite{Szegedy:survey}.

Achlioptas and Iliopoulos discuss flaw selection rules in~\cite[Section 4.3]{Achlioptas},
and remark that they do not see how to accommodate arbitrary rules in their framework.
It is known, however, that in special cases flexible rules can be used even beyond the variable model.
Namely, through a lengthy and a complicated analysis Harris and Srinivasan~\cite{Harris:permutations}
managed to show the correctness of a resampling algorithm for permutations,
and did not make assumptions on the flaw selection rule in their proof. They also proved a better bound for the parallel version of the algorithm.

This paper aims to understand which properties of the problem enable flexibility and parallelism.
Our contributions are as follows.
\begin{itemize}\setlength\itemsep{0pt}
\item We formulate a new condition that we call ``commutativity'', and prove that it is sufficient for flexibility.
\item We prove that it gives a better bound on the number of rounds of the parallel version of the algorithm.
In particular, we show how to use commutativity for handling ``partial execution logs'' instead of ``full execution logs''
(which is required for analyzing the parallel version).
\item We show that existing resampling oracles for permutations~\cite{Harris:permutations}
and perfect matchings in complete graphs~\cite{HarveyVondrak15} are commutative. (In fact, we treat
both cases in a single framework). Thus, we provide a simpler proof of the result in~\cite{Harris:permutations}
and generalize it to other settings, in particular to perfect matchings in certain graphs
(for which existing algorithms require specific rules). 
\end{itemize}
To our knowledge, our commutativity condition captures all previously known cases when the flaw selection rules was allowed to be arbitrary.
 
\myparagraph{Other related work}
Applications that involve non-Cartesian spaces $\Omega$ (such as permutations, matchings
and spanning trees) have often been tackled via the  {\em Lopsided LLL}~\cite{ErdosSpencer:LLLL};
we refer to~\cite{Lu:13,Mohr:PhD} for a comprehensive survey.
On the level of techniques there is some connection between this paper and a recent work by Knuth~\cite{Knuth};
we discuss this in Section~\ref{sec:results}.


\section{Background and preliminaries}
First, we give a formal description of the general algorithmic LLL framework. 
As described in the previous section,
we assume that for each object $\sigma\in\Omega$ and each flaw $f\in F_\sigma$
there is a non-empty set of  actions $A(f,\sigma)\subseteq \Omega$ that can be taken
 for ``addressing'' flaw $f$ at $\sigma$,
and a probability distribution $\rho(\sigma'|f,\sigma)$ over $\sigma'\in A(f,\sigma)$.
Note, by definition $A(f,\sigma)$ is the support of distribution $\rho(\cdot|f,\sigma)$.
The collection of all these distributions will be denoted as $\rho$.
We fix some probability distribution  $\omega$ on $\Omega$ with $\omega(\sigma)>0$ for all $\sigma\in\Omega$
(it will be used later for formulating various conditions). Note that our notation
is quite different from that of Harvey and Vondr\'ak~\cite{HarveyVondrak15}.\footnote{
``Flaws'' $f$ correspond to ``bad events'' $E_i$ in~\cite{HarveyVondrak15}.
The distribution over $\Omega$ was denoted in~\cite{HarveyVondrak15} as $\mu$, the states of $\Omega$ as $\omega$,
and the resampling oracle for the bad event $E_i$ at state $\omega\in\Omega$ as $r_i(\omega)$. 
}
The algorithm can now be stated as follows.

\begin{algorithm}[!h]
\caption{Random walk. Input: initial distribution $\omegainit$ over $\Omega$, strategy $\Lambda$.}\label{alg}
\begin{algorithmic}[1]
\STATE sample $\sigma\in\Omega$ according to $\omegainit$
\STATE {\bf while} $F_{\sigma}$ non-empty {\bf do}
\STATE ~~~~ select flaw $f\in F_{\sigma}$ according to $\Lambda$
\STATE ~~~~ sample $\sigma'\in A(f,\sigma)$ according to distribution $\rho(\sigma'|f,\sigma)$, set $\sigma\leftarrow\sigma'$.
\STATE {\bf end while}
\end{algorithmic}
\end{algorithm}

Clearly, if the algorithm terminates then it produces a flawless object $\sigma$.
The works~\cite{Achlioptas,HarveyVondrak15,Achlioptas:new,LLL:submitted} used specific strategies $\Lambda$.
As stated in the introduction, our goal is to understand when an arbitrary strategy can be used.
This means that flaw $f$ in line 3 is selected
according to some distribution which is a function of the entire past
execution history\footnote{The description of the algorithm in~\cite{MoserTardos}
says ``{\em pick an arbitrary violated event}''. This is consistent with
our definition of an ``arbitrary strategy'': in the analysis Moser and Tardos 
mention that this selection must come from some fixed procedure (either deterministic
or randomized), so that expected values are well-defined.}.
Note that if flaw $f\in F_\sigma$ in line 3 depends only on $\sigma$
then the algorithm can be viewed as a random walk in a Markov chain with states $\Omega$,
while in a more general case the walk can be non-Markovian. 

\subsection{Walks and the potential causality graph}

We say that $\sigma\stackrel{f}\rightarrow\sigma'$ is a (valid) walk
if it is possible to get from state $\sigma$  to  $\sigma'$ by ``addressing'' flaw $f$
as described in the algorithm,
i.e.\ if two conditions hold: $f\in F_\sigma$ and $\sigma'\in A(f,\sigma)$. 
Whenever we write $\sigma\stackrel{f}\rightarrow\sigma'$, we mean that it is a valid walk. 

In many applications sampling oracles satisfy a special condition called {\em atomicity}~\cite{Achlioptas}.
\begin{definition}\label{def:atomic}
$\rho$ is called {\em atomic} if for any $f\in F$
and $\sigma'\in\Omega$ there exists at most one object $\sigma\in\Omega$ such that $\sigma\stackrel{f}\rightarrow\sigma'$. 
\end{definition}

Next, we need to describe ``dependences'' between flaws in $F$.
Let $\sim$ be some symmetric relation on $F$ (so that $(F,\sim)$ is an undirected graph).
It is assumed to be fixed throughout the paper.
For a flaw $f\in F$ let $\Gamma(f)=\{g\in F\:|\:f\sim g\}\s$ be the set
of neighbors of $f$. Note, we may or may not have $f\sim f$,
and so $\Gamma(f)$ may or may not contain $f$. We will denote $\Gamma^+(f)=\Gamma(f)\cup\{f\}$,
and also $\Gamma(S)=\bigcup_{f\in S}\Gamma(f)$ and $\Gamma^+(S)=\bigcup_{f\in S}\Gamma^+(f)$ for a subset
 $S\subseteq F$.

\begin{definition}\label{def:causality}
Undirected graph $(F,\sim)$ is called a {\em potential causality graph} for $\rho$
if for any walk $\sigma\stackrel{f}\rightarrow\sigma'$ there holds $F_{\sigma'}\subseteq(F_{\sigma}-\{f\})\cup\Gamma(f)$.
\end{definition}
In other words, $\Gamma(f)$ must contain all flaws that can appear after addressing flaw $f$ at some state.
Also, $\Gamma(f)$ must contain $f$ if addressing $f$ at some state can fail to eradicate $f$.

Note that in Definition~\ref{def:causality} we deviate slightly from~\cite{Achlioptas,Achlioptas:new,LLL:submitted}: in their analysis the potential causality graph was {\em directed}
and therefore in certain cases could capture more information.
While directed graphs do matter in some applications (see examples in~\cite{Achlioptas,Achlioptas:new}),
we believe that in a typical application the potential causality relation is symmetric.
Using an undirected graph will be essential for incorporating commutativity.

\ignore{
The potential causality graph should not be confused with the {\em dependency graph}
used in the (lopsided) LLL. The latter describes conditional dependencies between random events,
and thus has a different meaning. 
For this reason there is no explicit connection between the framework of~\cite{Achlioptas}
and any particular version of LLL (even though in many applications the graphs turn out to have an identical structure).
}

A subset $S\subseteq F$ will be called {\em independent} 
if for any {\em distinct} $f,g\in S$ we have $f\nsim g$. (Thus, loops $f\sim f$ in the graph $(F,\sim)$
do not affect the definition of independence).
For a subset $S\subseteq F$ we denote $\Ind(S)=\{T\subseteq S\:|\:T\mbox{ is independent}\}$.

\subsection{Commutativity} 
We now formulate  new conditions that will allow an arbitrary flaw selection rule to be used.
\begin{definition}\label{def:main}
$(\rho,\sim)$ is called {\em weakly commutative} if there exists a mapping ${\tt SWAP}$
that sends any walk 
 $\sigma_1\RA{f}\sigma_2\RA{g}\sigma_3$
with $f\nsim g$ to another valid walk 
$\sigma_1\RA{g}\sigma'_2\RA{f}\sigma_3$, and this mapping
is injective.
\end{definition}
Note that in the atomic case the definition can be simplified. Namely, $(\rho,\sim)$ is weakly commutative if and only if it satisfies the following condition:
\begin{itemize}
\item {\em For any walk $\sigma_1\RA{f}\sigma_2\RA{g}\sigma_3$
with $f\nsim g$ there exists state $\sigma'_2\in\Omega$ such that 
$\sigma_1\RA{g}\sigma'_2\RA{f}\sigma_3$ is also a walk. }
\end{itemize}
Indeed, by atomicity
the state $\sigma'_2$ is unique, and so mapping ${\tt SWAP}$
in Definition~\ref{def:main} is constructed in a natural way.
This mapping is reversible and thus injective.

For several results we will also need a stronger property.

\begin{definition}\label{main:strong}
$(\rho,\sim)$ is called {\em strongly commutative} 
if for any walk $\tau=\sigma_1\RA{f}\sigma_2\RA{g}\sigma_3$
with $f\nsim g$ and ${\tt SWAP}(\tau)=\sigma_1\RA{g}\sigma'_2\RA{f}\sigma_3$ there holds
\begin{equation}
\rho(\sigma_2|f,\sigma_1)\rho(\sigma_3|g,\sigma_2)=\rho(\sigma'_2|g,\sigma_1)\rho(\sigma_3|f,\sigma'_2)
\end{equation}
\end{definition}

It is straightforward to check that strong commutativity holds in the variable model of Moser and Tardos.
In fact, an additional property holds: for any
$\sigma_1\stackrel{f}\rightarrow\sigma_2\stackrel{g}\rightarrow\sigma_3$
with $f\nsim g$ there exists exactly one state $\sigma'_2\in\Omega$ such that 
$\sigma_1\stackrel{g}\rightarrow\sigma'_2\stackrel{f}\rightarrow\sigma_3$.
%
Checking strong commutativity for non-Cartesian spaces $\Omega$ is more involved; we refer to Section~\ref{sec:construction} for details.

\subsection{Parallel version} We will also consider the following version of the algorithm (see Algorithm~\ref{alg:parallel}).
As presented, the algorithm is actually sequential. However, 
with appropriate implementations of lines 3-8 
 it becomes equivalent to some existing parallel algorithms,
namely 
to the parallel algorithm of Moser and Tardos~\cite{MoserTardos} in the case of the variable model,
and to the parallel algorithm of 
Harris and Srinivasan~\cite{Harris:permutations} in the case
of permutations. 
Algorithm~\ref{alg:parallel} can thus be viewed as a framework for parallel algorithms.
Note that it is closely related to the ``{\tt MaximalSetResample}'' algorithm of Harvey and Vondr\'ak~\cite{HarveyVondrak15}
(see below).

\begin{algorithm}[!h]
\caption{Parallel random walk. 
}\label{alg:parallel}
\begin{algorithmic}[1]
\STATE sampe $\sigma\in\Omega$ according to distribution $\omegainit$
\STATE {\bf while} $F_{\sigma}$ non-empty {\bf do}
\STATE ~~~~ set $I=\varnothing$
\STATE ~~~~ {\bf while} set $F_\sigma-\Gamma^+(I)$ is non-empty {\bf do}
\STATE ~~~~~~~~ pick some $f\in F_\sigma-\Gamma^+(I)$
\STATE ~~~~~~~~ sample $\sigma'\in A(f,\sigma)$ according to $\rho(\sigma'|f,\sigma)$, set $\sigma\leftarrow\sigma'$.
\STATE ~~~~~~~~ set $I\leftarrow I\cup\{f\}$
\STATE ~~~~ {\bf end while}
\STATE {\bf end while}
\end{algorithmic}
\end{algorithm}
Lines 3-8 will be called a {\em round}.
In some cases each round admits an efficient parallel implementation (with a polylogarithmic running time).
In particular, this has been shown for the variable model of Moser and Tardos~\cite{MoserTardos}
and for permutations~\cite{Harris:permutations}.
Accordingly, we will be interested in the number of rounds of the algorithm.

Note, during round $r$ the set $F_\sigma-\Gamma^+(I)$ in line 5 shrinks
from iteration to iteration (and so flaw $f$ in line 5
satisfies $f\in F_{\sigma_r}$, where $\sigma_r$ is the state in the beginning of round $r$).
This property can be easily verified using induction and Definition~\ref{def:causality}.


\myparagraph{$\pi$-stable strategy} Let us fix a total order $\preceq_\pi$ on $F$
defined by some permutation $\pi$ of $F$.
Consider a version of Algorithm~\ref{alg:parallel} where flaw $f$ in line 5 is selected as the lowest flaw
in $F_\sigma-\Gamma^+(I)$ (with respect to $\preceq_\pi$). This corresponds to Algorithm~\ref{alg} with
a specific strategy $\Lambda$; this strategy will be called {\em $\pi$-stable}.
It coincides with the {\tt MaximalSetResample} algorithm of Harvey and Vondr\'ak~\cite{HarveyVondrak15}.


\subsection{Algorithmic LLL conditions}\label{sec:LLLconditions}
In this section we formulate sufficient conditions under which a flawless object will be guaranteed to exist. The conditions involve two vectors,
 $\lambda$ and $\mu$. Roughly speaking,
$\lambda$ characterizes sampling oracles and $\mu$ characterizes graph $(F,\sim)$.

\begin{definition} The pair $(\rho,\sim)$ is said to satisfy Algorithmic LLL conditions if
there exist vectors $\lambda,\mu\in\mathbb R^{|F|}$ such that
\begin{subequations}\label{eq:Condition}
\begin{IEEEeqnarray}{cl}
\lambda_f \ge \sum_{\sigma\in f}\rho(\sigma'|f,\sigma)\frac{\omega(\sigma)}{\omega(\sigma')}&\quad\qquad\forall f\in F,\sigma'\in\Omega \label{eq:Condition:a} \\
\frac{\lambda_f}{\mu_f} \sum_{S\in\IND(f)}\mu(S)\le\theta &\quad\qquad \forall f\in F \label{eq:Condition:b}
\end{IEEEeqnarray}
\end{subequations}
where  $\theta\in(0,1)$ is some constant and $\mu(S)=\prod_{g\in S}\mu_g$. 
\end{definition}
Of course, vector $\lambda$ can be easily eliminated from~\eqref{eq:Condition}.
However, it is convenient to have it explicitly since in many cases it has a natural interpretation.

Note that for resampling oracles~\cite{HarveyVondrak15} one
has $\lambda_f=\omega(f)=\sum_{\sigma\in f}\omega(\sigma)$ and an equality in~\eqref{eq:Condition:a}.
The more general condition~\eqref{eq:Condition:a} was introduced in~\cite{Achlioptas:new,LLL:focs};
%
for more details we refer
to~\cite{LLL:submitted}.




\begin{remark}\label{remark:cluster-expansion}
An alternative condition that appeared in the literature (for certain $\lambda$'s) is 
\begin{equation}
\frac{\lambda_f}{\mu_f} \sum_{S\subseteq \Gamma(f)}\mu(S)=\frac{\lambda_f}{\mu_f} \prod_{g\in \Gamma(f)}(1+\mu_g)\le\theta \label{eq:Condition:b'}
\end{equation}
Clearly, \eqref{eq:Condition:b} is weaker than \eqref{eq:Condition:b'}.
We mention that \eqref{eq:Condition:b'} is analogous to the original assymetric LLL condition in~\cite{LLL75},
while~\eqref{eq:Condition:b} corresponds to the {\em cluster expansion} improvement by Bissacot et al.~\cite{Bissacot}
(with the matching algorithmic version by Pedgen~\cite{Pedgen} who considered the variable model  of Moser and Tardos).
It is known that the cluster expansion version can give better results for some applications, see e.g.~\cite{Boettcher,Ndreca,HarveyVondrak15}.

Please note a slight technical difference between the cluster expansion condition in~\cite{Bissacot}
and condition~\eqref{eq:Condition:b}: 
in the former the sum was taken over  $S\in{\tt Ind}(\Gamma^+(f))$,
while in the latter the sum is over $S\in{\tt Ind}(\Gamma(f))$.
The latter condition can be weaker, if $\Gamma^+(f)\ne \Gamma(f)$ for some $f\in F$.
\end{remark}

\myparagraph{Shearer's condition}
Shearer~\cite{Shearer} gave a sufficient and necessary condition for a general LLL to hold
for a given dependency graph. Kolipaka and Szegedy~\cite{KolipakaSzegedy}
showed that this condition is sufficient for the Moser-Tardos algorithm,
while Harvey and Vondr\'ak~\cite{HarveyVondrak15} generalized the analysis to 
the framework with resampling oracles. We will show that the same analysis holds for the
framework considered in this paper.

Consider vector $p\in\mathbb R^{|F|}$.
For a subset $S\subseteq F$ 
denote $p^S=\prod_{f\in S}p_f$; this is a monomial in variables $\{p_f\:|\:f\in F\}$.
Also, define polynomial $q_S$ as follows:
\begin{equation}
q_S=q_S(p)=\sum_{I:S\subseteq I\in\Ind(F)} (-1)^{|I|-|S|}p^I
\end{equation}
\begin{definition}\label{def:Shearer}
Vector $p$ is said to satisfy the Shearer's condition if $q_S(p)\ge 0$ for all $S\subseteq F$,
and $q_\varnothing(p)>0$.

The pair $(\rho,\sim)$ is said to satisfy Shearer's condition if there exist vector $p$ satisfying
Shearer's condition, vector $\lambda$ satisfying~\eqref{eq:Condition:a},
and a constant $\theta\in(0,1)$ such that $\lambda_f\le\theta\cdot p_f$ for all $f\in F$.
\end{definition}
\begin{remark}
Bissacot et al.\ proved in~\cite{Bissacot} that the cluster expansion condition implies Shearer's condition. 
(A more explicit proof can be found in~\cite{HarveyVondrak15}).
However, as mentioned in Remark~\ref{remark:cluster-expansion}, the condition in~\cite{Bissacot}
is slightly different from~\eqref{eq:Condition:b}, so we will not be able to use this implication.
\end{remark}
\begin{remark}
Note that both in condition~\eqref{eq:Condition:b} and in the Shearer's condition we require
constant $\theta$ to be strictly smaller than $1$; thus, we assume that conditions hold with
some slack. In the literature analogous conditions are often formulated
without slack, i.e.\ with $\theta=1$~\cite{Shearer,Bissacot,Pedgen,KolipakaSzegedy}.
The difference between slack and non-slack versions has been recently studied
in detail by Harvey and Vondr\'ak. Interestingly,
if either the cluster expansion condition or the Shearer's condition
holds with $\theta=1$ then it also holds with $\theta<1$,
because conditions define open regions~\cite{HarveyVondrak15}.

We do not know whether the same is true in the case of condition~\eqref{eq:Condition:b}
(since it is slightly different from the cluster expansion condition).
\end{remark}

\section{Our results}\label{sec:results}

First, we state our results for the sequential version (Algorithm~\ref{alg}).
Unless mentioned otherwise, the flaw selection strategy and the initial distribution $\omegainit$ are assumed to be arbitrary.

\begin{theorem}\label{th:seq}
Suppose that $(\rho,\sim)$ satisfies either condition~\eqref{eq:Condition}
or  Shearer's condition, and one of the following holds:
\begin{itemize}
\item[(a)] $(\rho,\sim)$ is weakly commutative and atomic.
\item[(b)] $(\rho,\sim)$ is strongly commutative. 
\end{itemize}
 Define
$$
\gammainit = \max_{\sigma\in\Omega}\frac{\omegainit(\sigma)}{\omega(\sigma)},\qquad
\Indinit = \begin{cases}
\bigcup_{\sigma\in{\tt supp}(\omegainit)}\Ind(F_{\sigma}) & \mbox{in the case (a)} \\
\Ind(F) & \mbox{in the case (b) }
\end{cases}
$$
where ${\tt supp}(\omegainit)=\{\sigma\in\Omega\:|\:\omegainit(\sigma)>0\}$ is the support of $\omegainit$. The probability that Algorithm~\ref{alg} 
produces a flawless object in fewer than $T+r$ steps is at least $1-\theta^r$ where
\begin{eqnarray}\label{eq:Tseq}
T&=&
\frac{\mathlarger 1}{\mathlarger{\log \theta^{-1}}} \left(\log{ \gammainit}
    + \log  \mathlarger{\sum\limits_{R\in\Indinit}}\mu(R)\right) 
\end{eqnarray}
and $\mu(R)=\prod_{f\in R}\mu_f$ (in the case of condition~\eqref{eq:Condition}) or $\mu(R)=\frac{q_R(p)}{q_\varnothing(p)}$
(in the case of the Shearer's condition).
\end{theorem}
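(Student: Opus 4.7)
The plan is to follow the stable-set-sequence analysis of Kolipaka-Szegedy and Harvey-Vondr\'ak, extended to arbitrary strategies via the ${\tt SWAP}$ map. First I would establish a per-log probability estimate: for any fixed flaw sequence $L=(f_1,\ldots,f_T)$, telescoping~\eqref{eq:Condition:a} across transitions of a walk $\sigma_0\R{f_1}\sigma_1\R{f_2}\cdots\R{f_T}\sigma_T$ and using $\omegainit(\sigma_0)/\omega(\sigma_0)\le\gammainit$ yields
\begin{equation*}
\sum_{\sigma_T\in\Omega}\Pr\left[\text{Algorithm~\ref{alg} produces log $L$ ending at $\sigma_T$}\right]\;\le\;\gammainit\prod_{t=1}^{T}\lambda_{f_t}.
\end{equation*}
A union bound then gives $\Pr[\text{Algorithm~\ref{alg} runs at least $T+r$ steps}]\le\gammainit\sum_{L}\prod_{t}\lambda_{f_t}$, the sum being over valid length-$(T+r)$ logs produced by the given strategy.

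Next I would call a log \emph{$\pi$-stable} if it decomposes into maximal consecutive blocks $I_1,\ldots,I_\ell$, each an independent set of $(F,\sim)$ listed in $\pi$-order, with $I_{t+1}$ beginning with a flaw of smaller $\pi$-rank than some element of $I_t$. In case (a), the $\pi$-stable strategy produces exactly $\pi$-stable logs. In case (b), iteratively applying ${\tt SWAP}$ to adjacent misordered non-$\sim$-related pairs sorts each walk into $\pi$-stable form; atomicity plus injectivity of ${\tt SWAP}$ give an injection from arbitrary-strategy walks into $\pi$-stable walks of the same length and endpoints, preserving $\prod_t\lambda_{f_t}$. In case (c), strong commutativity implies each ${\tt SWAP}$ preserves the walk weight $\omegainit(\sigma_0)\prod_t\rho(\sigma_t|f_t,\sigma_{t-1})$ exactly, which, via a sum rearrangement over ${\tt SWAP}$-orbits, allows the sum over all logs in the per-log bound to be replaced by the sum over $\pi$-stable logs at no cost. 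The key subtlety is that in case (c) a ${\tt SWAP}$ can bubble to the front a flaw $g\notin F_{\sigma_0}$, so after sorting the first block $I_1$ can be any element of $\Ind(F)$ rather than merely of $\Ind(F_\sigma)$ for $\sigma\in\supp(\omegainit)$; this is exactly why $\Indinit=\Ind(F)$ is needed in case (c), while cases (a), (b) afford the tighter $\bigcup_{\sigma\in\supp(\omegainit)}\Ind(F_\sigma)$.

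The final step is a potential-function argument, as in Harvey-Vondr\'ak (for~\eqref{eq:Condition}) or Kolipaka-Szegedy (for Shearer). Define $\Phi(S)=\sum_{R\in\Ind(F),\,R\supseteq S}\mu(R)$. Condition~\eqref{eq:Condition:b} (respectively Shearer's identity for $q_S$) yields a one-step contraction under which each additional flaw of a $\pi$-stable log contributes a factor $\theta$ to the weight sum, so that
\begin{equation*}
\sum_{\text{$\pi$-stable $L$, $|L|=t$}}\prod_{i=1}^{t}\lambda_{f_i}\;\le\;\theta^{t}\cdot\sum_{R\in\Indinit}\mu(R).
\end{equation*}
Combining with the per-log bound and substituting the choice of $T$ in~\eqref{eq:Tseq}, which is precisely the value for which $\gammainit\cdot\theta^{T}\cdot\sum_{R\in\Indinit}\mu(R)=1$, collapses the prefactor and leaves $\theta^r$, as claimed. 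The main technical obstacle is the case (c) SWAP-sorting argument: showing that repeated application of ${\tt SWAP}$ admits a well-posed canonical $\pi$-stable form, that the weight identities of Definition~\ref{main:strong} compose coherently across many swaps, and that the orbit analysis really recovers the sum $\sum_{R\in\Indinit}\mu(R)$ with $\Indinit=\Ind(F)$ rather than a strictly smaller set.
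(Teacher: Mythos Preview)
Your outline for parts (a) and (b) is essentially the paper's approach: the telescoping bound is Lemma~\ref{lemma:WbracketsSum}, the fact that the $\pi$-stable strategy produces $\pi$-stable walks is Proposition~\ref{prop:parallel}, the injective forward swapping for part (b) is Theorem~\ref{th:Lambda}, and the counting of stable sequences is Theorem~\ref{th:StabCounting}.

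Your treatment of case (c), however, has a genuine gap. You propose sorting each walk into \emph{forward} $\pi$-stable form via repeated {\tt SWAP}, and you justify $\Indinit=\Ind(F)$ by asserting that ``a {\tt SWAP} can bubble to the front a flaw $g\notin F_{\sigma_0}$''. This is false: by Definition~\ref{def:main} the swapped walk $\sigma_1\R{g}\sigma'_2\R{f}\sigma_3$ must be valid, so necessarily $g\in F_{\sigma_1}$. Indeed Lemma~\ref{lemma:StableWalkFirstSet} shows that any forward $\pi$-stable walk starting at $\sigma_1$ has root in $\Ind(F_{\sigma_1})$; a working forward argument would therefore give the \emph{tighter} $\Indinit$ of cases (a,b), not the one stated for (c).

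The real obstruction is injectivity. In the atomic case a walk is determined by its word and final state (Proposition~\ref{prop:AtomicReconstruction}), and the injectivity lemma in Section~\ref{sec:proof:th:Lambda} exploits exactly this. Without atomicity, two distinct walks in $\Bad(t)$ following the same strategy can have swap-equivalent words, and your ``sum rearrangement over {\tt SWAP}-orbits'' does not show that the forward sorting map is injective on $\Bad(t)$, nor that each orbit meets $\Bad(t)$ and the $\pi$-stable set with matching multiplicities. The paper sidesteps this by a \emph{backward-looking} analysis (Theorem~\ref{th:ParSwapping}): it transforms each walk so that the \emph{reversed} word ${\tt REV}[W^\varnothing_\tau]$ is $\pi$-stable, with the root anchored at the tail of the walk rather than the head. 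Injectivity is then obtained by a carefully ordered parallel swap schedule that preserves ``validity'' of the set of walks (Section~\ref{sec:proof:th:ParSwapping}). Because the root now lives at the end of the walk, it has no relation to the initial state, and \emph{that} is why $\Indinit=\Ind(F)$ in case (c).
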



Next, we analyze the parallel version.

\begin{theorem}\label{th:parallel}
Suppose that $(\rho,\sim)$ satisfies either condition~\eqref{eq:Condition}
or the Shearer's condition, and is strongly commutative.
Then the probability that Algorithm~\ref{alg:parallel} 
produces a flawless object in fewer than $T+r$ rounds is at least $1-\theta^r$ where
\begin{eqnarray}\label{eq:Tpar}
T&=&\frac{\mathlarger 1}{{\log \theta^{-1}}} \left( \log{ \gammainit}+\log  \sum\limits_{f\in F}\mu(\{f\})\right) 
\end{eqnarray}
where $\gammainit$ and $\mu(\{f\})$ are as defined in Theorem~\ref{th:seq}.
In particular, $\mu(\{f\})=\mu_f$ in the case of condition~\eqref{eq:Condition} and  $\mu(\{f\})=\frac{q_{\{f\}}(p)}{q_\varnothing(p)}$
in the Shearer's case.
\end{theorem}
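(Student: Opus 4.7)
The plan is to reduce the analysis of the parallel algorithm to the sequential machinery of Theorem~\ref{th:seq}(c) by means of a chain-extraction argument specific to the round structure of Algorithm~\ref{alg:parallel}. The central observation is that within a single round the resampled flaws form an independent set in $(F,\sim)$, so the \emph{flat log} obtained by concatenating the resamplings of all rounds in order is itself a legal sequential execution of Algorithm~\ref{alg} under some (arbitrary) strategy; strong commutativity is then what lets us apply the probabilistic reasoning of the sequential proof to this reinterpretation.

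The first step I would carry out is a chain lemma: if Algorithm~\ref{alg:parallel} executes at least $R$ rounds, there exists a sequence of flaws $g_1,\dots,g_R$ with $g_{i-1}\in\Gamma^+(g_i)$ for every $i\ge 2$ such that $g_i$ is resampled during round $i$. The argument is by reverse induction. Pick any flaw $g_R$ resampled in round $R$. Comparing the state at the start and the end of round $R-1$, either $g_R$ was absent at the start and arose as a consequence of resampling some $g_{R-1}$ in round $R-1$, in which case $g_R\in\Gamma(g_{R-1})$ by Definition~\ref{def:causality}, or $g_R$ was present throughout round $R-1$ but was blocked from being added to $I$ because some $g_{R-1}\in I$ satisfies $g_R\in\Gamma^+(g_{R-1})$. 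Either case supplies a suitable $g_{R-1}$, and iterating produces the entire chain.

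With the chain lemma in hand, I would bound the probability that a specific chain $g_1,\dots,g_R$ is realised by the flat log. Following the same template as the sequential proof, this probability is controlled by a product of $\lambda_{g_i}$ factors together with the usual $\omega$-ratio of states and the initial-distribution constant $\gammainit$. Condition~\eqref{eq:Condition:b} then telescopes link by link: since $\{g_{i-1}\}\in\IND(g_i)$ whenever $g_{i-1}\in\Gamma(g_i)$, the term $\mu_{g_{i-1}}$ appears inside $\sum_{S\in\IND(g_i)}\mu(S)\le\theta\,\mu_{g_i}/\lambda_{g_i}$, contributing a factor of $\theta$ per link. A union bound over all endpoints $g_R=f$ and all chains rooted at $f$ yields
\begin{equation*}
\Pr[\text{Algorithm~\ref{alg:parallel} needs at least $R$ rounds}]\;\le\;\gammainit\cdot\theta^R\cdot\sum_{f\in F}\mu_f,
\end{equation*}
which rearranges into the announced bound with $R=T+r$. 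The Shearer variant follows by replacing $\mu_f$ with $q_{\{f\}}(p)/q_\varnothing(p)$ and invoking the $q_S$ identities used in~\cite{KolipakaSzegedy,HarveyVondrak15}.

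The main obstacle will be making the collapse from independent-set sums to single-flaw sums rigorous. In the sequential analysis the endpoint witness can be any independent subset, which is why the factor $\sum_{S\in\Indinit}\mu(S)$ appears in Theorem~\ref{th:seq}; the parallel chain, by contrast, is a path of pairwise \emph{dependent} flaws, so the cluster-expansion sum in~\eqref{eq:Condition:b} is evaluated only at singletons and telescopes to a single $\mu_f$ per chain endpoint. Coupling this telescoping to strong commutativity so that the off-chain resamplings present in the flat log are correctly absorbed, and verifying that the analysis underlying Theorem~\ref{th:seq}(c) really does extend to the arbitrary-strategy flat log produced by the round structure, is the place where the bookkeeping will demand the most care.
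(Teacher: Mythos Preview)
Your chain-extraction lemma is correct and indeed appears in the paper (Proposition~\ref{prop:NASJFJSANFA}(a)). The genuine gap is the step you yourself flag as ``the main obstacle'': bounding the probability that a fixed chain $g_1,\dots,g_R$ is realised as a \emph{subsequence} of the flat log. You write that ``this probability is controlled by a product of $\lambda_{g_i}$ factors'' by ``following the same template as the sequential proof'', but the sequential proof of Theorem~\ref{th:seq}(c) works because after swapping the \emph{entire} walk becomes a reverse stable word, so Lemma~\ref{lemma:WbracketsSum} applies directly. Here the chain is only a scattered subsequence; the off-chain resamplings do not vanish, and there is no independence argument (as in the Moser--Tardos variable model) that would let you simply marginalise them out. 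Strong commutativity by itself does not give you a bound of $\prod_i\lambda_{g_i}$ on the probability of a chain occurring.

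The paper closes this gap with two additional pieces of machinery that your proposal does not mention. First, a swapping theorem for partial logs (Theorem~\ref{th:ParSwapping}): for each $f\in F$ one builds an injective swapping map on the set $\BadPar^f(s)$ that pushes a reverse $\pi$-stable word rooted at $\{f\}$ to the \emph{prefix} of the walk, with the chain guaranteeing this prefix has length at least $s$. Second, a ``valid set'' lemma (Theorem~\ref{th:calX}) showing that for any valid set of walks sharing a prefix $W$, the total probability is at most $\gammainit\lambda_W$; this is what absorbs the off-chain suffix. One then sums $\lambda_W$ over stable words $W\in{\tt Stab}_\pi(\{f\},s)$ using Theorem~\ref{th:StabCounting} to get $\mu_f\theta^s$. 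Note in particular that the union bound is over stable words rooted at $\{f\}$, not over chains: after swapping, what sits in the prefix is generally richer than the chain itself, so a direct telescoping along the chain as you sketch would not match what the swapping actually produces.
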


\begin{remark}
The possibility of using distribution $\omegainit$ which is different from $\omega$
was first proposed by Achlioptas and Iliopoulos in~\cite{Achlioptas}. Namely, they used a distribution with $|{\tt supp}(\omegainit)|=1$,
and later extended it to arbitrary distributions $\omegainit$ in~\cite{Achlioptas:new}.
There is a trade-off in choosing $\omegainit$: smaller ${\tt supp}(\omegainit)$ leads to a smaller set $\Indinit$
but increases the constant $\gammainit$. It is argued in~\cite{Achlioptas:new}
that using $\omegainit\ne\omega$ can be beneficial when sampling from $\omega$ is a difficult problem,
or when the number of flaws is exponentially large.
\end{remark}

The techniques we develop to deal with the commutative case, yielding Theorems~\ref{th:seq} and~\ref{th:parallel}  above, also give a new result for the non-commutative setting. 
\begin{theorem}\label{th:pistable}
Suppose that Algorithm~\ref{alg} uses a $\pi$-stable strategy and $(\rho,\sim)$ satisfies either condition~\eqref{eq:Condition} or the Shearer's condition. 
Then, the same conclusions hold as for case (a) of Theorem~\ref{th:seq}.
\end{theorem}
We note that while Theorem~\ref{th:seq} was already largely established in earlier works~\cite{HarveyVondrak15, Achlioptas:new, LLL:submitted}, the combination of condition~\eqref{eq:Condition:a}
with the Shearer's condition is new. We include it mainly because its proof serves to introduce several notions from earlier works needed to prove our main results concerning the commutative setting.

\myparagraph{Our techniques} The general idea of the proofs is to construct a ``swapping mapping''
that transforms ``walks'' (which are possible executions of the algorithm) to some canonical
form by applying swap operations from Definition~\ref{def:main}. 
Importantly, we need to make sure that the mapping is injective:
this will guarantee that the sum over original walks is smaller or equal than the sum over ``canonical walks''.
We then upper-bound the latter sum 
using some standard techniques~\cite{KolipakaSzegedy,HarveyVondrak15}.
We use two approaches:
\begin{enumerate}
\item Theorem~\ref{th:seq}(a): transforming walks to ``forward stable sequences'' (a {\em forward-looking analysis}). This works only in the atomic case (under the
weak commutativity assumption),
and can make use of the knowledge of the set ${\tt supp}(\omegainit)\!$ \linebreak leading to a tighter definition of the set $\Indinit$. 
\item Theorems~\ref{th:seq}(b) and~\ref{th:parallel}: transforming walks to ``backward stable sequences'' (a {\em backward-looking analysis}).
This works in the non-atomic cases, but requires strong commutativity.
In this approach the ``roots'' of stable sequences are on the right, and have no connection to $\omegainit$;
this means that we must use $\Indinit=\Ind(F)$. 

Analyzing the parallel version requires dealing with ``partial execution logs'' instead of ``full execution logs''.
It appears that this is possible only with backward sequences.
\end{enumerate}

Note that previously a backward-looking analysis (with either ``stable sequences'' or ``witness trees'')
was used for the variable model of Moser and Tardos~\cite{MoserTardos,KolipakaSzegedy,Pedgen}
and for permutations~\cite{Harris:permutations},
while a forward-looking analysis  
was used for LLL versions on non-Cartesian spaces~\cite{Achlioptas,HarveyVondrak15,Achlioptas:new}
and also on Cartesian spaces~\cite{Giotis}.

After the first version of this work~\cite{vnk:LLL:TR}
we learned about a recent book draft by Knuth~\cite{Knuth}. He considers the variable model of Moser-Tardos,
and gives an alternative proof of algorithm's running time 
which is also based on swapping arguments
(justified by a technique of ``coupling'' two random sources, similar to~\cite{MoserTardos}).
We emphasize that we go beyond the variable model, in which case justifying ``swapping'' seems to require different techniques.

The proofs of Theorems~\ref{th:seq} and \ref{th:parallel} are given below in Sections~\ref{sec:PROOF:noncommutative} and~\ref{sec:PROOF:commutative}.
(A brief overview of these proofs can be found in~\cite{LLL:focs}).
The most technical part is probably constructing an injective swapping mapping for transforming to backward stable sequences (proved in Section~\ref{sec:proof:th:ParSwapping}).
In Section~\ref{sec:construction} we describe our third result, which is a proof
of strong commutativity of some existing resampling oracles. We also consider one application,
namely rainbow matchings in complete graphs.


\section{Proof of Theorem~\ref{th:pistable}}\label{sec:PROOF:noncommutative}

Before jumping to the commutative case, we will need to give a background on the non-commutative
case. Accordingly, in this section we recall a proof of Theorem~\ref{th:pistable}. As mentioned in Section~\ref{sec:results},
various versions of this theorem have been established earlier, 
and techniques used  in this section are a combination
of ideas from previous works~\cite{KolipakaSzegedy,Achlioptas,HarveyVondrak15,Achlioptas:new,LLL:submitted}.

We write $f\cong g$ for flaws $f,g\in F$ if either $f\sim g$ or $f=g$
(and $f\not\cong g$ otherwise).

A {\em walk} of length $t$ is a sequence 
$\tau=\sigma_1\RA{w_1}\sigma_2\ldots\sigma_t\RA{w_t}\sigma_{t+1}$
such that $w_i\in F_{\sigma_i}$ and $\sigma_{i+1}\in A(w_i,\sigma_i)$ for $i\in[t]$. 
Its length is denoted as $|\tau|=t$.
For such a walk we define quantity
\begin{equation}
p(\tau)=\omegainit(\sigma_1)\cdot\prod_{i=1}^t \rho(\sigma_{i+1}|w_i,\sigma_i)  \label{eq:ptaur:def}
\end{equation}

Let $\Lambda$ be the strategy for selecting flaws used in Algorithm \ref{alg}. We assume in the analysis
that this strategy is deterministic, i.e.\ the flaw $w_i$ in a walk $\tau=\sigma_1\RA{w_1}\ldots\RA{w_{i-1}}\sigma_{i}\RA{w_i}\ldots$
is uniquely determined by the previous
history \linebreak $\tau_i=\sigma_1\RA{w_1}\ldots\RA{w_{i-1}}\sigma_{i}$. This assumption can be made w.l.o.g.:
if $\Lambda$ is randomized (i.e.\ a distribution over some set of deterministic strategies) then the claim of Theorem~\ref{th:seq} can be obtained by taking the appropriate expectation
over strategies (whose number is finite for a fixed finite $t$). A similar argument applies to Theorem~\ref{th:parallel}.

A walk $\tau$ 
of length $t$
that can be produced by Algorithm~\ref{alg} with a positive
probability 
will be called
a {\em bad $t$-trajectory}. Equivalently, it is a walk that starts at a state $\sigma\in{\tt supp}(\omegainit)$
and follows strategy $\Lambda$.
Note that it goes only through flawed states (except possibly the last state).
Let $\Bad(t)$ be the set of all bad $t$-trajectories. Clearly, for any $\tau\in\Bad(t)$
the probability that the algorithm will produce $\tau$ equals $p(\tau)$, as defined in~\eqref{eq:ptaur:def}.
This gives
\begin{proposition}\label{prop:tauprob}
The probability that Algorithm~\ref{alg} takes $t$ steps or more equals
$\sum_{\tau\in \Bad(t)}p(\tau)$.
\end{proposition}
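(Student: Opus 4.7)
The plan is to establish the identity by decomposing the event ``the algorithm runs for at least $t$ steps'' into a disjoint union over all possible prefixes of length $t$, and then checking that the probability assigned to each prefix by the algorithm coincides with the quantity $p(\tau)$ defined in~\eqref{eq:ptaur:def}.

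First I would formalize the probability space. The randomness used by Algorithm~\ref{alg} in its first $t$ iterations consists of (i) the initial sample $\sigma_1$ drawn from $\omegainit$, and (ii) for each $i=1,\ldots,t$, the resampled state $\sigma_{i+1}$ drawn from $\rho(\cdot\,|\,w_i,\sigma_i)$, \emph{provided} that the algorithm has not yet halted, i.e.\ $F_{\sigma_i}\neq\varnothing$. Under the assumption (made earlier in the section w.l.o.g.) that $\Lambda$ is deterministic, the selected flaw $w_i$ is a function of the history $\sigma_1\R{w_1}\cdots\R{w_{i-1}}\sigma_i$, so the random output of the first $t$ iterations is precisely a walk of length $t$.

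Next I would observe that the algorithm takes at least $t$ steps if and only if the realized history after $t$ iterations is some element $\tau\in\Bad(t)$: indeed, by definition $\Bad(t)$ consists of all walks of length $t$ that start in ${\tt supp}(\omegainit)$, respect $\Lambda$ (so $w_i$ is forced at each stage), and satisfy $w_i\in F_{\sigma_i}$ for every $i$ (so the termination test in line~2 fails at every step before $t$). Distinct walks $\tau,\tau'\in\Bad(t)$ correspond to disjoint events, and the union of these events is exactly $\{\text{algorithm runs}\geq t\text{ steps}\}$. It then remains to compute the probability of the event ``the first $t$-step history equals $\tau$'', which by the chain rule factors as $\omegainit(\sigma_1)\prod_{i=1}^{t}\rho(\sigma_{i+1}\,|\,w_i,\sigma_i)=p(\tau)$, since the flaw $w_i$ is deterministically forced by $\Lambda$ and contributes no probability factor. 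Summing over $\tau\in\Bad(t)$ yields the claim.

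There is no real obstacle here; the statement is essentially bookkeeping. The only subtlety worth spelling out is the reduction to deterministic $\Lambda$ (already handled earlier in Section~\ref{sec:PROOF:noncommutative}), which ensures that $w_i$ carries no probability mass and that the product in~\eqref{eq:ptaur:def} correctly describes the law of the first $t$ iterations. If $\Lambda$ were randomized, one would either extend $\tau$ to record $\Lambda$'s internal coins or take an expectation over the induced deterministic strategies, and the identity would persist verbatim.
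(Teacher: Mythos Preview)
Your argument is correct and matches the paper's reasoning, which treats the proposition as immediate: the text simply observes that for any $\tau\in\Bad(t)$ the probability that the algorithm produces $\tau$ equals $p(\tau)$, and states the proposition without further proof. Your write-up is a faithful expansion of that one-line justification, with the disjointness of the events and the chain-rule factorization made explicit.
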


If $W=w_1\ldots w_t$ is the complete sequence of flaws in a walk $\tau$ then we will write $\tau\myeq W$.
If we want to indicate certain intermediate states of $\tau$ then we will write them in square brackets
at appropriate positions, e.g.\ $\tau\myeq[\sigma_1]w_1w_2[\sigma_3]w_4w_5[\sigma_6]$.

In general, a sequence of flaws will be called a {\em word},
and a sequence of flaws together with some intermediate states (such as $[\sigma_1]w_1w_2[\sigma_3]w_4w_5[\sigma_6]$)
will be called a {\em pattern}.
For a pattern $X$ we define  $\SETOF{X}=\{\tau\:|\:\tau\myeq X\}$ to be the set of walks consistent with $X$. 
The number of flaws in $X$ is denoted as $|X|$.
The following has been shown in~\cite{LLL:submitted} using an induction argument.

\begin{lemma}[\cite{LLL:submitted}]
For any word $W$ and state $\sigma$ we have
\begin{subequations}\label{eq:WbracketsSum}
\begin{eqnarray}
\sum_{\tau\in\SETOF{W[\sigma]}}p(\tau)&\le&\gammainit\cdot \lambda_W\cdot \omega(\sigma)  \label{eq:WbracketsSum:a} \\
\sum_{\tau\in\SETOF{W}}p(\tau)&\le&\gammainit\cdot \lambda_W  \label{eq:WbracketsSum:b} 
\end{eqnarray}
\end{subequations}
where for a word $W=w_1\ldots w_t$ we denoted
$
\lambda_W = \prod_{i=1}^t \lambda_{w_i}
$. (As described in the previous paragraph, $\SETOF{W[\sigma]}$ is the set of walks $\tau$ whose sequence of flaws is $W$ and the last state is $\sigma$.)
 \label{lemma:WbracketsSum}
\end{lemma}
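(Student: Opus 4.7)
The plan is to prove~\eqref{eq:WbracketsSum:a} by induction on $t = |W|$, peeling off the last flaw so that condition~\eqref{eq:Condition:a} applies verbatim. The bound~\eqref{eq:WbracketsSum:b} will then follow by summing over the terminal state $\sigma$ and using the fact that $\omega$ is a probability distribution.

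For the base case $t=0$, the word $W$ is empty, $\lambda_W = 1$, the set $\SETOF{W[\sigma]}$ contains the single zero-step walk consisting of the state $\sigma$ alone, and $p(\tau) = \omegainit(\sigma)$. By the definition of $\gammainit = \max_\sigma \omegainit(\sigma)/\omega(\sigma)$ we have $\omegainit(\sigma) \le \gammainit \cdot \omega(\sigma)$, which is exactly the claim.

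For the inductive step, write $W = W' w_t$ and decompose each walk $\tau \in \SETOF{W[\sigma]}$ according to its penultimate state $\sigma_t$. A walk $\tau$ ending in the transition $\sigma_t \R{w_t} \sigma$ factors as a walk $\tau' \in \SETOF{W'[\sigma_t]}$ followed by this last resampling step, with $p(\tau) = p(\tau') \cdot \rho(\sigma \mid w_t, \sigma_t)$. The range of valid $\sigma_t$'s is exactly $\{\sigma_t \in w_t : \sigma \in A(w_t, \sigma_t)\}$ (we need $w_t \in F_{\sigma_t}$, i.e., $\sigma_t \in w_t$, and $\sigma \in A(w_t, \sigma_t)$). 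Thus
\begin{align*}
\sum_{\tau\in\SETOF{W[\sigma]}}\!\!\! p(\tau)
&= \sum_{\sigma_t\in w_t:\,\sigma\in A(w_t,\sigma_t)} \rho(\sigma\mid w_t,\sigma_t) \sum_{\tau'\in\SETOF{W'[\sigma_t]}}\!\!\! p(\tau') \\
&\le \gammainit \cdot \lambda_{W'} \cdot \omega(\sigma) \sum_{\sigma_t\in w_t:\,\sigma\in A(w_t,\sigma_t)} \rho(\sigma\mid w_t,\sigma_t)\frac{\omega(\sigma_t)}{\omega(\sigma)} \le \gammainit \cdot \lambda_{W'} \cdot \omega(\sigma) \cdot \lambda_{w_t},
\end{align*}
where the first inequality is the inductive hypothesis (after factoring $\omega(\sigma_t)/\omega(\sigma) \cdot \omega(\sigma)$) and the second is precisely~\eqref{eq:Condition:a} applied with $f = w_t$ and $\sigma' = \sigma$. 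Since $\lambda_W = \lambda_{W'}\lambda_{w_t}$ this yields~\eqref{eq:WbracketsSum:a}.

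For~\eqref{eq:WbracketsSum:b}, partition $\SETOF{W}$ by the terminal state, so $\sum_{\tau \in \SETOF{W}} p(\tau) = \sum_{\sigma \in \Omega} \sum_{\tau \in \SETOF{W[\sigma]}} p(\tau) \le \gammainit \cdot \lambda_W \cdot \sum_{\sigma \in \Omega} \omega(\sigma) = \gammainit \cdot \lambda_W$. I do not anticipate any real obstacle here; the only mild subtlety is correctly identifying the range of the penultimate state so that the inductive calculation lines up exactly with~\eqref{eq:Condition:a}, and noting that $\SETOF{W}$ refers to arbitrary valid walks with the given flaw sequence (independent of the strategy $\Lambda$), so the bound holds uniformly for every strategy.
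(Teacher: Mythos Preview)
Your proposal is correct and follows essentially the same approach as the paper: both prove~\eqref{eq:WbracketsSum:a} by induction on $|W|$, peel off the last flaw, apply the induction hypothesis to the shorter prefix, and then invoke~\eqref{eq:Condition:a} to handle the outer sum over the penultimate state; \eqref{eq:WbracketsSum:b} is obtained in both by summing over the terminal state. The only differences are notational (you write $W=W'w_t$ where the paper writes $W'=Wf$).
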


The following technical result will also be useful.
\begin{lemma}\label{lemma:technical}
Consider a walk  
$\tau\myeq \ldots[\sigma]u_1\ldots u_k\ldots$ 
where $k\ge 1$. 
Suppose at least one of the following holds: \\
(a) $u_k$ is not present in $\sigma$. \\
(b) $u_1=u_k$. \\
Then there exists index $i\in[k-1]$ such that $u_i\sim u_k$.
\end{lemma}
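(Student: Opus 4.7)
The plan is to prove the contrapositive: assuming that $u_i \nsim u_k$ for every $i \in [k-1]$, I will show that $u_k$ must already be present at $\sigma$ and must be distinct from each $u_i$, simultaneously refuting both hypotheses (a) and (b). The only tool needed is Definition~\ref{def:causality}, applied iteratively backwards along the walk.

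First I would label the intermediate states: write the relevant portion of $\tau$ as $\sigma = \sigma^{(0)} \R{u_1} \sigma^{(1)} \R{u_2} \cdots \R{u_k} \sigma^{(k)}$, which makes $u_k \in F_{\sigma^{(k-1)}}$ automatic from walk validity. Then I would apply the containment $F_{\sigma^{(j)}} \subseteq (F_{\sigma^{(j-1)}} - \{u_j\}) \cup \Gamma(u_j)$ from Definition~\ref{def:causality} one step at a time. Because $\sim$ is symmetric, the assumption $u_j \nsim u_k$ translates into $u_k \notin \Gamma(u_j)$; hence if $u_k \in F_{\sigma^{(j)}}$, then in fact $u_k \in F_{\sigma^{(j-1)}}$ and $u_k \ne u_j$. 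Downward induction on $j$ from $k-1$ to $1$ then produces both $u_k \in F_{\sigma^{(0)}} = F_\sigma$ and $u_k \ne u_j$ for every $j \in [k-1]$.

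The first of these contradicts hypothesis (a) directly, and the second (taking $j = 1$) contradicts hypothesis (b); note that the case $k = 1$ of (b) is vacuous anyway, since $[k-1] = \varnothing$. So at least one $u_i$ with $i \in [k-1]$ must after all be a neighbor of $u_k$, which is the lemma.

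I do not foresee a substantive obstacle; the only thing to watch is that the step "$u_j \nsim u_k \Rightarrow u_k \notin \Gamma(u_j)$" relies on the symmetry of $\sim$, i.e.\ on $\Gamma(u_j) = \{g : u_j \sim g\}$ coinciding with $\{g : g \sim u_j\}$. This is precisely why the paper insists on an undirected potential causality graph, and it is what makes the backward propagation clean.
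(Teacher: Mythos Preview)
Your proof is correct and uses the same core idea as the paper's—iterating Definition~\ref{def:causality} along the walk to track the presence of $u_k$. The only difference is presentational: the paper treats (a) and (b) separately by a direct argument (for (a), locating the first step at which $u_k$ appears; for (b), a short case split on whether $u_k$ survives the first resampling, then invoking (a)), whereas you run a single backward induction via the contrapositive that dispatches both cases at once.
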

\begin{proof} We will assume that $\tau=\ldots\sigma_1\RA{u_1}\sigma_{2}\RA{u_{2}}\ldots\RA{u_{k-1}}\sigma_k\RA{u_k}\sigma_{k+1}\ldots$ 
where $\sigma_1=\sigma$.

\noindent{\bf (a)~~} 
Flaw $u_k$ is present in $\sigma_k$ (since $\sigma_k\stackrel{u_k}\rightarrow\sigma_{k+1}$ is a walk)
but not in $\sigma_1$.
Thus, there must exist index $i\in[k-1]$ such that $u_k$ is present in $\sigma_{i+1}$ but not in $\sigma_i$.
We know that $\sigma_i\stackrel{u_i}\rightarrow\sigma_{i+1}$ is a valid walk. Thus, addressing $u_i$ has caused $u_k$ to appear, and therefore $u_i\sim u_k$.

\noindent{\bf (b)~~} Assume that $u_k$ is present in $\sigma_1$ (otherwise the claim holds by (a)).
If $u_k$ is present in $\sigma_{2}$ then $u_k\sim u_k$ (since addressing $u_1=u_k$ at state $\sigma_1$
did not eliminate flaw $u_k$). Otherwise, if $u_k$ is not present in $\sigma_{2}$, we can apply part (a)
and conclude that $u_i\sim u_k$ for some $i\in[2,k-1]$.
\end{proof}

\subsection{Stable walks and stable sequences} \label{sec:stable:definitions}
As shown in~\cite{HarveyVondrak15}, if $\Lambda$ is a $\pi$-stable strategy then
walks $\tau$ that it produces have
a special structure: the word $W$ corresponding to $\tau$ can be uniquely
described by a {\em stable sequence}.
This section gives all necessary definitions.

%
\begin{definition}\label{def:StableSequence}
A sequence of sets $\varphi=(I_1,\ldots,I_s)$ with $s\ge 1$ is called {\em stable}
if $I_r\in\Ind(F)$ for each $r\in[s]$ and $I_{r+1}\subseteq\Gamma^+(I_r)$ for each $r\in[s-1]$.
\end{definition}
\begin{definition}\label{def:Stable}
A word $W=w_1\ldots w_t$ is called {\em stable} if it
 can be partitioned into non-empty words as $W=W_1\ldots W_s$
such that flaws in each word $W_r$ are distinct,
and the sequence $(I_1,\ldots,I_s)$ is stable where $I_r$ is the set of flaws in $W_r$ (for $r\in[s]$).
If in addition each word  
 $W_r=w_i\ldots w_j$ satisfies $w_i\prec_\pi \ldots\prec_\pi w_j$
then $W$ is called {\em $\pi$-stable}.

A walk $\tau\myeq W$ is called stable ($\pi$-stable) if the word $W$ is stable ($\pi$-stable).
\end{definition}

It can be seen that for a stable word the partitioning in Definition~\ref{def:Stable}  is unique, and can be obtained by the following algorithm.
Start with one segment containing $w_1$, and then for $i=2,\ldots,t$ do the following: if there exists flaw $w_k$ in
the last segment with $w_k\cong w_i$ then start a new segment containing $w_i$, otherwise add $w_i$ to the last segment.
(If this algorithm is applied to an arbitrary word $W$ then it may fail to produce a stable sequence
since in the latter case, when $w_i$ is added to the last segment $I_r$, $w_i$ may not belong to $\Gamma^+(I_{r-1})$). 

For a stable word $W$ let $R_W$ be the first set (the ``root'') of the stable sequence $\varphi=(I_1,\ldots,I_s)$
corresponding to $W$,
i.e.\ $R_W=I_1$. (If $W$ is empty then $R_W=\varnothing$). 
Let ${\tt Stab}_\pi$ be the set of $\pi$-stable words $W$ for which there exists a walk $\tau$ with $\tau\myeq W$.
Denote 
\begin{eqnarray*}
{\tt Stab}_\pi(R)&=&\{W\in{\tt Stab}_\pi\::\:R_W=R\}\\
%
{\tt Stab}_\pi(R,t)&=&\{W\in{\tt Stab}_\pi(R)\::\:|W|\ge t\}
\end{eqnarray*}
The following result is proven in Section~\ref{sec:proof:StabCounting} using techniques from~\cite{KolipakaSzegedy,HarveyVondrak15}.

\begin{theorem}\label{th:StabCounting}
Suppose that $(\rho,\sim)$ satisfies either the cluster expansion condition~\eqref{eq:Condition:b}
or the Shearer's condition from Definition~\ref{def:Shearer}. Then
\begin{equation}\label{eq:StabCounting}
\sum_{W\in{\tt Stab}_\pi(R,t)} \lambda_W \le \mu(R)
\cdot \theta^{t} \qquad \forall R \in \Ind(F)
\end{equation}
Recall that $\mu(R)=\prod_{f\in R}\mu_f$ in the case of condition~\eqref{eq:Condition} and $\mu(R)=\frac{q_R(p)}{q_\varnothing(p)}$
in the Shearer's case.
\end{theorem}

For some parts of the proofs we will need to use the reverse of stable walks and sequences.
(This will correspond to a ``backward-looking analysis'' instead of the ``forward-looking analysis'').
The necessarly definitions are given below.

A sequence $\phi=(I_1,\ldots,I_s)$ will be called {\em reversely stable} if its reverse $(I_s,\ldots,I_1)$ is a stable sequence.
A word $W=w_1\ldots w_t$ will be called {\em reversely stable} ({\em reversely $\pi$-stable}) if its reverse $w_t\ldots w_1$
is stable ($\pi$-stable). A {\em reversely stable} and {\em reversely $\pi$-stable} walks $\tau$ are defined in
an analogous way.

For a reversely stable word $W$ let $R^{\tt rev}_W$ be the last set of the reversely stable sequence $\phi=(I_1,\ldots,I_s)$
corresponding to $W$, i.e.\ $R^{\tt rev}_W=I_s$.
Let ${\tt Stab}^{\tt rev}_\pi$ be the set of reversely $\pi$-stable words $W$ for which there exists a walk $\tau$ with $\tau\myeq W$.
For $R \subseteq F$, denote 
\begin{eqnarray*}
{\tt Stab}^{\tt rev}_\pi(R)&=&\{W\in{\tt Stab}^{\tt rev}_\pi\::\:R^{\tt rev}_W=R\}\\
{\tt Stab}^{\tt rev}_\pi(R,t)&=&\{W\in{\tt Stab}^{\tt rev}_\pi(R)\::\:|W|\ge t\}
\end{eqnarray*}

\begin{theorem}\label{th:StabCounting:rev}
Suppose that $(\rho,\sim)$ satisfies either the cluster expansion condition~\eqref{eq:Condition:b}
or the Shearer's condition from Definition~\ref{def:Shearer}. Then
\begin{equation}\label{eq:StabCounting:rev}
\sum_{W\in{\tt Stab}^{\tt rev}_\pi(R,t)} \lambda_W \le \mu(R) 
\cdot \theta^{t} \qquad \forall R \in \Ind(F)
\end{equation}
\end{theorem}

Note that set ${\tt Stab}^{\tt rev}_\pi$ does not necessarily equal the reverse of words from ${\tt Stab}_\pi$
(because of the condition ``there exists walk $\tau$ with $\tau\myeq W$''
present in the definitions of both ${\tt Stab}_\pi$ and ${\tt Stab}^{\tt rev}_\pi$).
Thus, Theorem~\ref{th:StabCounting:rev} does not automatically follow from Theorem~\ref{th:StabCounting}.
Their proofs, however, are very similar (see Section~\ref{sec:proof:StabCounting}).

\subsection{Proof of Theorem~\ref{th:pistable}: a wrap-up}\label{sec:proof:seqa:wrapup}
It is not difficult to show that  a $\pi$-stable strategy produces $\pi$-stable walks (see Proposition~\ref{prop:parallel} below).
The reverse, however, is not necessary true: it
may e.g.\   happen that some flaw $f$ does not appear in a $\pi$-stable walk $\tau$, but is present in all states of $\tau$ and would have been selected by any $\pi$-stable  strategy.

\begin{proposition}\label{prop:parallel} Suppose that strategy $\Lambda$ is implemented as in Algorithm~\ref{alg:parallel}
(with a deterministic choice in line 5). Then any $\tau\in\Bad(t)$ is a stable walk.
If in addition $\Lambda$ is a $\pi$-stable strategy (i.e.\ flaw $f$ in line 5 is chosen as the lowest flaw in $F_\sigma-\Gamma^+(I)$ with respect to
 $\preceq_\pi$)
then any $\tau\in\Bad(t)$ is a $\pi$-stable walk.
\end{proposition}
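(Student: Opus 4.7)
The plan is to take the rounds of Algorithm~\ref{alg:parallel} themselves as the partition $W=W_1\ldots W_s$ in Definition~\ref{def:Stable}, and then verify the three required properties: (i) distinctness of flaws within each $W_r$, (ii) independence of the resulting set $I_r$, and (iii) the inclusion $I_{r+1}\subseteq \Gamma^+(I_r)$. Properties (i) and (ii) are immediate from line~5: at each inner iteration of round $r$ the chosen flaw $f$ lies outside $\Gamma^+(I)$, so in particular $f\notin I$ and $f\nsim g$ for every $g\in I$ added so far in the round.

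The heart of the argument is (iii), and the excerpt already points to the right tool: the shrinking property of $F_\sigma-\Gamma^+(I)$ during a round. Concretely, let $\sigma_r$ denote the state at the start of round $r$ and let $f_1,\ldots,f_{k_r}$ be the flaws picked in round $r$ with intermediate states $\sigma_r=\sigma^{(0)}\R{f_1}\sigma^{(1)}\R{f_2}\ldots\R{f_{k_r}}\sigma^{(k_r)}=\sigma_{r+1}$. I first argue by induction on $j$ that
\begin{equation*}
F_{\sigma^{(j)}}-\Gamma^+(\{f_1,\ldots,f_j\})\;\subseteq\; F_{\sigma^{(j-1)}}-\Gamma^+(\{f_1,\ldots,f_{j-1}\}),
\end{equation*}
using Definition~\ref{def:causality} to conclude $F_{\sigma^{(j)}}\subseteq (F_{\sigma^{(j-1)}}-\{f_j\})\cup\Gamma(f_j)$, together with $\Gamma(f_j)\subseteq\Gamma^+(\{f_1,\ldots,f_j\})$. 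Iterating gives in particular $I_r\subseteq F_{\sigma_r}$ for every $r$. Now round $r$ terminates precisely because $F_{\sigma_{r+1}}-\Gamma^+(I_r)=\varnothing$, i.e.\ $F_{\sigma_{r+1}}\subseteq\Gamma^+(I_r)$; since every flaw $f\in I_{r+1}$ satisfies $f\in F_{\sigma_{r+1}}$ by the inclusion just established (applied to round $r+1$), we obtain $I_{r+1}\subseteq\Gamma^+(I_r)$, which is exactly the condition in Definition~\ref{def:StableSequence}. This yields the first claim, that $\tau$ is stable.

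For the second claim ($\pi$-stable strategy $\Rightarrow$ $\tau$ is $\pi$-stable), I still use the same round decomposition and only need to show that within a round the flaws are picked in strictly increasing $\prec_\pi$ order. Fix $j<i$ in the same round; the shrinking inclusion above gives
\begin{equation*}
f_i\;\in\; F_{\sigma^{(i-1)}}-\Gamma^+(\{f_1,\ldots,f_{i-1}\})\;\subseteq\; F_{\sigma^{(j-1)}}-\Gamma^+(\{f_1,\ldots,f_{j-1}\}),
\end{equation*}
so $f_i$ was a candidate when $f_j$ was picked as the $\prec_\pi$-minimum of that set. Hence $f_j\preceq_\pi f_i$, and distinctness (property (i)) upgrades this to $f_j\prec_\pi f_i$, as required by Definition~\ref{def:Stable}.

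The main (and only non-trivial) obstacle is getting the shrinking inclusion right, since it underlies both claims. Once it is in place, everything else is bookkeeping: the within-round arguments for independence and $\pi$-ordering, and the cross-round argument using the loop's termination condition, are all immediate consequences.
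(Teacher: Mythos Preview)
Your proof is correct and follows essentially the same approach as the paper: partition by rounds, get independence from the selection rule in line~5, and deduce $I_{r+1}\subseteq\Gamma^+(I_r)$ from the termination condition of round~$r$ combined with $I_{r+1}\subseteq F_{\sigma_{r+1}}$. The only organizational difference is that you prove the shrinking inclusion $F_{\sigma^{(j)}}-\Gamma^+(\{f_1,\ldots,f_j\})\subseteq F_{\sigma^{(j-1)}}-\Gamma^+(\{f_1,\ldots,f_{j-1}\})$ directly and use it for both claims, whereas the paper invokes Lemma~\ref{lemma:technical}(a) (its contrapositive) to obtain $I_r\subseteq F_{\sigma_r}$ for the cross-round inclusion and only appeals to the shrinking property for the $\pi$-ordering; the underlying argument is the same.
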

\begin{proof}
Let $s$ be the number of rounds of Algorithm~\ref{alg:parallel} that produced walk $\tau$,
and 
$I_r\subseteq F$ be the set of flaws addressed in round $r\in[s]$,
or equivalently the set $I$ at the end of round $r$
(with a possible exception for $r=s$: $I_r$ may correspond
to the ``intermediate'' set $I$, depending on where walk $\tau$ was ``cut'').
By this  definition, we have $\tau\myeq W_1\ldots W_s$ 
where $W_r$ is a word containing the flaws in $I_r$ in some order (with $|W_r|=|I_r|$).
We will prove that $(I_1,\ldots,I_s)$ is a stable sequence;
this will imply the first claim of the proposition.

The independence of each set $I_r$ follows directly from the construction.
Consider $r\in[2,s]$, and let $\tau\myeq\ldots W_{r-1}[\sigma] W_r\ldots$.
We need to show that for each flaw $f$ present in $W_r$ (i.e.\ $f\in I_r$) we have $f\in\Gamma^+(I_{r-1})$.
Suppose it is not the case.
Lemma~\ref{lemma:technical}(a) gives that $f$ is present in $\sigma$.
Therefore, set $F_\sigma-\Gamma^+(I_{r-1})$
is non-empty (it contains $\sigma$).
But then round $r-1$ would not have terminated at the state $\sigma$ - a contradiction.


It remains to consider the case when flaw $f$ in line 5 is chosen as the lowest flaw in $F_\sigma-\Gamma^+(I)$ with respect to
 $\preceq_\pi$. Consider round $r$, and let $W_r=w_i\ldots w_j$.
Definition~\ref{def:causality} and an induction argument
show  that during this round set $F_\sigma-\Gamma^+(I)$ in line 5 shrinks
from iteration to iteration. This implies that $w_i\prec_\pi\ldots\prec_\pi w_j$, as desired.
\end{proof}

We also need the following observation.
\begin{lemma}\label{lemma:StableWalkFirstSet}
If $\tau\myeq W$ is a $\pi$-stable walk starting at state $\sigma_1$
then $R_{W}\in\Ind(F_{\sigma_1})$.
\end{lemma}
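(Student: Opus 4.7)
The plan is to peel off the definition of $R_W$, reduce the claim to showing that every flaw in the first segment $W_1$ of the canonical partition is present at the initial state, and then exploit Lemma~\ref{lemma:technical}(a) together with the independence of $I_1$.

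More concretely, let $W = W_1 \ldots W_s$ be the canonical partition of the stable word $W$ into segments, so that $R_W = I_1$ is the set of flaws appearing in $W_1$. By Definition~\ref{def:StableSequence} we already know $I_1 \in \Ind(F)$, so it suffices to verify that $I_1 \subseteq F_{\sigma_1}$. Writing $W_1 = w_1 \ldots w_m$, note that $w_1 \in F_{\sigma_1}$ simply because $\sigma_1 \R{w_1} \sigma_2$ is a valid walk. I would then fix an arbitrary $k \in [2,m]$ and argue by contradiction: suppose $w_k$ is not present at $\sigma_1$. Applying Lemma~\ref{lemma:technical}(a) to the prefix subwalk $\sigma_1 \R{w_1} \sigma_2 \ldots \R{w_k} \sigma_{k+1}$ yields some index $i \in [k-1]$ with $w_i \sim w_k$.

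Here is where the properties of the partition close the argument: since $w_i$ and $w_k$ both lie in the first segment $W_1$, and the flaws within any segment are distinct by Definition~\ref{def:Stable}, we have $w_i \ne w_k$ with $w_i, w_k \in I_1$. But then $w_i \sim w_k$ with $w_i, w_k$ distinct contradicts $I_1 \in \Ind(F)$. Hence $w_k \in F_{\sigma_1}$ for every $k \in [m]$, giving $R_W = I_1 \subseteq F_{\sigma_1}$ and therefore $R_W \in \Ind(F_{\sigma_1})$.

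There is no real obstacle here; the only minor care is to apply Lemma~\ref{lemma:technical}(a) to the correct subwalk (the one starting at $\sigma_1$ and ending just after $w_k$ is addressed) so that the ``$\sigma$'' in its statement is $\sigma_1$. I would also note in passing that the hypothesis ``$\pi$-stable'' is stronger than needed: the same proof works verbatim for any stable walk, which will be convenient elsewhere in the paper.
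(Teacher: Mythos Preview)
Your proposal is correct and follows essentially the same approach as the paper: both use the independence of $I_1$ together with Lemma~\ref{lemma:technical}(a) to conclude that each flaw in the first segment is present at $\sigma_1$. Your observation that only stability (not $\pi$-stability) is used is also accurate and matches the paper's actual argument.
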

\begin{proof}
By definition of a stable word, set $R_{W}$ is independent, and corresponds to some prefix $w_1\ldots w_k$ 
of the word $W$.
It remains to show that for each $i\in [k]$ we have $w_i\in F_{\sigma_1}$. 
This follows from Lemma~\ref{lemma:technical}(a) and the condition that 
 $w_j\nsim w_i$ for all  $j\in[i-1]$.
\end{proof}
We now have all ingredients to prove Theorem~\ref{th:pistable}.
Consider Algorithm~\ref{alg} with a $\pi$-stable strategy.
Then each walk $\tau\myeq W$ from $\Bad(t)$ is $\pi$-stable, with $W\in{\tt Stab}_\pi(t)$.
By the definition of $\Bad(t)$ and Lemma~\ref{lemma:StableWalkFirstSet}
we also know that $R_{W}\in \Indinit =\bigcup_{\sigma\in{\tt supp}(\omegainit)}\Ind(F_{\sigma})$.
Therefore,
\begin{eqnarray*}
Pr[\mbox{\#steps}\ge t]
&=& \sum_{\tau\in\Bad(t)} p(\tau)
\;\le\; \sum_{R\in\Indinit}\sum_{W\in {\tt Stab}_\pi(R,t)} \sum_{\tau\in\SETOF{W}} p(\tau) \\
&\stackrel{\mbox{\tiny(a)}}\le& \sum_{R\in\Indinit}\sum_{W\in {\tt Stab}_\pi(R,t)} \gammainit\cdot\lambda_W
\;\stackrel{\mbox{\tiny(b)}}\le \; \gammainit\cdot\sum_{R\in\Indinit} \mu(R)\cdot \theta^t 
\; = \; \theta^{t-T}
\end{eqnarray*}
where (a) holds by Lemma~\ref{lemma:WbracketsSum}, (b) holds by Theorem~\ref{th:StabCounting}, and $T$ is given by the expression in~\eqref{eq:Tseq}:
\begin{eqnarray*}
T&=&
\frac{\mathlarger 1}{\mathlarger{\log \theta^{-1}}} \left(\log{ \gammainit}
    + \log  \mathlarger{\sum\limits_{R\in\Indinit}}\mu(R)\right) 
\end{eqnarray*}



\section{Commutativity and swapping mappings}\label{sec:PROOF:commutative}
From now on we assume that $(\rho,\sim)$ is weakly commutative.
Therefore, for any walk  $\tau\!=\!\shortldots\sigma_1\RA{f}\sigma_2\RA{g}\sigma_3\shortldots$ with $f\not\cong g$
there exists another walk $\tau'\!=\!\shortldots\sigma_1\RA{g}\sigma'_2\RA{f}\sigma_3\shortldots$ obtained
from $\tau$ by applying the ${\tt SWAP}$ operator to the subwalk $\sigma_1\RA{f}\sigma_2\RA{g}\sigma_3$.
Such operation will be called a {\em valid swap} applied to $\tau$.
A mapping $\Phi$ on a set of walks that works by applying some sequence of valid swaps will be called a {\em swapping mapping}.
Note that if $\tau'=\Phi(\tau)$ then the first and the last states of $\tau'$ coincide with that of $\tau$,
and $\lambda_{W'}=\lambda_W$ where $\tau\myeq W$, $\tau'\myeq W'$.
Furthermore, if $(\rho,\sim)$ is strongly commutative then
 $p(\tau')=p(\tau)$.

We now deal with  the case when $\Lambda$ is an arbitrary deterministic strategy,
and so walks $\tau\in\Bad(t)$ are not necessarily $\pi$-stable. Our approach will be
to construct a {\em bijective} swapping mapping $\Phi$ that sends walks $\tau\in\Bad(t)$
to some canonical walks, namely either to $\pi$-stable walks (which will work only in the atomic case)
or to reversely $\pi$-stable walks (which will work in the general case).
Theorems~\ref{th:Swapping}(a), \ref{th:Swapping}(b)
and \ref{th:ParSwapping:f} below give three ways to construct such mappings;
they will be used for the proofs of Theorems~\ref{th:seq}(a), \ref{th:seq}(b) and \ref{th:parallel}, respectively.

We will need a few definitions first. 
A {\em generalized walk} is a formal finite sequence $\tau=[\sigma_1]w_1[\sigma_2]w_2\ldots $
with $w_i\in F_{\sigma_i}$ and $\sigma_{i+1}\in A(w_i,\sigma_i)$ for all $i$.
Note that $\tau$ can either end with a state ($\tau=\ldots [\sigma_{t}]$),
or end with a flaw ($\tau=\ldots [\sigma_t] w_t$), or be empty ($\tau=\epsilon$).
In the first case $\tau$ is a usual walk. To indicate this case, we will write $\tau=\ldots[\Omega]$.
We emphasize that by a ``walk'' we always mean a sequence of the form $\tau=\ldots[\Omega]$,
unless we explicitly use the word ``generalized''.
For two generalized walks $\tau,\tau'$ their largest common prefix is denoted as $\tau\wedge \tau'$ 
(it is itself a generalized walk).
\begin{definition}
A set of walks $\calX$ is called {\em valid} if $\tau\wedge \tau'\ne\ldots [\Omega]$ for all distinct $\tau,\tau'\in\calX$.
\end{definition}
It can be seen that $\calX$ is valid if two conditions hold:
(i) there exists a deterministic strategy $\tilde\Lambda$ in Algorithm~\ref{alg}
such that all walks in $\calX$ follow $\tilde\Lambda$, and
(ii) for any $\tau,\tau'\in \calX$, walk $\tau$ is not a proper prefix of $\tau'$.
In particular, set $\Bad(t)$ is valid.



\begin{theorem}\label{th:Swapping}  Suppose that $(\rho,\sim)$ is  weakly commutative, and $\calX$
is a valid set of walks. \\
(a) Suppose in addition that $(\rho,\sim)$ is atomic. Then there exists a set of $\pi$-stable walks~$\calX_\pi$ and a swapping mapping $\Phi:\calX\rightarrow \calX_\pi$ which is a bijection. \\
(b) In a general case there exists a set of  reversely $\pi$-stable walks $\calX_\pi$ and a swapping mapping $\Phi:\calX\rightarrow \calX_\pi$ which is a bijection.
\end{theorem}

A word $W$ is called a prefix of $\tau$ if $\tau\myeq WU$ for some word $U$.
For a walk $\tau$ containing flaw $f$ we define word ${\tt PREFIX}^f(\tau)$ as the longest prefix of $\tau$ that ends with $f$.
Thus, we have $\tau\myeq {\tt PREFIX}^f(\tau) U$ where
 ${\tt PREFIX}^f(\tau)=\ldots f$ and  word $U$ does not contain $f$.

\begin{theorem}\label{th:ParSwapping:f}
Suppose that $(\rho,\sim)$ weakly commutative, and $\calX$ is a valid set of walks containing some fixed flaw $f\in F$.
There exists a set of walks~$\calX_\pi$ and a swapping mapping $\Phi:\calX\rightarrow \calX_\pi$ which is a bijection
such that \\
(i) for any $\tau\in \calX_\pi$ the word $W={\tt PREFIX}^f(\tau)$ is reversely $\pi$-stable with $R^{\tt rev}_W=\{f\}$; \\
(ii) for any word $W$ the set  $\mbox{$\{\tau\in \calX_\pi\:|\:{\tt PREFIX}^f(\tau)=W\}$}$ is valid.
\end{theorem}

We prove Theorems~\ref{th:Swapping}-\ref{th:ParSwapping:f} in Sections~\ref{sec:proof:th:Lambda} and~\ref{sec:proof:th:ParSwapping},
but first we show how they imply Theorems~\ref{th:seq} and~\ref{th:parallel}.


\subsection{Proof of Theorem~\ref{th:seq}(a) (sequential algorithm in the atomic case)}\label{sec:proof:atomic}
The assumption that  $(\rho,\sim)$ is atomic gives the following observation.
\begin{proposition}[\cite{Achlioptas}]\label{prop:AtomicReconstruction}
Walk $\tau=\sigma_1\RA{w_1}\sigma_2\ldots\sigma_t\RA{w_t}\sigma_{t+1}$ can be uniquely reconstructed from the sequence of flaws $w_1\ldots w_t$ and the final state $\sigma_{t+1}$.
\end{proposition}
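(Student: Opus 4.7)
The plan is a direct backward induction on the position in the walk, driven entirely by the atomicity hypothesis. Given the terminal data $\sigma_{t+1}$ and the flaw sequence $w_1 \ldots w_t$, I would recover the states $\sigma_t, \sigma_{t-1}, \ldots, \sigma_1$ one at a time, starting from the right end.

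For the base case, consider the last transition $\sigma_t \stackrel{w_t}{\rightarrow} \sigma_{t+1}$. By Definition~\ref{def:atomic}, for the flaw $w_t$ and the state $\sigma_{t+1}$ there exists \emph{at most one} state $\sigma \in \Omega$ with $\sigma \stackrel{w_t}{\rightarrow} \sigma_{t+1}$. Since we are told that $\tau$ is a walk, at least one such state exists (namely the actual $\sigma_t$), so $\sigma_t$ is determined uniquely by the pair $(w_t, \sigma_{t+1})$. For the inductive step, suppose $\sigma_{i+1}$ has already been reconstructed from $(w_{i+1}, \ldots, w_t, \sigma_{t+1})$. Then the transition $\sigma_i \stackrel{w_i}{\rightarrow} \sigma_{i+1}$ together with atomicity applied to $(w_i, \sigma_{i+1})$ pins down $\sigma_i$ uniquely.

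Iterating all the way down to $i=1$ reconstructs every intermediate state, and hence the entire walk $\tau$, from the flaw sequence and the final state. There is essentially no obstacle here: the argument is an unrolled application of the defining property of atomicity, and no further structure of $\rho$ or of $(F,\sim)$ is needed. The only thing to double-check in the write-up is that atomicity is invoked in the correct direction (uniqueness of the \emph{source} given the flaw and the \emph{destination}), which matches Definition~\ref{def:atomic} verbatim.
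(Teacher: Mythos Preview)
Your proposal is correct and matches the paper's proof essentially line for line: the paper also argues that by atomicity each $\sigma_i$ is uniquely determined by $w_i$ and $\sigma_{i+1}$, and then applies this for $i=t,t-1,\ldots,1$. Your write-up is just a slightly more detailed version of the same backward-induction argument.
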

\begin{proof}
By atomicity, state $\sigma_i$ can be uniquely reconstructed from the flaw $w_i$ and the state $\sigma_{i+1}$.
Applying this argument for $i=t,t-1,\ldots,1$ gives the claim.
\end{proof}
The proposition allows us to write walks more compactly as $\tau=w_1\ldots w_t[\sigma_{t+1}]$.
Also, Lemma~\ref{lemma:WbracketsSum} gives for a walk $\tau=W[\sigma_{t+1}]$ that
\begin{eqnarray}
p(\tau)&\le& \gammainit \cdot \lambda_W \cdot \omega(\sigma_{t+1}) \label{eq:WbracketsSum:atomic}
\end{eqnarray}

Let us apply Theorem~\ref{th:Swapping}(a) with $\calX=\Bad(t)$. Denoting $\Bad_\pi(t)=\calX_\pi$, we can write
\begin{eqnarray*}
Pr[\mbox{\#steps}\ge t]
&=& \sum_{\tau\in\Bad(t)} p(\tau)
\;\stackrel{\mbox{\tiny(a)}}\le\; \sum_{\tau=W[\sigma]\in\Bad(t)} \gammainit\cdot \lambda_W \cdot \omega(\sigma) \\
&\stackrel{\mbox{\tiny(b)}}=& \sum_{\tau=W[\sigma]\in\Bad_\pi(t)} \gammainit\cdot \lambda_W \cdot \omega(\sigma) \\
&\le& \gammainit\cdot\sum_{R\in\Indinit}\sum_{W\in {\tt Stab}_\pi(R,t)} \sum_{\sigma\in\Omega}  \lambda_W \cdot \omega(\sigma) \\
&\le& \gammainit\cdot\sum_{R\in\Indinit} \sum_{W\in {\tt Stab}_\pi(R,t)}   \lambda_W  
\;\le \; \gammainit\cdot\sum_{R\in\Indinit} \mu(R)\cdot \theta^t 
\; = \; \theta^{t-T}
\end{eqnarray*}
where in (a) we used eq.~\eqref{eq:WbracketsSum:atomic}, in (b) we use bijectiveness of mapping $\Phi$, 
and the rest is similar to the derivation in Section~\ref{sec:proof:seqa:wrapup}.



\subsection{Proof of Theorem~\ref{th:seq}(b) (sequential algorithm with strong commutativity)}

In this case we have 
 $\Indinit=\Ind(F)$.
Let us apply Theorem~\ref{th:Swapping}(b) with $\calX=\Bad(t)$. Denoting $\Bad_\pi(t)=\calX_\pi$, we can write
\begin{eqnarray*}
Pr[\mbox{\#steps}\ge t]
&=& \sum_{\tau\in\Bad(t)} p(\tau)
\;\stackrel{\mbox{\tiny(a)}}= \sum_{\tau\in\Bad_\pi(t)} p(\tau)  
\;\le\; \sum_{R\in\Ind(F)}\sum_{W\in {\tt Stab}^{\tt rev}_\pi(R,t)} \sum_{\tau\in\SETOF{W}} p(\tau) \\
&\le& \sum_{R\in\Ind(F)}\sum_{W\in {\tt Stab}^{\tt rev}_\pi(R,t)} \gammainit\cdot\lambda_W
\;\le \; \gammainit\cdot\sum_{R\in\Ind(F)} \mu(R)\cdot \theta^t 
\; = \; \theta^{t-T}
\end{eqnarray*}
where in (a) we used bijectiveness of mapping $\Phi$
and strong commutativity of $(\rho,\sim)$, and the rest is similar to the derivation in Section~\ref{sec:proof:seqa:wrapup}.

\subsection{Proof of Theorem~\ref{th:parallel} (parallel version with strong commutativity)}
We now  analyze Algorithm~\ref{alg:parallel}
with a deterministic choice of flaw $f$ in line 5.
Equivalently, this can be viewed as running Algorithm~\ref{alg} with
some deterministic strategy. 
We will need the following result.


%
\begin{lemma}\label{lemma:calX}
Consider a word $W$ and a valid set of walks $\calX$ such that
$W$ is a prefix of every walk in $\calX$.
Then
\begin{equation}\label{eq:calX}
\sum_{\tau\in\calX} p(\tau) \le  \gammainit\cdot\lambda_W 
\end{equation}
\end{lemma}
\begin{proof}
We use induction on $\sum_{\tau\in\calX}(|\tau|-|W|)$. The base case $\sum_{\tau\in\calX}(|\tau|-|W|)=0$ 
is straightforward:
we then have $\calX\subseteq\SETOF{W}$, and so the claim follows from Lemma~\ref{lemma:WbracketsSum}.
Consider a valid set $\calX$ with $\sum_{\tau\in\calX}(|\tau|-|W|)\ge 1$.
Let $\hat\tau$ be a longest walk in $\calX$, then $|\hat\tau|\ge |W|+1$.
Let $\hat\tau^-$ be the proper prefix of $\hat\tau$ of length $|\hat\tau|-1$. 
We have $\hat\tau^-\notin \calX$ since $\calX$ is a valid set.
Define set $\calY$ as follows: $\calY=\{\tau\in\calX\:|\:\mbox{$\hat\tau^-$ is a proper prefix of $\tau$}\}$.
By the choice of $\hat\tau$ we get $|\tau|=|\hat\tau|$ for all $\tau\in\calY$,
and so we must have $\tau=\hat\tau^-\RA{w}\sigma$ for some $w\in F$ and $\sigma\in\Omega$.
Since set $\calY\subseteq\calX$ is valid, the flaw $w$
in the expression $\tau=\hat\tau^-\RA{w}\sigma$ must be the same for all $\tau\in\calY$.
Thus, $\calY=\{\hat\tau^-\RA{w}\sigma\:|\:\sigma\in Y\}$ for some set of flaws $Y\subseteq F$.
In fact, we must have $Y\subseteq A(w,\hat\sigma)$ where $\hat\sigma$ is the final state of~$\hat\tau^-$.

Define $\calX^-=(\calX-\calY)\cup\{\hat\tau^-\}$. We have
$$
  \sum_{\tau\in\calX} p(\tau) - \sum_{\tau\in\calX^-} p(\tau) 
= \left(\sum_{\tau\in\calY} p(\tau)\right) - p(\hat\tau^-)
= p(\hat\tau^-) \cdot \left[ \left(\sum_{\sigma\in Y} \rho(\sigma|w,\hat \sigma)\right) -1\right] 
\le 0
$$
It is straightforward to check that set $\calX^-$ is valid, and $W$ is a prefix of every walk in $\calX^-$.
Using the induction hypothesis for $\calX^-$ and the inequality above gives the claim for $\calX$.
\end{proof}

Consider executions of Algorithm~\ref{alg:parallel} consisting of at least $s$ rounds.
For each such execution let $\tau$ be the walk containing flaws addressed in the first $s-1$ rounds
and the first flaw addressed in round~$s$.
Let $\BadPar(s)$ be the set of such walks $\tau$.

We say that a word $U=u_1\ldots u_s$ is a {\em chain} of a walk $\tau\myeq w_1\ldots w_t$
if $U$ is a subsequence of the sequence $w_1\ldots w_t$
and $u_i\cong u_{i+1}$ for $i\in[s-1]$.
\begin{proposition}\label{prop:NASJFJSANFA}
(a) For each $\tau\in\BadPar(s)$ the length of a longest chain in $\tau$ equals $s$.
(b) Set $\BadPar(s)$ is valid. 
\end{proposition}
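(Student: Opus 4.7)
The plan is to handle the two parts of Proposition~\ref{prop:NASJFJSANFA} separately, with most of the work going into the lower bound of (a).

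For part (a), I would establish matching upper and lower bounds. \emph{Upper bound.} Each round $r$ of Algorithm~\ref{alg:parallel} contributes an independent set $I_r$ of flaws to $\tau$ (with $|I_s|=1$ since $\tau$ stops at the first flaw of round $s$), and no flaw is ever selected twice in the same round since $f\in\Gamma^+(f)$ once $f$ has entered $I$. Hence any two distinct flaws from the same round are both unequal and non-adjacent, so $u\not\cong u'$ for any two flaws $u,u'$ picked in the same round. This forces consecutive chain elements $u_i\cong u_{i+1}$ to come from different rounds, and so a chain contains at most one flaw per round, giving length at most $s$.

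\emph{Lower bound.} I would construct a chain of length $s$ by backward induction. Set $u_s$ to be the first flaw addressed in round $s$; given $u_r\in I_r$ for some $r\ge 2$, I claim $u_r$ was already present in the state $\sigma_r$ at the start of round $r$. If not, Lemma~\ref{lemma:technical}(a) applied to the subwalk of $\tau$ between $\sigma_r$ and the step at which $u_r$ is selected yields an earlier flaw of round $r$ adjacent to $u_r$; but that earlier flaw is in the partial set $I$ when $u_r$ is chosen, placing $u_r\in\Gamma^+(I)$ and contradicting the selection rule on line~5. So $u_r\in F_{\sigma_r}$. Round $r-1$ terminated precisely once $F_{\sigma_r}\subseteq\Gamma^+(I_{r-1})$, so $u_r\in\Gamma^+(I_{r-1})$, providing some $u_{r-1}\in I_{r-1}$ with $u_{r-1}\cong u_r$. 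Iterating down to $r=1$ produces the desired chain $u_1,\ldots,u_s$, appearing in the correct order inside $\tau$ because all flaws of round $r-1$ precede those of round $r$.

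For part (b), I would verify both properties of a valid set. Property (i) is essentially automatic: Algorithm~\ref{alg:parallel} with a deterministic tie-breaking rule on line~5 induces a deterministic strategy for Algorithm~\ref{alg}, since the round index, the current set $I$, and the choice of flaw are all determined by the execution history. For property (ii), suppose for contradiction that $\tau,\tau'\in\BadPar(s)$ with $\tau$ a proper prefix of $\tau'$. Then the two executions share identical sampled states through step $|\tau|$, and therefore share an identical round structure up to that point; in particular, step $|\tau|$ is the first flaw of round $s$ in both executions. By the definition of $\BadPar(s)$, $\tau'$ must then also terminate at step $|\tau|$, giving $|\tau'|=|\tau|$ and contradicting $\tau\ne\tau'$.

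The main obstacle I anticipate is the embedded sub-claim in the lower bound of (a), that every flaw selected during a round was already present at the start of that round. This single fact---obtained by juxtaposing Lemma~\ref{lemma:technical}(a) with the $f\notin\Gamma^+(I)$ selection rule---is the crux of the chain construction; once it is in hand, the rest is routine bookkeeping about the round structure.
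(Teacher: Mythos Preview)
Your proposal is correct and essentially matches the paper's proof. For part (a), the paper simply cites Proposition~\ref{prop:parallel} (which establishes that the walk is stable with stable sequence $(I_1,\ldots,I_s)$) together with the definition of a stable walk; your argument unfolds this inline, reproving via Lemma~\ref{lemma:technical}(a) exactly the containment $I_r\subseteq\Gamma^+(I_{r-1})$ that Proposition~\ref{prop:parallel} provides. For part (b), the paper instead uses part (a) to conclude that any proper prefix has longest chain at most $s-1$ and hence lies outside $\BadPar(s)$, whereas you argue directly from the definition that the round structure forces $|\tau'|=|\tau|$; both routes are short and valid, yours being slightly more self-contained.
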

\begin{proof}
Let $I_r$ be the set of flaws addressed in round $r$ (for $r\in[s-1]$),
and let $I_s=\{f\}$ where $f$ is the first flaw addressed in round $s$.
In the proof of Proposition~\ref{prop:parallel} we showed that
$(I_1,\ldots,I_s)$ is a stable sequence
and
$\tau\myeq W_1\ldots W_s$ 
where $W_r$ is a word containing the flaws in $I_r$ in some order (with $|W_r|=|I_r|$).
These facts imply part (a).
Let us prove (b). By construction, all walks in $\BadPar(s)$ follow the same deterministic strategy used in Algorithm~\ref{alg:parallel}.
Now consider a walk $\tau'\in\BadPar(s)$.
By the definition of $\tau'\in\BadPar(s)$, any proper prefix $\tau$ of $\tau'$
corresponds to an execution of Algorithm~\ref{alg:parallel} with 
at most $r-1$ rounds, and so the length of a longest chain in $\tau$ is at most $r-1$.
Thus, $\tau\notin\BadPar(s)$.
\end{proof}

Fix a flaw $f\in F$, and let $\BadPar^f(s)$ be the set of walks $\tau\in\BadPar(s)$ that contain a chain of length $s$
that ends with $f$. 
We now apply Theorem~\ref{th:ParSwapping:f} with the set $\calX=\BadPar^f(s)$.
The set $\calX_\pi$ constructed in the theorem will be denoted as $\BadPar_\pi^f(s)$.
Since every walk $\tau\in \BadPar^f(s)$ has a chain of length $s$ that ends with $f$, so does every walk $\tau\in \BadPar^f_\pi(s)$
(applying a swapping mapping to $\tau$ does not affect the chain).
This means that $|{\tt PREFIX}^f(\tau)|\ge s$ for each $\tau\in \BadPar^f_\pi(s)$.

For a word $W\in{\tt Stab}^{\tt rev}_\pi(\{f\},s)$ let $\SETOF{W|\neg f}$ be the set of walks $\tau\in \BadPar^f_\pi(s)$
of the form $\tau\myeq WU$ where $U$ does not contain $f$.
Note that $W$ ends with $f$, and thus ${\tt PREFIX}^f(\tau)=W$ for any $\tau\in \SETOF{W|\neg f}$.
Theorem~\ref{th:ParSwapping:f}(i) gives that  $\BadPar^f_\pi(s)=\bigcup_{W\in{\tt Stab}^{\tt rev}_\pi(\{f\},s)}\SETOF{W|\neg f}$.
Also, from Theorem~\ref{th:ParSwapping:f}(ii) we conclude that  $\SETOF{W|\neg f}$ is a valid set for any $W\in{\tt Stab}^{\tt rev}_\pi(\{f\},s)$.
We can thus write
\begin{eqnarray*}
\sum_{\tau\in\BadPar^f(s)} p(\tau)
&\stackrel{\mbox{\tiny(a)}}=&  \sum_{\tau\in\BadPar^f_\pi(s)} p(\tau)
\;=\; \sum_{W\in{\tt Stab}^{\tt rev}_\pi(\{f\},s)} \sum_{\tau\in\SETOF{W|\neg f}} p(\tau) \\
&\stackrel{\mbox{\tiny(b)}}\le& \sum_{W\in{\tt Stab}^{\tt rev}_\pi(\{f\},s)} \gammainit\cdot\lambda_W
\;\stackrel{\mbox{\tiny(c)}}\le\; \gammainit\cdot\mu(\{f\})\cdot \theta^s
\end{eqnarray*}
where  in (a) we used bijectiveness of $\Phi$ and strong commutativity of $(\rho,\sim)$, in (b) we used Lemma~\ref{lemma:calX}, and in (c) we used Theorem~\ref{th:StabCounting:rev}.
By Proposition~\ref{prop:NASJFJSANFA}, $\BadPar(s)=\bigcup_{f\in F}\BadPar^f(s)$, therefore
\begin{eqnarray*}
Pr[\mbox{\#rounds}\ge s]
&\le &\sum_{\tau\in\BadPar(s)} p(\tau) \\
&\le &  \sum_{f\in F}\sum_{\tau\in\BadPar^f(s)} p(\tau) 
\;\le\; \sum_{f\in F} \gammainit\cdot\mu(\{f\})\cdot \theta^s
\;=\; \theta^{s-T}
\end{eqnarray*}
where $T$ is given by the expression in~\eqref{eq:Tpar}.


\subsection{Proof of Theorem~\ref{th:Swapping}(a) (swapping mapping for the atomic case)}\label{sec:proof:th:Lambda}
Recall that in the atomic case the walks can be denoted as $\tau=W[\sigma]$,
since $\tau$ can be uniquely reconstructed from the word $W$ and the state $\sigma$
(see Section~\ref{sec:proof:atomic}).

We can extend a ``valid swap'' operation to words in a natural way:
it is a transformation of the form $\ldots fg \ldots\mapsto \ldots gf \ldots$ where $f\not\cong g$.
We will write $W\equiv W'$ if $W'$ can be obtained from $W$ via a sequence of valid swaps.
Clearly, ``$\equiv$'' is an equivalence relation.
It can  be seen that a walk $\tau=W[\sigma]$ can be transformed to a walk $\tau'=W'[\sigma]$
if and only if $W\equiv W'$. In this case we will write $\tau'\equiv\tau$;
again, ``$\equiv$'' is an equivalence relation on the set of walks.

\begin{lemma}
For any word $W$ there exists a sequence of valid swaps that transforms $W$ to a $\pi$-stable word.
\end{lemma}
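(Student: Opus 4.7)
I would prove this by induction on $|W|$. The base case $|W|=0$ is vacuous since the empty word is $\pi$-stable. For the inductive step, write $W = Vw$ where $|V|=|W|-1$. By the inductive hypothesis there is a $\pi$-stable word $V^* = V_1 V_2 \cdots V_s$ with $V \equiv V^*$, hence $W \equiv V^* w$. The remaining task is to integrate $w$ into the stable structure via valid swaps.

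Let $r \in \{0,1,\ldots,s\}$ be the largest index such that the segment $V_r$ contains a flaw $\cong w$ (taking $r=0$ if no segment does). By the choice of $r$, no flaw in $V_{r+1},\ldots,V_s$ is $\cong w$, so each successive swap of $w$ with the flaw on its immediate left is valid. Applying these swaps one after another transforms $V^* w$ into
\[
V_1 \cdots V_r\, w\, V_{r+1} \cdots V_s.
\]

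Now consider two cases. If $r = s$, the resulting word is $V_1 \cdots V_s \{w\}$, a stable partition with $\{w\}$ as a new final segment: $\{w\} \subseteq \Gamma^+(V_s)$ holds because $V_s$ contains a flaw $\cong w$ by the choice of $r=s$, and trivially $\{w\}$ is sorted. If $r < s$, we fold $w$ into $V_{r+1}$. The set $V_{r+1} \cup \{w\}$ is independent: $V_{r+1}$ was independent, and every $v \in V_{r+1}$ satisfies $v \not\cong w$, hence $v \neq w$ and $v \nsim w$. Moreover $V_{r+1} \cup \{w\} \subseteq \Gamma^+(V_r)$: the containment $V_{r+1} \subseteq \Gamma^+(V_r)$ was given, and either $r=0$ (no constraint on $w$) or $V_r$ contains a flaw $\cong w$, giving $w \in \Gamma^+(V_r)$. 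Finally, since $w$ is non-$\cong$ with every flaw in $V_{r+1}$, we can swap $w$ leftward past those flaws of $V_{r+1}$ that are $\pi$-greater than it, placing $w$ at its $\pi$-correct position within the merged segment. The result is a $\pi$-stable word equivalent to $W$.

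The main obstacle is the bookkeeping in the inductive step: one must verify at the same time (i) that every swap used to move $w$ into its final position is a valid swap (i.e.\ each adjacent pair involved is non-$\cong$), and (ii) that after insertion the augmented segment $V_{r+1} \cup \{w\}$ — or the singleton $\{w\}$ appended after $V_s$ — still satisfies the independence and $\Gamma^+$-containment requirements of Definition~\ref{def:StableSequence}, without disturbing the earlier segments $V_1,\ldots,V_r$ or the later ones $V_{r+2},\ldots,V_s$ (which are left untouched). Choosing $r$ as the largest index with a $\cong w$ flaw simultaneously handles both concerns, which is the crux of the argument.
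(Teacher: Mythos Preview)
Your proof is correct and follows essentially the same approach as the paper's: induct on the length, locate the rightmost segment containing a flaw $\cong w$, and swap $w$ into the adjacent segment (or open a new one when $r=s$); the paper merely separates this into first producing a stable word and then sorting each block. One small slip: once $w$ sits to the \emph{left} of $V_{r+1}$ you must swap it \emph{rightward} past the flaws of $V_{r+1}$ that are $\pi$-smaller than it, not leftward past the $\pi$-greater ones.
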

\begin{proof}
It suffices to show that $W$ can be transformed to a stable word via valid swaps.
(Transforming a stable word $W$ to a $\pi$-stable word is straightforward: if
$W=W_1\ldots W_s$ is the partition described in Definition~\ref{def:Stable}
then we simply
need to apply swaps inside each word $W_r$ to ``sort'' it according to $\preceq_\pi$; any such swap will
be valid by the property of $W_r$).

Let $W=\hat{ w}_1\ldots\hat{ w}_t$. We will prove by
induction on $i=1,\ldots,t$ that $W$ can be transformed via valid swaps to a word $W'={ w}_1\ldots{ w}_i\ldots{ w}_t$ such that the prefix ${ w}_1\ldots{ w}_i$ is a stable word.
The base case $i=1$ is trivial. Suppose the claim holds for $i-1$, and let us show it for $i\in[2,t]$.

By the induction hypothesis, $W$ can be transformed to a word \linebreak $W'={ w}_1\ldots{ w}_{i-1}{ f}\ldots{ w}_t$
such that  ${ w}_1\ldots{ w}_{i-1}$ is a stable word.
Let ${ w}_1\ldots{ w}_{i-1}=W_1\ldots W_s$
be the corresponding partition described in Definition~\ref{def:Stable}.
Let $W_r$ be the rightmost word that contains a  flaw ${ g}$ with $f\cong g$.
(If such word doesn't exist then we set $r=0$; thus, $r\in[0,s]$).
If $r=s$ then we can leave the word $W'$ as it is - it satisfies the induction hypothesis for $i$.

Suppose that $r\in[0,s-1]$. Then we repeatedly swap ${ f}$ with the left neighbor, 
stopping when ${ f}$ gets between words $W_r$ and $W_{r+1}$ (or at the first position, if $r=0$).
By the definition of $r$, these are valid swaps. The new word now satisfies the induction hypothesis for $i$
(flaw ${ f}$ will be assigned to word $W_{r+1}$). 
\end{proof}
The lemma means that there exists a swapping mapping $\Phi$ that sends any walk $\tau=W[\sigma]$
to a $\pi$-stable walk $\tau'=W'[\sigma]$. Next, we will show that any such mapping is injective on $\calX$.

\begin{lemma}
Suppose that $\{\tau,\tau'\}$ is a valid set of walks and $\tau\equiv \tau'$. 
Then $\tau=\tau'$.
\end{lemma}
\begin{proof}
We use induction on the length of $\tau$. The base case is trivial: if $|\tau|=|\tau'|=0$
then condition $\tau\equiv \tau'$ implies that $\tau=\tau'=[\sigma]$ for some state $\sigma$.

Suppose that $\tau=fW[\sigma]$.
Condition $\tau\equiv\tau'$ means that $\tau'$ and $\tau$ start at the same state.
Therefore, since $\{\tau,\tau'\}$ is a valid set, the first flaw addressed in $\tau'$ is
the same as in $\tau$. Thus, $\tau'=f W'[\sigma]$.
The set $\{W[\sigma],W'[\sigma]\}$ must be valid (since the set $\{fW[\sigma],fW'[\sigma]\}$ is valid).
We will show below that $W[\sigma]\equiv W'[\sigma]$; this will mean that $W[\sigma]=W'[\sigma]$
by the induction hypothesis, thus giving the desired result.

We need to show that $W\equiv W'$. We know that $fW\equiv fW'$,
therefore there exists a sequence of words $U^{(1)},\ldots,U^{(k)}$ with $U^{(1)}=fW$, $U^{(k)}=fW'$
such that $U^{(i+1)}$ is obtained from $U^{(i)}$ via a single valid swap operation.
Let $V^{(i)}$ be the word obtained from $U^{(i)}$ by moving the first occurence of flaw $f$  in $U^{(i)}$
to the first position. Thus, for each $i\in[k]$ we have $V^{(i)}=fW^{(k)}$ for some word $W^{(k)}$.
It can be seen that either $V^{(i+1)}=V^{(i)}$ (if the valid swap
operation applied to $U^{(i)}$ involved the first occurence of $f$)
or $V^{(i+1)}$ is obtained from $V^{(i)}$ via a valid swap  (otherwise).
Thus, either $W^{(i+1)}=W^{(i)}$ or $W^{(i+1)}$ is obtained from $W^{(i)}$ via a valid swap.
It remains to notice that $W^{(1)}=W$ and $W^{(k)}=W'$.
\end{proof}

We can now prove that $\Phi$ is injective on $\calX$.
Suppose that $\Phi(\tau)=\Phi(\tau')$ for walks $\tau,\tau'\in\calX$.
We have $\tau\equiv \Phi(\tau)=\Phi(\tau')\equiv \tau'$, and so $\tau\equiv \tau'$.
From the previous lemma we obtain that $\tau=\tau'$, which concludes the proof of Theorem~\ref{th:Swapping}(a).


\def\G{{\bf G}}
\def\V{{\bf V}}
\def\E{{\bf E}}
\def\W{{\hat{\bf V}}}
\def\I{{\bf I}}

\def\PA{{$(\ast)$}}
\def\PB{{$(\ast\ast)$}}
\def\PC{{$(\ast\!\ast\!\ast)$}}

\subsection{Proof of Theorems~\ref{th:Swapping}(b) and~\ref{th:ParSwapping:f} (swapping mappings for the non-atomic case)}\label{sec:proof:th:ParSwapping}

Consider a word $W=w_1\ldots w_t$. It will be convenient to alternatively write it
as $W={\bf w}_1\ldots {\bf w}_t$
where ${\bf w}_i=(w_i,n_i)$
and $n_i$ counts from the left which occurrence of the flaw $w_i$ it is: $n_i=|\{j\in[i]\:|\:w_j=w_i\}|\ge 1$.
Note that all elements ${\bf w}_1,\ldots,{\bf w}_t$ are distinct.
Tuple ${\bf w}_i$ will be called a {\em named flaw}. The flaw associated with the named flaw ${\bf f}$ will be denoted without the bold font as $f$,
i.e.\ ${\bf f}=(f,n_f)$. 
We say that a word $W={\bf w}_1\ldots {\bf w}_t$ over named flaws has {\em consistent counts} 
if it is constructed from the unnamed version ${w}_1\ldots {w}_t$ as described above.

%
%
We will denote the element of $F$ fixed in Theorem~\ref{th:ParSwapping:f} as $\hat f$ (rather than $f$). 
In the case of Theorem~\ref{th:Swapping}(b) element $\hat f$ is not defined;
to indicate this fact, we will write $\hat f=\perp$.

For a walk $\tau\myeq {\bf w}_1\ldots{\bf w}_t$ let us define a directed acyclic graph $\G_\tau=(\V_\tau,\E_\tau)$ as follows:
its nodes are $V_\tau=\{{\bf w}_1,\ldots,{\bf w}_t\}$,
and the set $\E_\tau$ contains all edges of the form $({\bf f},{\bf g})$
where $f\cong g$ and ${\bf f}$ occurs in $\tau$ before ${\bf g}$, i.e.\ $\tau\myeq\ldots{\bf f}\ldots{\bf g}\ldots$.

We say that ``walk $\tau$ contains $\hat f$'' if either (i) $\hat f= \perp$, or (ii)
$\hat f \ne \perp$ and  $\tau\myeq\ldots\hat f\ldots$.
For a walk $\tau$ containing $\hat f$ we make the following definitions:
\begin{itemize}
\item If $\hat f=\perp$ (i.e.\ in the case of Theorem~\ref{th:Swapping}(b)), let $\W_\tau=\V_\tau$.
For ${\bf f}\in\W_\tau$ let $d_\tau({\bf f})$ be the length of
the longest path from ${\bf f}$ to a sink of $\G_\tau$ plus one (so that for any sink node ${\bf f}\in\V_\tau$ we have $d_\tau({\bf f})=1$). 
\item If $\hat f\ne \perp$ (i.e.\ in the case of Theorem~\ref{th:ParSwapping:f}), let $\W_\tau$ be the set of flaws ${\bf g}\in \V_\tau$
from which node $\hat{\bf f}$ can be reached in $\G_\tau$,
where $\hat{\bf f}$ is the named flaw corresponding to the rightmost occurence of $\hat f$ in $\tau$.
For ${\bf f}\in\W_\tau$ let $d_\tau({\bf f})$ be the length of
the longest path from ${\bf f}$ to $\hat{\bf f}$ in $\G_\tau$ plus one (so that $d_\tau(\hat{\bf f})=1$). 
\end{itemize}
Furthermore, we define word $\Psi(\tau)$ as follows:
\begin{itemize}
\item 
For an integer $r\ge 1$
 let $\I_r=\{{\bf f}\in\W_\tau\:|\:d_\tau({\bf f})=r\}$,
and let $W_r$ be the word consisting of the named flaws in $\I_r$ 
sorted in the decreasing order (with respect to $\preceq_\pi$). \footnote{It can be seen that we have $f\ne g$ for any distinct ${\bf f},{\bf g}\in\I_r$.
Therefore, ``sorting $\I_r$ in the decreasing order'' in the definition is a valid operation
(where we assume that  ${\bf f}\prec_\pi{\bf g}$ if $f\prec_\pi g$).
}
Then $\Psi(\tau)=W_s\ldots W_1$ where $s=\max \{ d_\tau({\bf f}) \:|\:{{\bf f}\in\W_\tau}\}$.
\end{itemize}
\begin{lemma}
The word $W=\Psi(\tau)$ is reversely $\pi$-stable.
Also, $R^{\tt rev}_W=\{\hat f\}$ if $\hat f\ne \perp$.
\end{lemma}
\begin{proof} 
Let $I_r=\{f\:|\:{\bf f}\in \I_r\}$ be the ``unnamed'' version of the set $\I_r$. 
Consider $r\in[s-1]$.
From the the definition of $\I_r$ and $\I_{r+1}$ we obtain that for every ${\bf f}\in \I_{r+1}$
there exists ${\bf g}\in \I_r$ with $({\bf f},{\bf g})\in E_\tau$ (implying that $f\in\Gamma^+(g)$).
This gives that $I_{r+1}\subseteq\Gamma^+(I_r)$. Also, if $\hat f\in F$ then we have $\I_1=\{\hat{\bf f}\}$ and so $I_1=\{\hat f\}$.
Observing that  $W=W_s\ldots W_1$ we
obtain the desired claims.
\end{proof}

It is straighforward to check that applying valid swaps to $\tau$ 
does not affect graph~$\G_\tau$, set~$\W_\tau$ and  word $\Psi(\tau)$. 
Our goal will be to apply swaps to $\tau$ so that word $\Psi(\tau)$
becomes a prefix of $\tau$. 
%
We will do this by applying swaps to {\em swappable pairs} in $\tau$.
\begin{definition}
Consider a walk $\tau$ containing $\hat f$.
A pair $({\bf f},{\bf g})$ of named flaws is called a {\em swappable pair in $\tau$}
if it can be swapped in $\tau$ (i.e.\  $\tau\myeq\ldots{\bf f}{\bf g}\ldots$ and $f\not\cong g$)
and either \\
(i) $({\bf f},{\bf g})\in (\V_\tau-\W_\tau)\times \W_\tau$, or \\
(ii) $({\bf f},{\bf g})\in \W_\tau \times \W_\tau$ and their order in $\Psi(\tau)$ is different:
$\Psi(\tau)=\shortldots{\bf g}\shortldots{\bf f}\shortldots$.

The position of the rightmost swappable pair in $\tau$ will be denoted as $k(\tau)$,
where the position of $({\bf f},{\bf g})$ in $\tau$ is the number of named
flaws that precede ${\bf g}$ in $\tau$. If $\tau$ does not contain a swappable pair then $k(\tau)=0$.
Thus, $k(\tau)\in[0,|\tau|-1]$.
\end{definition}
It can be seen that the procedure of repeatedly applying swaps to swappable pairs in $\tau$
must terminate. Indeed, swapping a pair of the form (i)
moves a named flaw in $\W_\tau$ to the left, which can happen only a finite number of times.
Similarly, swapping a pair of the form (ii) decreases the number of pairs $\{{\bf f},{\bf g}\}\in \binom{\W_\tau}{2}$
whose relative order in $\tau$ is not consistent with the relative order in $\Psi(\tau)$,
which again can happen only a finite number of times. Note, swaps of the other type
do not affect these arguments.

\begin{lemma}\label{lemma:tauABC}
Consider a walk $\tau\myeq AU$ containing $\hat f$ 
where $A,U$ are some words, and there are no swappable pairs inside $U$.
Then $U=BC$ where sequence $B$
is a subsequence of $\Psi(\tau)$, and $C$ does not contain
named flaws from $\W_\tau$.

In particular, if $|A|=0$ and $\tau\myeq U$ does not contain a swappable pair 
then  $\tau\myeq \Psi(\tau)C$.
\end{lemma}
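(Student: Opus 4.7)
The plan is to establish two structural facts about the word $U$: first, that every named flaw of $\W_\tau$ appearing in $U$ precedes every named flaw of $\V_\tau-\W_\tau$; and second, that the subsequence $B$ of $\W_\tau$-flaws inside $U$ respects the total order induced by ${\tt Stab}_\pi(\tau)$. Together these immediately yield the decomposition $U=BC$ with $B$ a subsequence of ${\tt Stab}_\pi(\tau)$ and $C$ free of $\W_\tau$-flaws.

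For the first fact, I would argue by contradiction. Suppose some adjacent pair $\ldots{\bf f}{\bf g}\ldots$ inside $U$ has ${\bf f}\notin\W_\tau$ and ${\bf g}\in\W_\tau$. If $f\not\cong g$ then $({\bf f},{\bf g})$ is a swappable pair of type (i), contradicting the hypothesis on $U$. Otherwise $f\cong g$, so by the definition of $\E_\tau$ there is an edge $({\bf f},{\bf g})\in\E_\tau$; any directed path witnessing ${\bf g}\in\W_\tau$ (to $\hat{\bf f}$ when $\hat f\in F$, or to a sink when $\hat f=\varnothing$) then extends backwards through ${\bf f}$, forcing ${\bf f}\in\W_\tau$, a contradiction. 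Hence the $\W_\tau$-flaws of $U$ form a contiguous prefix $B$ and the remainder forms a suffix $C$ disjoint from $\W_\tau$.

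For the second fact, I take any two named flaws ${\bf f},{\bf g}$ that are consecutive in $B$; since $B$ is a prefix of $U$, they are also adjacent in $U$ and, by hypothesis, do not form a swappable pair. If $f\cong g$ then $({\bf f},{\bf g})\in\E_\tau$, and extending a longest path out of ${\bf g}$ through this edge gives $d_\tau({\bf f})\ge d_\tau({\bf g})+1$; since ${\tt Stab}_\pi(\tau)=W_s\ldots W_1$ lists higher-level groups first, ${\bf f}$ precedes ${\bf g}$ in ${\tt Stab}_\pi(\tau)$. Otherwise $f\not\cong g$, and since type (ii) swappability is excluded, ${\bf f}$ and ${\bf g}$ must already occur in the same relative order in ${\tt Stab}_\pi(\tau)$. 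Chaining this along consecutive pairs of $B$ shows that $B$ embeds into ${\tt Stab}_\pi(\tau)$ as an order-preserving subsequence.

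For the ``in particular'' clause, when $|A|=0$ the walk $\tau\myeq U$ already contains every named flaw of $\V_\tau$, and in particular every element of $\W_\tau$; thus $B$ exhausts the named flaws listed in ${\tt Stab}_\pi(\tau)$, and combined with the subsequence property just established this forces $B={\tt Stab}_\pi(\tau)$. The only real delicacy is to verify that the two case arguments (type~(i) rules out $f\not\cong g$ at the $\W_\tau$ boundary, and the edge-extension rules out $f\cong g$) jointly cover every way a $\W_\tau$-flaw could follow a non-$\W_\tau$-flaw in $U$; once that is done, everything else is bookkeeping with $d_\tau$ and $\preceq_\pi$.
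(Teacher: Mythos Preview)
Your proposal is correct and follows essentially the same argument as the paper's proof: both show that the $\W_\tau$-flaws in $U$ form a contiguous prefix (by deriving a type~(i) swappable pair from any non-$\W_\tau$ flaw immediately preceding a $\W_\tau$ flaw), and both verify that consecutive elements of this prefix appear in the same relative order as in ${\tt Stab}_\pi(\tau)$ (using that $f\cong g$ forces $d_\tau({\bf f})>d_\tau({\bf g})$, while $f\not\cong g$ would give a type~(ii) swappable pair). The only cosmetic difference is that the paper handles the prefix argument by first observing that ${\bf f}\notin\W_\tau$, ${\bf g}\in\W_\tau$ directly forces $({\bf f},{\bf g})\notin\E_\tau$ and hence $f\not\cong g$, whereas you split into the two cases $f\cong g$ and $f\not\cong g$; the content is identical.
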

\begin{proof}
Let ${\bf u}_1,\ldots,{\bf u}_m$ be the named flaws from $\W_{\tau}$
that occur in $U$ (listed in the order of their appearance in $U$). 
We claim that ${\bf u}_1\ldots{\bf u}_m$ is a prefix of $U$.
Indeed, suppose not, then $U=\ldots {\bf f} {\bf u}_i\ldots$ where ${\bf f}\notin \W_\tau$
and ${\bf u}_i\in \W_\tau$. This means that $({\bf f},{\bf u}_i)\notin E_\tau$, and so $f\not\cong u_i$.
But then $({\bf f},{\bf u}_i)$ is a swappable pair in $U$ - a contradiction.

We obtained a decomposition $\tau\myeq ABC$ where $B={\bf u}_1\ldots{\bf u}_m$.
It remains to show that $B$ is a subsequence of $\Psi(\tau)$. For that it suffices
to prove that for any $i\in[m-1]$ the relative order of ${\bf u}_i$ and ${\bf u}_{i+1}$
in $B$ is the same as in $\Psi(\tau)$, i.e.\ $\Psi(\tau)=\ldots{\bf u}_{i}\ldots{\bf u}_{i+1}\ldots$.
We must have $u_i\cong u_{i+1}$ (otherwise $({\bf u}_i,{\bf u}_{i+1})$ would be a swappable
pair in $\tau$, contradicting the assumption). Therefore, $({\bf u}_{i},{\bf u}_{i+1})\in\E_\tau$,
implying that $d_\tau({\bf u}_{i})>d_\tau({\bf u}_{i+1})$.
Inspecting the definition of $\Psi(\tau)$, we conclude that ${\bf u}_i$ should be to the left of ${\bf u}_{i+1}$ in $\Psi(\tau)$,
which is what we needed to show.
\end{proof}

To summarize, we showed that taking a walk $\tau\in\calX$ and repeatedly applying swaps to swappable pairs
terminates and produces a walk $\tau'$ such that the word $\Psi(\tau')=\Psi(\tau)$ is a prefix of $\tau'$.
One consequence is that $\Psi(\tau)$ must have consistent counts
(since applying valid swaps preserves consistency of counts).
We can also conclude that there exists a swapping mapping $\Phi$ and set $\calX_\pi=\Phi(\calX)$
with the following properties: in the case of Theorem~\ref{th:Swapping}(b)
all walks $\tau\in\calX_\pi$ are reversely $\pi$-stable,
and in the case of Theorem~\ref{th:ParSwapping:f} set $\calX_\pi$ satisfies condition (i).
It remains to show that $\Phi$ can be chosen so that it is injective on $\calX$,
and in the case of Theorem~\ref{th:ParSwapping:f} set $\calX_\pi$ satisfies condition (ii).
For that we need to be careful
with the order in which we apply swaps to swappable pairs. 
We will use the following algorithm. 
First, let $\calX_0=\calX$,
and then for $p=0,1,2,\ldots$ do the following:
\begin{itemize}
\item Let $k=\max_{\tau\in\calX_p}k(\tau)$. If $k=0$ then terminate.
\item For each $\tau\in\calX_p$ do the following: if $k(\tau)=k$
then swap the pair $({\bf f},{\bf g})$ at position $k$ in $\tau$, otherwise leave $\tau$ unchanged. Let $\calX_{p+1}$ be the new set of walks.
\end{itemize}

Note that this algorithm defines a mapping from $\calX_p$ to $\calX_{p+1}$ in a natural way.
Let us fix a word $W={\bf w}_1\ldots {\bf w}_t$ over named flaws with consistent counts, and define $\calX_p[W]=\{\tau\in\calX_p\:|\:\Psi(\tau)=W\}$ for an index $p\ge 0$.
Since applying valid swaps to $\tau$ does not affect $\Psi(\tau)$, the mapping defined by the algorithm above
sends walks in $\calX_p[W]$ to walks in $\calX_{p+1}[W]$.
The remaining claims will follow from the following result (note that the set $\calX_0[W]\subseteq\calX$ is valid by the assumption of the theorems). 

\begin{lemma}
If set $\calX_p[W]$ is valid then so is $\calX_{p+1}[W]$,
and the mapping from $\calX_p[W]$ to $\calX_{p+1}[W]$ defined by the algorithm above is injective.
\end{lemma}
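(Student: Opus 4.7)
The plan is to prove both claims by a case analysis on whether each of the walks $\tau_1,\tau_2\in\calX_p[W]$ has its rightmost swappable pair at the maximal position $k$ or not, tied together by one uniform observation: since $\calX_p[W]$ is valid, there is a common deterministic strategy $\Lambda_p$, so any two walks in $\calX_p[W]$ whose initial segments agree through the state $\sigma_k$ entering step $k$ must select the same named flaw at step $k$. On the other hand, a swap at position $k$ leaves positions $1,\ldots,k-1$ and the state $\sigma_k$ untouched, and replaces a pair of named flaws $({\bf f},{\bf g})$ with $f\not\cong g$ (in particular $f\ne g$) by $({\bf g},{\bf f})$.

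For injectivity, suppose $\tau_1,\tau_2\in\calX_p[W]$ map to the same $\tau'\in\calX_{p+1}[W]$. If both were swapped at $k$, injectivity of ${\tt SWAP}$ (Definition~\ref{def:main}) gives $\tau_1=\tau_2$; if neither was swapped then $\tau_1=\tau'=\tau_2$ directly. The remaining mixed possibility would force $\tau_2={\tt SWAP}_k(\tau_1)$, so $\tau_1,\tau_2$ would share positions $1,\ldots,k-1$ together with $\sigma_k$ but select ${\bf f}$ and ${\bf g}$ respectively at step $k$, contradicting $\Lambda_p$; thus this case cannot arise.

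For validity of $\calX_{p+1}[W]$, I check both defining conditions on images $\tau_1',\tau_2'$ with preimages $\tau_1,\tau_2\in\calX_p[W]$. To check the common-strategy condition, I assume $\tau_1',\tau_2'$ share a prefix of length $\ell$ and split according to $\ell$ and the swap status of each preimage. When both (or neither) of $\tau_1,\tau_2$ were swapped, injectivity of ${\tt SWAP}$ transfers prefix-agreement from images to preimages, and $\Lambda_p$ supplies the shared next flaw. The mixed subcases would again force a step-$k$ flaw mismatch between $\tau_1$ and $\tau_2$ on a shared preimage prefix, contradicting validity of $\calX_p[W]$, and are excluded. For the no-proper-prefix condition, length preservation under ${\tt SWAP}$ turns $|\tau_1'|<|\tau_2'|$ into $|\tau_1|<|\tau_2|$, and in each subcase one either deduces that $\tau_1$ is a proper prefix of $\tau_2$ in $\calX_p[W]$ (contradicting validity there) or extracts the same step-$k$ strategy mismatch.

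The main technical point, recurring across all three sub-claims, is the uniform invocation of the common-strategy property of $\calX_p[W]$ at the state $\sigma_k$ entering step $k$ to eliminate every mixed-swap configuration; with that observation in hand, everything else follows from length preservation and involutivity of ${\tt SWAP}$ together with the validity of $\calX_p[W]$.
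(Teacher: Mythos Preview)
Your injectivity argument is essentially the paper's and is correct. The gap is in the validity argument, precisely at the sub-case where the images first diverge at the swap position itself, i.e.\ when $\tau_1',\tau_2'$ share a prefix of length $\ell=k-1$. Your claim that ``injectivity of {\tt SWAP} transfers prefix-agreement from images to preimages, and $\Lambda_p$ supplies the shared next flaw'' establishes only that the \emph{preimages} agree at index $k$. But at that index the images carry the \emph{other} element of the swapped pair, so agreement of preimages at index $k$ does not yield agreement of images at index $k$. Concretely: if the preimage $\tau_i^\circ$ has the swappable pair $({\bf g},{\bf f}_i)$ at indices $k,k{+}1$, then the images have ${\bf f}_1,{\bf f}_2$ at index $k$; the strategy $\Lambda_p$ forces ${\bf g}$ to be common, but the state after ${\bf g}$ (index $k{+}1$ in the preimage) is a random outcome and need not coincide for $\tau_1^\circ,\tau_2^\circ$, so $\Lambda_p$ alone gives no handle on ${\bf f}_1$ versus ${\bf f}_2$. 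Your ``mixed-case exclusion'' suffers from the same problem: the mismatch you try to extract lives at index $k{+}1$ of the preimages, not at index $k$, and the preimages need not share the state there.

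This is exactly the delicate case in the paper's proof (its third bullet), and it is where the assumption that both walks lie in $\calX_p[W]$ for the \emph{same} $W$ is actually used. The paper invokes Lemma~\ref{lemma:tauABC}: since position $k$ is the rightmost swappable pair in each preimage, the suffix starting at index $k{+}1$ decomposes as $BC$ with $B$ a subsequence of $W={\tt Stab}_\pi(\tau_i^\circ)$ and $C$ disjoint from $\W$. Because $\W_{\tau_1^\circ}=\W_{\tau_2^\circ}$ and the two walks coincide on indices $1,\ldots,k$, the two $B$-words are permutations of each other that are both subsequences of the same $W$; hence $B_1=B_2$, forcing ${\bf f}_1={\bf f}_2$ and showing the second preimage also carries a swappable pair at position $k$. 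Without this structural lemma (or an equivalent replacement), your argument does not close.
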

\begin{proof}
Suppose that the lemma is false. This means that there exist  distinct
walks $\tau,\bar\tau\in \calX_p[W]$ that were transformed
to walks $\eta,\bar\eta \in \calX_{p+1}[W]$, respectively, such
that $\eta\wedge\bar\eta=\ldots[\Omega]$ (violating either injectiveness
if $\eta=\bar\eta$, or validity if $\eta\ne\bar\eta$).
At least one of the walks $\tau,\bar\tau$ must have changed;
assume w.l.o.g.\ that the walk $\tau$ has changed, i.e.\ $\eta\ne\tau$.
We thus have 
$$
\tau\myeq \xi\;{\bf f}\:{\bf g}\:X,\qquad
\eta\myeq \xi\;{\bf g}\:{\bf f}\:X
$$
where $\xi$ is some walk (of length $k-1=k(\tau)-1$), $({\bf f},{\bf g})$ is a swappable pair in $\tau$, and $X$ is some word over named flaws.
Note that $\bar\eta$ either equals $\bar\tau$ or is obtained from $\bar\tau$ by swapping a pair at position $k$ in $\bar\tau$.

We have $\tau\wedge\bar\tau\ne\ldots[\Omega]$ since $\calX_p[W]$
is valid.
If $\tau\wedge\bar\tau$ is a proper prefix of $\xi$ then $\eta\wedge\bar\eta=\tau\wedge\bar\tau$,
 contradicting the assumption that  $\eta\wedge\bar\eta= \ldots[\Omega]$.
Thus, $\xi$ is a prefix of $\tau\wedge\bar\tau$. Condition $\tau\wedge\bar\tau\ne\ldots[\Omega]$ means that
$$
\bar\tau\myeq \xi\;{\bf f}\:Y
$$
for some word $Y$ over named flaws.
By construction, walk $\tau$ does not have swappable pairs inside ${\bf g}X$,
and walk $\bar\tau$ does not have swappable pairs inside $Y$. Lemma~\ref{lemma:tauABC} gives
that ${\bf g}X=BC$ and $Y=\bar B\bar C$ where  \\
(i) $B,\bar B$ are subsequences of $\Psi(\tau)=\Psi(\bar\tau)=W$ (with $B={\bf g}\ldots$), and \\
(ii) $C,\bar C$ do not contain named flaws from $\W$
where we denoted $\W=\W_\tau=\W_{\bar\tau}$ (it is the set of named flaws present in $W$).


We obtained that $\tau\myeq \xi\;{\bf f}\:B C$  and $\bar\tau\myeq \xi\;{\bf f}\:\bar B \bar C$.
The set of named flaws present in $B$ must be the same as the set of named flaws present in $\bar B$
(namely, it is $\W$ minus the set of named flaws present in the sequence $\xi\;{\bf f}$).
Furthermore, $B$ and $\bar B$ are subsequences of $W$, so we must have $B=\bar B$
(recall that all named flaws in $W$ are unique).
Therefore, $\bar\tau\myeq \xi\;{\bf f}\:{\bf g}\ldots$.
This means that $({\bf f},{\bf g})$ is a swappable pair in $\bar\tau$, and  $\bar\eta$ is obtained from $\bar\tau$
by swapping ${\bf f}$ and ${\bf g}$.

To summarize, we showed that walks $\tau,\bar\tau,\eta,\bar\eta$ have the following forms:
\begin{eqnarray*}
\tau = \xi[\sigma_1]{ f}[\sigma_2]{ g}[\sigma_3]\zeta &    \qquad\quad& \eta = \xi[\sigma_1]{g}[\sigma'_2]{f}[\sigma_3]\zeta \\
\bar\tau = \xi[\sigma_1]{f}[\bar\sigma_2]{g}[\bar\sigma_3]\bar\zeta & & \bar\eta = \xi[\sigma_1]{g}[\bar\sigma'_2]{f}[\bar\sigma_3]\bar\zeta
\end{eqnarray*}
where $\sigma_1$ is the last state of walk $\xi$, $\sigma_3$ is the first state of walk $\zeta$, and $\bar\sigma_3$ is the first state of walk $\bar\zeta$.
Condition $\eta\wedge\bar\eta=\ldots[\Omega]$ implies that $\bar\sigma'_2=\sigma'_2$ and $\bar\sigma_3=\sigma_3$.
Since the {\tt SWAP} operation from Definition~\ref{def:main} is injective, we obtain that $\bar\sigma_2=\sigma_2$.
It can now be seen that we cannot have simultaneously $\tau\wedge\bar\tau\ne \ldots[\Omega]$ and $\eta\wedge\bar\eta=\ldots[\Omega]$.
We have obtained a contradiction.
\end{proof}


\section{Examples of commutative resampling oracles for non-Cartesian spaces}\label{sec:construction}
In this section we show that resampling oracles for permutations used in~\cite{Harris:permutations,HarveyVondrak15}
and for perfect matchings in complete graphs used in~\cite{HarveyVondrak15} are strongly commutative.

These oracles have two important properties:
(i) they are atomic;
(ii)  $\rho(\sigma'|f,\sigma)$ are uniform distributions over $A(f,\sigma)$ for any $f\in F$ and $\sigma\in f$.
Distributions $\rho$ satisfying these two properties were studied
by Achlioptas and Iliopoulos~\cite{Achlioptas}, albeit with a different terminology:
they used a directed multigraph $D$ instead of $\rho$. 
This multigraph is defined as follows:
its set of nodes is $\Omega$, and its set of edges is the set of all valid walks $\sigma\RA{f}\sigma'$.
(Each edge of $D$ is labeled by a flaw in $F$). It is {\em atomic}
if any state $\sigma'\in \Omega$ has at most one incoming edge in $D$ labeled by a given flaw  $f\in F$.
Note that one can recover sets $A(f,\sigma)$ from $D$,
since $A(f,\sigma)=\{\sigma'\:|\:\mbox{edge }\sigma\RA{f}\sigma'$ $\mbox{belongs to $D$}\}$.
Therefore, distributions $\rho$ are uniquely defined by $D$, assuming that property (ii) holds.
It can be seen that there is a one-to-one correspondence between atomic multigraphs and distributions $\rho$
satisfying (i) and (ii).

To prove commutativity of the oracles mentioned earlier, we find it easier to take an indirect approach: first, we will
describe a generic route for constructing atomic multigraphs,
then apply it to permutations and matchings and prove weak commutativity.
We will then see that the resulting resampling procedure coincides with that in~\cite{Harris:permutations,HarveyVondrak15}.

The constructed multigraph will satisfy the following property: for each flaw $f$
there exists constant $A_f$ such that 
$|A(f,\sigma)|=A_f$ all $\sigma\in f$.
Thus, $\rho(\sigma'|f,\sigma)=\frac{1}{A_f}$ for $\sigma'\in A(f,\sigma)$, and so weak commutativity will imply strong commutativity.

We start with some general observations.
For an atomic multigraph $D$ let us define a mapping $\psi:F\times \Omega\rightarrow\Omega\cup\{{\perp}\}$ 
that specifies a ``backward step''  for $f\in F$  and $\sigma'\in \Omega$ as follows:
if there exists a (unique) state $\sigma\in \Omega$ such that $\sigma\stackrel{f}\rightarrow\sigma'$ 
then $\psi(f,\sigma')=\sigma$, otherwise $\psi(f,\sigma')={\perp}$. Note that $\psi$   satisfies the following properties:
\begin{itemize}
\item[(I)] If $\psi(f,\sigma')\ne{\perp}$ then  $\psi(f,\sigma')\in f$.
\item[(II)] For any $f\in F$ and $\sigma\in f$ there exists at least one $\sigma'\in\Omega$ with $\psi(f,\sigma')=\sigma$.
\end{itemize}

It is not difficult to see that $D$ can be uniquely reconstructed from $\Omega$, $F$ and $\psi$,
since for each $f\in F$ and $\sigma\in f$ we have $A(f,\sigma)=\{\sigma'\in\Omega\:|\:\psi(f,\sigma')=\sigma\}$.
Furthermore, any triplet $(\Omega,F,\psi)$ specifies a valid atomic multigraph $D$, as long  as $\psi$ satisfies properties (I) and (II)
(and $F$ is some set of non-empty subsets of $\Omega$). Property (II), in particular, is equivalent to the condition that  $A(f,\sigma)$ is non-empty for each $f\in F$ and $\sigma\in f$.

Thus, the problem of constructing the set of actions $A(f,\sigma)$ for a given $f,\sigma$ can be shifted to the problem of constructing a mapping $\psi$.
Of course, after constructing $\psi$ one still needs to show that sampling from $A(f,\sigma)$ can be done efficiently.

We remark that Definition~\ref{def:causality} of the potential causality graph can
also be reformulated in terms of the mapping $\psi$, as stated below (this claim follows directly from definitions).
\begin{proposition}
Undirected graph $(F,\sim)$ is a potential
causality graph for $D$ if
for any $f,g\in F$ with $f\nsim g$ and any $\sigma'\in g$ with $\psi(f,\sigma')\ne{\perp}$
we have $\psi(f,\sigma')\in g$ and $f\ne g$.
\label{prop:psiCausality}
\end{proposition}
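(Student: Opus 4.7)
The plan is to observe that the statement is pure bookkeeping: the condition on $\psi$ given in the proposition is just the $\psi$-language translation of the walk condition in Definition~\ref{def:causality}. So I would first unpack the definition into a pointwise statement about pairs of flaws, then translate it through the atomic dictionary $\sigma\R{f}\sigma'\Longleftrightarrow \psi(f,\sigma')=\sigma$.

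More concretely, the first step is to rewrite Definition~\ref{def:causality} as follows: $(F,\sim)$ is a potential causality graph for $D$ iff for every walk $\sigma\R{f}\sigma'$ and every $g\in F_{\sigma'}$ with $g\nsim f$, we have both $g\in F_\sigma$ and $g\ne f$. The case $g\sim f$ (or $g=f\sim f$) is vacuously handled because then $g\in\Gamma(f)$, so it can be discarded; this is where I would use symmetry of $\sim$ and the convention about whether $f\in\Gamma(f)$ given in the preliminaries.

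The second step is to read each clause through $\psi$. The existence of a valid walk $\sigma\R{f}\sigma'$ is, by atomicity and the construction of $\psi$, equivalent to $\psi(f,\sigma')=\sigma$ (property (I) automatically gives $\sigma\in f$, i.e.\ $f\in F_\sigma$); the condition $g\in F_{\sigma'}$ is just $\sigma'\in g$; and $g\in F_\sigma$ is $\sigma\in g$, i.e.\ $\psi(f,\sigma')\in g$. Substituting, the reformulation from the first step reads: for every pair $f,g\in F$ with $f\nsim g$ and every $\sigma'\in g$ with $\psi(f,\sigma')\ne{\tt undefined}$, we have $\psi(f,\sigma')\in g$ and $f\ne g$. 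This is exactly the hypothesis in the proposition, so the ``if'' direction follows immediately; in fact the argument is reversible and yields an ``iff,'' though only one direction is asserted.

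There is essentially no obstacle here beyond being careful with quantifier order and with the slightly asymmetric convention that $\Gamma(f)$ may or may not include $f$, which is why the clause ``$f\ne g$'' appears explicitly in the proposition. Once Definition~\ref{def:causality} has been put in the ``pairwise'' form above, the proof is a one-line substitution and requires no probabilistic or combinatorial argument.
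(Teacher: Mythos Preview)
Your proposal is correct and matches the paper's approach: the paper simply asserts that ``this claim follows directly from definitions'' and gives no further proof. Your unpacking of Definition~\ref{def:causality} into a pointwise condition on pairs $f,g$ and then translating via the atomic dictionary $\sigma\R{f}\sigma'\Longleftrightarrow\psi(f,\sigma')=\sigma$ is exactly the intended verification, and your observation that the argument is reversible (yielding an ``iff'') is also correct, though the paper only states the ``if'' direction.
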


\subsection{Matchings}
We now apply the route outlined above to some matching problems.
Let $G=(V,E)$ be an undirected graph with $|V|=2n$ nodes that satisfies the following condition:
\begin{itemize}
\item[($\ast$)]
{\em
If $(u',u,v,v')$ is a path in $G$ with distinct nodes then $\{u',v'\}\in E$.
}
\end{itemize}
We will consider the case when $\Omega$ is the set of perfect matchings in $G$
(so that each object $\sigma\in\Omega$ is a subset of $E$).
We allow any flaw of the form $f_M=\{\sigma\in\Omega\:|\:M\subseteq \sigma\}$
where $M$ is a fixed subset of $E$. 
It can be  assumed w.l.o.g.\ that $M$ is a matching (otherwise $f_M$ would be empty).
Thus, $F$ can be any subset of $\{f_M\:|\:M\in\calM\}$ where $\calM$ denotes
the set of matchings in $G$, with $\Omega\subset\calM$.
%
%
Two special cases of this framework have been considered~\cite{Harris:permutations,Achlioptas,HarveyVondrak15}:
\begin{itemize}
\item[{[P1]}] $G$ is the complete graph on $2n$ vertices, so that $\Omega$ is the set of all perfect matchings of $V$.
\item[{[P2]}] Set $V$ can be partitioned into disjoint subsets $A_1,B_1,\ldots,A_r,B_r$ such
that $|A_i|=|B_i|$ for $i\in[r]$ and $E=\{\{u,v\}\:|\:u\in A_i,v\in B_i,i\in[r]\}$.
Thus, $G$ is a union of $r$ complete bipartite graphs, and set $\Omega$  corresponds to $r$ permutations.
\end{itemize}
In fact, these are essentially the only possibilities allowed by condition ($\ast$):
it can be shown that if $G$  contains at least one perfect matching then
each component of $G$ is either a complete graph $K_{m}$ or a complete bipartite graph $K_{m,m}$.
We will not need this claim, and so we leave it without a proof. Instead, we just assume
that one of the two cases [P1,P2] holds (but unlike previous work, we will treat them in a unified way,
relying mostly on condition ($\ast$)).

We use the following potential causality graph:
 $f_M\sim f_{M'}$ for $f_M,f_{M'}\in F$ if $M\cup M'$ is not a matching
{\em or} $M=M'$.
This graph is the same or slightly smaller than graphs used previously
for cases [P1] and [P2]. (For [P2] the works~\cite{Harris:permutations,HarveyVondrak15}  used a larger relation $\sim'$ instead where 
 $f_M\sim' f_{M'}$ if $M\cup M'$ is  not a matching {\em or} $M\cap M'\ne \varnothing$).


We will construct a mapping $\hat\psi:\calM\times \Omega\rightarrow\Omega$ that satisfies 
$M\subseteq\hat\psi(M,\sigma)$ for any $M\in\calM$ and $\sigma\in\Omega$.
It will correspond to the mapping $\psi$ in a natural way, i.e.\ $\psi(f_M,\sigma)=\hat\psi(M,\sigma)$ for $f_M\in F$.
Clearly, such $\psi$ will be defined everywhere on $F\times \Omega$, and will satisfy property (I).

\paragraph{Defining $\hat\psi$}
Consider $M\in\calM$ and $\sigma\in\Omega$. If $M$ contains a single edge $e=\{u,v\}$,
then we find unique $u',v'$ with $\{u,u'\},\{v,v'\}\in \sigma$, and set 
\begin{equation}
\hat\psi(\{e\},\sigma)=(\sigma-\{\{u,u'\},\{v,v'\}\})\;\cup\;\{\{u,v\},\{u',v'\}\}
\end{equation}
Note that if $\{u,v\}\in\sigma'$ then $\hat\psi(M,\sigma)=\sigma$. Otherwise nodes $u,v,u',v'$
are distinct, and we have $\hat\psi(M,\sigma)\in\Omega$
by the assumption ($\ast$) on the graph $G$.

Now suppose that $M=\{e_1,\ldots,e_k\}$ contains more than one edge.
Then we define 
\begin{equation}
\sigma_0=\sigma\qquad
\sigma_1=\hat\psi(\{e_1\},\sigma_0)\quad
\sigma_2=\hat\psi(\{e_2\},\sigma_1)\quad
\ldots
\quad
\sigma_{k}=\hat\psi(\{e_{k}\},\sigma_{k-1})\quad
\label{eq:psiDef}
\end{equation}
and set $\hat\psi(M,\sigma)=\sigma_{k}$. To show that this is well-defined, we need to prove
that the result does not depend on the chosen ordering of $M$. It suffices to prove this claim for $|M|=2$,
then we can use an induction argument (since any ordering of $M$ can be transformed to any other ordering via a sequence
of operations that swap adjacent elements).
Proving it for $|M|=2$ can be done by inspecting all possible cases, which are visualized in Fig.~\ref{fig:matchings}; verification
of the claim in each case is left to the reader.

\begin{figure}[t]
\vskip 0.2in
\normalsize
\begin{center}
\begin{tabular}{c}\hspace{-9pt}
\includegraphics[scale=0.5]{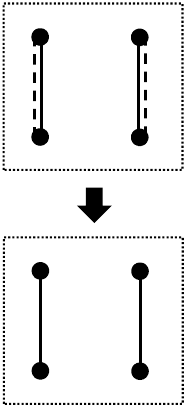}\,
\includegraphics[scale=0.5]{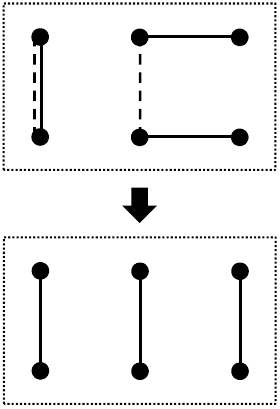}\,
\includegraphics[scale=0.5]{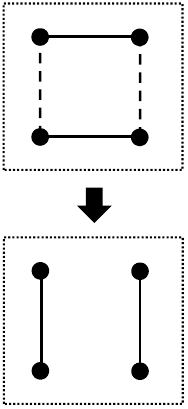}\,
\includegraphics[scale=0.5]{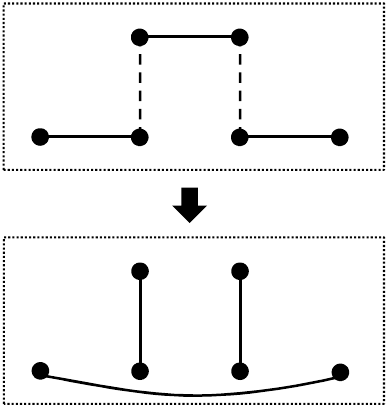}\,
\includegraphics[scale=0.5]{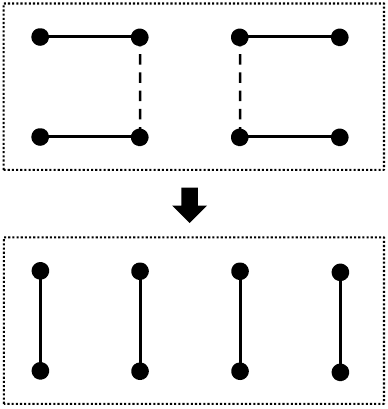}
\end{tabular}
\caption{Possible cases for $M=\{e_1,e_2\}\in\calM$ and $\sigma\in\Omega$.
Top row: solid lines indicate edges in $\sigma$, dashed lines indicate edges in $M$.
Bottom row: result of $\psi(\{e_1\},\psi(\{e_2\},\sigma))=\psi(\{e_2\},\psi(\{e_1\},\sigma))$.
}
\label{fig:matchings}
\end{center}
\vskip -0.2in
\end{figure}

\begin{proposition}
(a) In~\eqref{eq:psiDef} we have  $\{e_1,\ldots,e_i\}\subseteq\sigma_i$ for any $i\in[k]$.
Consequently, $M\subseteq\hat\psi(M,\sigma)$ for any $M\in\calM$ and $\sigma\in\Omega$ (and so property (I) holds). \\
(b) For any $\sigma\in\Omega$ and $M,M'\in \calM$ with $M\cup M'\in\calM$  we have 
$\hat\psi(M,\hat\psi(M',\sigma))=\hat\psi(M',\hat\psi(M,\sigma))$.
Consequently, the multigraph $D$ defined by $(\Omega,F,\psi)$ is weakly commutative with respect to $\sim$. \\
(c) Relation $\sim$ is a potential causality graph for $D$.
\label{prop:psiDef}
\end{proposition}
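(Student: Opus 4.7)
The plan is to prove (a), (b), (c) in order, treating part~(a) as the lynchpin: once we know that $\hat\psi(M,\cdot)$ preserves $M$, parts~(b) and~(c) reduce quickly to appeals to the order-independence of the iteration~\eqref{eq:psiDef} (already established for pair matchings by inspection of Fig.~\ref{fig:matchings}, and extended to arbitrary matchings by the induction sketched just before the proposition).

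For part (a), I will induct on $i$. The base case $i=1$ is immediate from the definition of $\hat\psi$ on a single edge: either $e_1$ is already present in $\sigma_0$ and $\sigma_1=\sigma_0$, or $e_1$ is explicitly inserted. For the inductive step, suppose $\{e_1,\ldots,e_{i-1}\}\subseteq\sigma_{i-1}$ and write $e_i=\{u,v\}$. The only edges removed when computing $\sigma_i=\hat\psi(\{e_i\},\sigma_{i-1})$ are the (at most) two edges of $\sigma_{i-1}$ incident to $u$ or $v$. Because $M$ is itself a matching, each $e_j$ with $j<i$ is vertex-disjoint from $e_i$, so $e_j$ is not among the removed edges; combined with $e_i\in\sigma_i$ from the base case, this closes the induction.

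For part (b), I apply order-independence of~\eqref{eq:psiDef} to the matching $N=M\cup M'\in\calM$: processing $M$-edges before $M'$-edges yields $\hat\psi(M',\hat\psi(M,\sigma))$, the reverse ordering yields $\hat\psi(M,\hat\psi(M',\sigma))$, and both equal $\hat\psi(M\cup M',\sigma)$. For weak commutativity of $D$, given a walk $\sigma_1\R{f_M}\sigma_2\R{f_{M'}}\sigma_3$ with $f_M\nsim f_{M'}$, I set $\sigma_2':=\hat\psi(M,\sigma_3)$, apply the identity just established to get $\hat\psi(M',\sigma_2')=\hat\psi(M,\hat\psi(M',\sigma_3))=\hat\psi(M,\sigma_2)=\sigma_1$, and invoke part~(a) to ensure $M\subseteq\sigma_2'$ and $M'\subseteq\sigma_1$, so the swapped sequence $\sigma_1\R{f_{M'}}\sigma_2'\R{f_M}\sigma_3$ is a valid walk. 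Atomicity of $D$ makes the induced ${\tt SWAP}$ automatically injective.

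For part (c), I invoke Proposition~\ref{prop:psiCausality}. When $f_M\nsim f_{M'}$, the definition of $\sim$ gives $M\ne M'$ (hence $f_M\ne f_{M'}$) for free, so only $M'\subseteq\hat\psi(M,\sigma')$ for $\sigma'\in f_{M'}$ remains; this is a mild strengthening of the invariant from~(a), since $M\cup M'\in\calM$ forces every edge of $M'\subseteq\sigma'$ to be vertex-disjoint from every edge of $M$, hence never removed along the successive updates forming $\hat\psi(M,\sigma')$. The main technical point, shared by all three parts, is this non-removal claim: processing an edge $e=\{u,v\}$ cannot delete an edge $e_j$ vertex-disjoint from $\{u,v\}$. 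The one subtle worry is that some auxiliary endpoint $u'$ or $v'$ chosen from the current matching could accidentally coincide so that the removed edge $\{u,u'\}$ or $\{v,v'\}$ equals $e_j$; but $\{u,u'\}=e_j$ would force $u\in e_j$, contradicting the vertex-disjointness of $e_i,e_j$ in the matching $M$ (or of $M$ and $M'$ in part (c)).
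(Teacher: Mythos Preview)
Your proof is essentially correct and follows the same approach as the paper: induction for (a), reducing (b) to the order-independence of~\eqref{eq:psiDef}, and invoking Proposition~\ref{prop:psiCausality} together with the ``non-removal'' observation for (c).

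There is one small gap in part~(b). You write that ``processing $M$-edges before $M'$-edges'' equals $\hat\psi(M\cup M',\sigma)$ by order-independence. This is fine when $M\cap M'=\varnothing$, since then the concatenated list is an ordering of $M\cup M'$. But the hypothesis $M\cup M'\in\calM$ does \emph{not} force $M\cap M'=\varnothing$ (and indeed $f_M\nsim f_{M'}$ only gives $M\ne M'$, not disjointness). When $M\cap M'\ne\varnothing$, the concatenation has repeated edges and is not an ordering of $M\cup M'$, so the order-independence statement established before the proposition does not directly apply. The paper handles this explicitly: if an edge $e$ appears a second time in the sequence, then by part~(a) it is already present in the current state, hence $\hat\psi(\{e\},\cdot)$ acts as the identity and the duplicate can be dropped. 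You should add this one line; once you do, your argument and the paper's coincide.
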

\begin{proof}
\noindent{\bf (a)~~} 
We can use induction on $i$. The base case $i=0$ is
vacuous. The induction step for $i$ follows from the definition of the mapping $\hat\psi(\{e_i\},\sigma_{i-1})$
and the fact that $\sigma_{i-1}$ cannot contain edges that connect 
an endpoint of $e_i$ to an endpoint of $e_j$ for $j\in[i-1]$
(since $\{e_1,\ldots,e_{i-1}\}\subseteq\sigma_{i-1}$ by the induction hypothesis, and $\{e_1,\ldots,e_i\}$
is a matching).

\noindent{\bf (b)~~} We claim $\hat\psi(M',\hat\psi(M,\sigma))=\hat\psi(M\cup M',\sigma)$.
Indeed, let $M=\{e_1,\ldots,e_\ell\}$ and $M'=\{e_{\ell+1},\ldots,e_k\}$.
Define $\sigma_0,\sigma_1,\ldots,\sigma_k$ as in~\eqref{eq:psiDef},
then  $\hat\psi(M,\sigma)=\sigma_\ell$ and $\hat\psi(M',\sigma_\ell)=\sigma_k$.
Thus, it remains to show that $\hat\psi(M\cup M',\sigma)=\sigma_k$.
For that we need to observe that if some edge $e_i$ appears in
the sequence $(e_1,\ldots,e_k)$ the second time then we have
$e_i\in\sigma_{i-1}$ and consequently $\sigma_i=\hat\psi(\{e_i\},\sigma_{i-1})=\sigma_{i-1}$.
Thus, such $e_i$ can be removed from the sequence without affecting the result.
After removing duplicates we conclude that $\hat\psi(M\cup M',\sigma)=\sigma_k$ by the definition of $\hat\psi$.

In a similar way we can show that $\hat\psi(M,\hat\psi(M',\sigma))=\hat\psi(M\cup M',\sigma)$. This proves the claim.

\noindent{\bf (c)~~} Let us show that conditions of Proposition~\ref{prop:psiCausality} hold.
Consider flaws $f=f_M$, $g=f_{M'}$ in $F$ with $f_M\nsim f_{M'}$ and object $\sigma\in f_{M'}$.
Condition $f_M\ne f_{M'}$ holds since  $f_M\nsim f_{M'}$, so we need 
 to show that $\psi(f_M,\sigma)\in f_{M'}$, or equivalently $M'\subseteq \hat\psi(M,\sigma)$.

Assume that $M=\{e_1,\ldots,e_k\}$, and define sequence $\sigma_0,\sigma_1,\ldots,\sigma_{k-1},\sigma_k\!=\!\hat\psi(M,\sigma)$ as in~\eqref{eq:psiDef}.
Indeed, for the base case the claim $M'\subseteq\sigma$ holds since $\sigma\in f_{M'}$,
and for the induction step we need to use the definition of $\hat\psi$ 
and the fact that  $M\cup M'\in\calM$ (which holds since $f_M\nsim f_{M'}$).
We leave verification of the induction step to the reader.
\end{proof}

\paragraph{Sampling from $A(f_M,\sigma_k)$} The general idea is to ``reverse''
the process in eq.~\eqref{eq:psiDef}: given flaw $f_M$ with $M=\{e_1,\ldots,e_k\}\in\calM$
and object $\sigma_k\in f_M$,
we first generate possible values for $\sigma_{k-1}$, then for $\sigma_{k-2}$, and so on.

For a subset $S\subseteq E$ let $\overrightarrow S=\{(u,v),(v,u)\:|\:\{u,v\}\in A\}$ be a ``directed copy'' of $S$.
For an object $\sigma\in\Omega$ and edges $(u,v),(u',v')\in \overrightarrow \sigma$ define
\begin{eqnarray}
{\tt Swap}_\sigma((u,v),(u',v'))=(\sigma - \{\{u,v\},\{u',v'\}\})\;\cup\;\{\{u,u'\},\{v,v'\}\}
\end{eqnarray}
Finally, for an object $\sigma\in\Omega$ and an edge $(u,v)\in\overrightarrow \sigma$ let us define
\begin{eqnarray}
\calN_\sigma(u,v)=\{(u',v')\in  \overrightarrow\sigma\:|\:{\tt Swap}_\sigma((u,v),(u',v'))\in\Omega\}
\end{eqnarray}
It can be checked that $(v,u)\in\calN_\sigma(u,v)$ and $(u,v)\notin\calN_\sigma(u,v)$. Furthermore, in the special cases
above we have the following:

\begin{itemize}
\item[{[P1]}] $\calN_\sigma(u,v)=\overrightarrow\sigma-\{(u,v)\}$.
\item[{[P2]}] If $(u,v)\in A_i\times B_i$ then $\calN_\sigma(u,v)=(B_i\times A_i)\cap \overrightarrow\sigma$.
\end{itemize}

We can now formulate the sampling algorithm (see Algorithm~\ref{alg:sampling}).

\begin{algorithm}[!h]
\caption{Sampling from $A(f_M,\sigma_k)$ for $M=\{e_1,\ldots,e_k\}\subseteq\sigma_k\in\Omega$}\label{alg:sampling}
\begin{algorithmic}[1]
\STATE {\bf for} $i=k,k-1,\ldots,1$ {\bf do}
\STATE ~~~~choose orientation $(u,v)$ of edge $e_i=\{u,v\}$
\STATE ~~~~select $(u',v')\in\calN_{\sigma_i}(u,v)-\overrightarrow{\{e_1,\ldots,e_{i-1}\}}$ uniformly at random
\STATE ~~~~set $\sigma_{i-1}={\tt Swap}_{\sigma_i}((u,v),(u',v'))$
\STATE {\bf end for}
\STATE return $\sigma_0$
\end{algorithmic}
\end{algorithm}

Let us verify the correctness of this algorithm.
Using the definitions of $\hat\psi$ and ${\tt Swap}_\sigma$, the following fact can be easily checked.
\begin{lemma}
(a) Suppose that $\{u,v\}\in\sigma_i\in\Omega$, $(u',v')\in\calN_{\sigma_i}(u,v)$ and $\sigma_{i-1}={\tt Swap}_{\sigma_i}((u,v),(u',v'))$.
Then $\hat\psi(\{\{u,v\}\},\sigma_{i-1})=\sigma_i$. \\
(b) Conversely, suppose that $\hat\psi(\{\{u,v\}\},\sigma_{i-1})=\sigma_i$ for $e_i=\{u,v\}\in E$ and $\sigma_{i-1}\in\Omega$.
Then there exists unique $(u',v')\in\calN_{\sigma_i}(u,v)$ such that $\sigma_{i-1}={\tt Swap}_{\sigma_i}((u,v),(u',v'))$.
Furthermore, it satisfies $\{u',v'\}\notin{\sigma_{i-1}-\{e_i\}}$.
\label{lemma:fb}
\end{lemma}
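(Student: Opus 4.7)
The plan is to prove both parts by directly unwinding the definitions of $\hat\psi$ and ${\tt Swap}_\sigma$, after a case split on whether the swap is ``degenerate'' (i.e.\ the pair $(u',v')$ equals the reverse $(v,u)$ of the addressed edge, or equivalently $\{u,v\}\in\sigma_{i-1}$) or ``non-degenerate''. The key observation driving the non-degenerate case is that, since $\sigma_i\in\Omega$ is a perfect matching and both $\{u,v\},\{u',v'\}\in\sigma_i$, the matching property (together with $(u',v')\neq(u,v),(v,u)$) forces $u,v,u',v'$ to be pairwise distinct, so that $\sigma_{i-1}$ and $\sigma_i$ differ exactly on the two pairs of edges exchanged by the swap formula. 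This is the content we need in order to match the edges picked out by the definition of $\hat\psi$ at a singleton matching.

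For part (a), I would first handle the degenerate sub-case: when $(u',v')=(v,u)$, the definition of ${\tt Swap}$ removes and re-adds the single edge $\{u,v\}$, so $\sigma_{i-1}=\sigma_i$ and $\{u,v\}\in\sigma_{i-1}$; the ``trivial'' clause of $\hat\psi$ (stating $\hat\psi(M,\sigma)=\sigma$ when the addressed edge already belongs to $\sigma$) immediately returns $\sigma_{i-1}=\sigma_i$. In the non-degenerate case, distinctness of $u,v,u',v'$ and the swap formula show that the unique edges of $\sigma_{i-1}$ incident to $u$ and $v$ are $\{u,u'\}$ and $\{v,v'\}$, and that $\{u,v\}\notin\sigma_{i-1}$; plugging these into the formula defining $\hat\psi(\{\{u,v\}\},\sigma_{i-1})$ recovers $\sigma_i$ by inspection.

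For part (b), I would split on whether $\{u,v\}\in\sigma_{i-1}$. If yes, then $\sigma_i=\sigma_{i-1}$ by definition of $\hat\psi$, and the candidate $(u',v')=(v,u)$ (which lies in $\calN_{\sigma_i}(u,v)$ by the observation quoted just before Algorithm~\ref{alg:sampling}) clearly produces $\sigma_{i-1}$ under the swap; uniqueness follows because any $(u'',v'')$ achieving ${\tt Swap}_{\sigma_i}((u,v),(u'',v''))=\sigma_i$ must have $\{u,u''\},\{v,v''\}\in\sigma_i$, and the matching property of $\sigma_i$ forces $u''=v$, $v''=u$. If instead $\{u,v\}\notin\sigma_{i-1}$, the matching partners $u',v'$ of $u,v$ in $\sigma_{i-1}$ are uniquely determined, the definition of $\hat\psi$ gives $\sigma_i=(\sigma_{i-1}\setminus\{\{u,u'\},\{v,v'\}\})\cup\{\{u,v\},\{u',v'\}\}$, and from this one reads off that ${\tt Swap}_{\sigma_i}((u,v),(u',v'))=\sigma_{i-1}\in\Omega$; uniqueness of $(u',v')$ again reduces to the uniqueness of the matching partners of $u,v$ in $\sigma_{i-1}$. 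The trailing claim $\{u',v'\}\notin\sigma_{i-1}-\{e_i\}$ then splits identically: in the degenerate sub-case $\{u',v'\}=e_i$ itself, while in the non-degenerate sub-case $u'$ and $v'$ are already saturated in $\sigma_{i-1}$ by edges $\{u,u'\}$ and $\{v,v'\}$, so $\{u',v'\}\notin\sigma_{i-1}$ outright.

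The whole argument is essentially bookkeeping about a single swap in a perfect matching, and the only mild subtlety is to isolate the degenerate case early so that subsequent steps may safely assume four distinct vertices. Note that condition $(\ast)$ on $G$ is not invoked here — it has already been used upstream to guarantee that $\hat\psi$ lands in $\Omega$ — so the content of this lemma is the purely matching-theoretic compatibility between $\hat\psi$ and ${\tt Swap}_\sigma$ that underwrites the correctness of Algorithm~\ref{alg:sampling}.
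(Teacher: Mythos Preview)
Your proposal is correct and is precisely the direct definition-unwinding that the paper has in mind: the paper does not give a proof at all, merely stating that the lemma ``can be easily checked'' from the definitions of $\hat\psi$ and ${\tt Swap}_\sigma$. Your degenerate/non-degenerate case split is the natural way to carry this out, and all details (distinctness of $u,v,u',v'$ in the non-degenerate case, the uniqueness arguments via the matching property, and the handling of the trailing claim) are sound.
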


Using this lemma, we can now show establish correctness of the sampling procedure.
We say that two executions of Algorithm~\ref{alg:sampling} are distinct if they
made different choices in line 3 for some $i\in[k]$.

\begin{proposition}
Algorithm~\ref{alg:sampling} is well-defined, i.e.\ in line 3 we have $e_i\in\sigma_i$.
It can generate object $\sigma_0\in\Omega$ if and only if $\hat\psi(M,\sigma_0)=\sigma_k$.
Finally, distinct executions produce distinct outputs.
\label{prop:sampling:execution}
\end{proposition}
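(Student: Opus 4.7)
The three parts of the proposition will be attacked in order, and all three rest on Lemma~\ref{lemma:fb} together with Proposition~\ref{prop:psiDef}(a).

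For well-definedness, my plan is to prove by downward induction on $i$ from $k$ to $1$ that $\sigma_i\in\Omega$ and $\{e_1,\ldots,e_i\}\subseteq\sigma_i$. The base case $i=k$ is the standing hypothesis $M\subseteq\sigma_k\in\Omega$. For the step from $i$ down to $i-1$, the membership $(u',v')\in\calN_{\sigma_i}(u,v)$ guarantees by definition that $\sigma_{i-1}={\tt Swap}_{\sigma_i}((u,v),(u',v'))\in\Omega$, while the constraint $(u',v')\notin\overrightarrow{\{e_1,\ldots,e_{i-1}\}}$ imposed in line 3, combined with $e_i\notin\{e_1,\ldots,e_{i-1}\}$ (since $M$ is a matching), shows that the two edges deleted by the Swap are both distinct from $e_1,\ldots,e_{i-1}$; hence these edges survive in $\sigma_{i-1}$. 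In particular $e_{i-1}\in\sigma_{i-1}$, which is exactly what line 3 needs at the next iteration.

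For the ``if and only if'' characterization, the forward direction follows from Lemma~\ref{lemma:fb}(a) applied step by step: each Swap witnesses $\hat\psi(\{e_i\},\sigma_{i-1})=\sigma_i$, and then the definition~\eqref{eq:psiDef} of $\hat\psi(M,\cdot)$ composes these identities into $\hat\psi(M,\sigma_0)=\sigma_k$. For the converse, assume $\hat\psi(M,\sigma_0)=\sigma_k$ and introduce the canonical intermediate states $\sigma_i^\ast=\hat\psi(\{e_1,\ldots,e_i\},\sigma_0)$, so that $\sigma_0^\ast=\sigma_0$, $\sigma_k^\ast=\sigma_k$, and $\hat\psi(\{e_i\},\sigma_{i-1}^\ast)=\sigma_i^\ast$. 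Lemma~\ref{lemma:fb}(b) applied at each $i$ then produces a unique pair $(u',v')\in\calN_{\sigma_i^\ast}(u,v)$ realizing the Swap from $\sigma_i^\ast$ to $\sigma_{i-1}^\ast$; its last clause, combined with $\{e_1,\ldots,e_{i-1}\}\subseteq\sigma_{i-1}^\ast$ from Proposition~\ref{prop:psiDef}(a), ensures $\{u',v'\}\notin\{e_1,\ldots,e_{i-1}\}$, so these choices form a legal execution of Algorithm~\ref{alg:sampling} with output $\sigma_0$.

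For the injectivity claim, the idea is to exhibit a left inverse: given an output $\sigma_0$, any execution producing it must visit exactly the states $\sigma_i^\ast=\hat\psi(\{e_1,\ldots,e_i\},\sigma_0)$ (by the forward direction of the previous claim), and then Lemma~\ref{lemma:fb}(b) forces the pair $(u',v')$ chosen in line 3 at iteration $i$ to equal the unique one associated to the passage $\sigma_i^\ast\to\sigma_{i-1}^\ast$ under the orientation of $e_i$ fixed in line 2. The subtlety most worth handling carefully — and what I expect to be the main obstacle — is the role of the orientation choice in line 2: reversing it sends the chosen pair $(u',v')$ to $(v',u')$ without altering $\sigma_{i-1}$, so one either fixes a canonical orientation convention or treats orientation-reversed executions as the same. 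Under either convention, Lemma~\ref{lemma:fb}(b) then immediately yields that distinct executions must differ in their outputs.
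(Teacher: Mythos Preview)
Your proposal is correct and follows essentially the same approach as the paper's proof: a downward induction establishing $\{e_1,\ldots,e_i\}\subseteq\sigma_i$ for well-definedness, Lemma~\ref{lemma:fb}(a) for the forward direction, Lemma~\ref{lemma:fb}(b) together with Proposition~\ref{prop:psiDef}(a) for the converse, and the uniqueness clause of Lemma~\ref{lemma:fb}(b) for injectivity. The only organizational difference is that the paper bundles well-definedness and the forward direction into a single induction (simultaneously maintaining $\{e_1,\ldots,e_i\}\subseteq\sigma_i$ and $\hat\psi(\{e_{i+1},\ldots,e_k\},\sigma_i)=\sigma_k$), whereas you treat them separately; your explicit discussion of the orientation choice in line~2 is a clarification the paper glosses over, but it is handled there implicitly by the convention that ``distinct executions'' are distinguished only by the choice in line~3.
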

\begin{proof} The proof will have two parts corresponding to two directions.

\noindent{\bf (a)~~} Let $\sigma_k,\sigma_{k-1},\ldots$ be the sequence of objects
produced by the algorithm.
We will show using induction on $i=k,\ldots,1,0$ that
 $\{e_1,\ldots,e_{i}\}\subseteq\sigma_{i}$ (and therefore line 3 for index $i$ is well-defined)
and $\hat\psi(\{e_{i+1},\ldots,e_k\},\sigma_{i})=\sigma_k$. The base case $i=k$ is trivial.
Suppose the claim holds for $i\in[k]$, let us show it for $i-1$.
We have  $\{e_1,\ldots,e_{i}\}\subseteq\sigma_{i}$ by the induction hypothesis;
inspecting the rule for choosing $(u',v')$, we conclude that $\{e_1,\ldots,e_{i-1}\}\subseteq\sigma_{i-1}$.
For the second claim we can write
$$
\hat\psi(\{e_i,\ldots,e_k\},\sigma_{i-1})
=\hat\psi(\{e_{i+1},\ldots,e_k\},\hat\psi( \{e_i\}     ,\sigma_{i-1}))
=\hat\psi(\{e_{i+1},\ldots,e_k\},\sigma_i)=\sigma_k
$$
where the first equality is by the definition of $\hat\psi$, the second is by Lemma~\ref{lemma:fb}(a) and third is by the induction hypothesis.
This concludes the argument.

\noindent{\bf (b)~~} 
Suppose that $\hat\psi(M,\sigma_0)=\sigma_k$ for $\sigma_0\in\Omega$ and $M=\{e_1,\ldots,e_k\}$.
Define objects $\sigma_1,\ldots,\sigma_k$ as in~\eqref{eq:psiDef}.
We claim that Algorithm~\ref{alg:sampling} can replicate this sequence (in the reverse order).
Indeed, by Lemma~\ref{lemma:fb}(b) it suffices to show that
for any $(u',v')\in \calN_{\sigma_i}(u,v)$ with $\{u',v'\}\notin{\sigma_{i-1}-\{e_i\}}$
we also have $(u',v')\in\calN_{\sigma_i}(u,v)-\overrightarrow{\{e_1,\ldots,e_{i-1}\}}$.
Suppose not, then $(u',v')\in \overrightarrow{\{e_1,\ldots,e_{i-1}\}}$.
By Proposition~\ref{prop:psiDef}(a) we have ${\{e_1,\ldots,e_{i-1}\}}\subseteq {\sigma_{i-1}}$,
and so $\{u',v'\}\in\sigma_{i-1}$. Thus,  $\{u',v'\}=e_i$.
But $e_i$ does not appear in $\{e_1,\ldots,e_{i-1}\}$, and so we cannot have $(u',v')\in \overrightarrow{\{e_1,\ldots,e_{i-1}\}}$ - a contradiction.

Let us now prove that the input $M$, $\sigma_k$ and the output $\sigma_0$
uniquely determine choices made during the execution (this will give the last claim of the lemma).
Let $\tilde\sigma_{k},\ldots,\tilde\sigma_1,\tilde\sigma_0$ be the objects produced during the execution,
with $\tilde\sigma_k=\sigma_k$ and $\tilde\sigma_0=\sigma_0$.
Set $i=1$.
By Lemma~\ref{lemma:fb}(a) we have $\hat\psi(\{e_i\},\sigma_{i-1})=\tilde\sigma_i$,
implying that $\tilde\sigma_i=\sigma_i$ is determined uniquely.
By Lemma~\ref{lemma:fb}(b) the choice of $(u',v')$ in line 3 for index $i$
is also determined uniquely from $\sigma_{i-1},\sigma_i$ and $e_i$.
Repeating this argument for $i=2,\ldots,k$ (i.e.\ using induction) yields the claim.
\end{proof}
We have proved that the output of Algorithm~\ref{alg:sampling} is a distribution whose support is $A(f_M,\sigma_k)$.
To show that this distribution is uniform, we need to observe additionally
that the number of choices in line 3 for index $i$ depends on $i$ but not on the past execution history (which
can be easily checked for cases [P1] and [P2]).
The cardinality of $A(f_M,\sigma_k)$ is the product of these numbers over $i\in[k]$,
and thus depends only on the flaw $f_M$ (more precisely, on $|M|$).

To summarize, we have constructed an atomic weakly commutative multigraph $D$,
proved that Algorithm~\ref{alg:sampling} samples uniformly from $A(f_M,\sigma_k)$,
and the size of latter set depends only on $f_M$ (the latter implies strong commutativity).
It can now be verified that the sampling procedure coincides with the procedure
in~\cite{HarveyVondrak15} for perfect matchings in a complete graph (in the case [P1]),
and 
with the procedure
in~\cite{Harris:permutations,HarveyVondrak15} for permutations (in the case [P2]).

\subsection{Application: rainbow matchings in complete graphs}\label{sec:rainbow}
We refer to~\cite{Harris:permutations,Achlioptas,HarveyVondrak15} for applications of resampling oracles for permutations and perfect matchings.
Here we revisit just one application, namely a {\em rainbow matching} problem.
Our primary goal is to demonstrate how the choice of the distribution $\omegainit$
affects the bound on the expected runtime, and also compare it with the parallel version.

Let $G=(V,E)$ be a complete graph on $2n$ vertices such that each edge is assigned a color,
and each color appears in at most $q$ edges.
A perfect matching in $G$ is called {\em rainbow} if its edges have distinct colors.
Achlioptas and Iliopoulos~\cite{Achlioptas} showed  that a rainbow matching
exists 
if $q\le \gamma n$ for some constant $\gamma<\frac{1}{2e}\simeq 0.184$.
Instead of~\eqref{eq:Condition:b}, they used a stronger condition~\eqref{eq:Condition:b'}.
Harvey and Vondr\'ak~\cite{HarveyVondrak15} improved the constant to $\gamma=0.21$
by exploiting a condition with the cluster expansion correction analogous to~\eqref{eq:Condition}. Below we redo their calculations.

Let $F$ be the set of flaws $f_{M}$ such $M$ contains two vertex-disjoint edges of the same color,
and assume that we use the multigraph and relation $\sim$ constructed in the previous section.
\begin{proposition}
If $\gamma=0.21$ then condition~\eqref{eq:Condition} can be satisfied by setting $\mu_f=\mu=\frac{3}{4n^2}$
for $f\in F$
(if $n$ is sufficiently large).
\end{proposition}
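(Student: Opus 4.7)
The plan is to verify both parts of~\eqref{eq:Condition}. Since the resampling oracle for perfect matchings constructed earlier in Section~\ref{sec:construction} coincides with the Harvey--Vondr\'ak oracle for perfect matchings in~$K_{2n}$, it is regenerating; so by the second bullet in Section~\ref{sec:LLLconditions} I can take $\lambda_f=\omega(f_M)$, and~\eqref{eq:Condition:a} is then automatic. Counting perfect matchings of $K_{2n}$ that contain the two edges of $M$ gives
\begin{equation*}
\omega(f_M)\;=\;\frac{(2n-5)!!}{(2n-1)!!}\;=\;\frac{1}{(2n-1)(2n-3)}\;=\;\frac{1+o(1)}{4n^{2}},
\end{equation*}
so $\lambda_f/\mu_f\to 1/3$ as $n\to\infty$.

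The heart of the argument is bounding the cluster-expansion sum $\sum_{S\in\IND(f_M)}\mu(S)$ with $\mu_g\equiv\mu=3/(4n^{2})$. The key structural observation is that by the definition of $\sim$ every $f_{M'}\in\Gamma(f_M)$ satisfies $V(M)\cap V(M')\ne\varnothing$ (otherwise $M\cup M'$ is a matching), whereas any two flaws in an independent set $S\in\IND(f_M)$ have disjoint vertex sets. Since $|V(M)|=4$, this forces $|S|\le 4$. This cap is what lets the cluster-expansion bound beat the naive estimate $\prod_{g\in\Gamma(f)}(1+\mu_g)$ and reach a constant $\gamma$ strictly larger than the $(\ln 3)/6\approx 0.183$ of~\cite{Achlioptas}.

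Next I will count, for each fixed vertex $v\in V(M)$, the flaws $f_{M'}$ with $v\in V(M')$: choose one of the $2n-1$ edges at $v$, then pair it with a vertex-disjoint edge of the same color (at most $q-1$ choices), giving at most $(2n-1)(q-1)$ such flaws. Encoding each $S\in\IND(f_M)$ by a subset $T\subseteq V(M)$ of ``representative'' vertices (one per flaw in $S$, which are necessarily distinct because the flaws are vertex-disjoint) together with, for each $v\in T$, an incident flaw, produces the overcount
\begin{equation*}
\sum_{S\in\IND(f_M)}\mu(S)\;\le\;\sum_{k=0}^{4}\binom{4}{k}\bigl((2n-1)(q-1)\mu\bigr)^{k}\;=\;\bigl(1+(2n-1)(q-1)\mu\bigr)^{4}.
\end{equation*}
With $q\le\gamma n$ and $\mu=3/(4n^{2})$, the base of this fourth power is $1+\tfrac{3\gamma}{2}+o(1)$.

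Combining the pieces, the left-hand side of~\eqref{eq:Condition:b} tends to $\tfrac{1}{3}\bigl(1+\tfrac{3\gamma}{2}\bigr)^{4}$. For $\gamma=0.21$ this limit equals $(1.315)^{4}/3\approx 2.9902/3<1$, so for $n$ large enough the product is strictly below~$1$ and some $\theta\in(0,1)$ validates~\eqref{eq:Condition:b}. The main technical point I anticipate is verifying the overcount cleanly: each $S$ must be reached at least once by the representative-assignment procedure (which is immediate since each of its flaws has a vertex in $V(M)$), and the factor $[(2n-1)(q-1)]^{k}$ must be justified as an upper bound on choices of $k$ incident flaws once we drop the mutual-disjointness constraint between them, which is precisely what makes the displayed inequality valid.
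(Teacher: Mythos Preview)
Your argument matches the paper's almost exactly. The one cosmetic difference is how you obtain $\lambda_f$: you invoke the regenerating property to get $\lambda_f=\omega(f_M)=(2n-5)!!/(2n-1)!!=1/((2n-1)(2n-3))$, whereas the paper reads off $|A(f,\sigma)|=(2n-3)(2n-1)$ from Algorithm~\ref{alg:sampling} and uses the atomic/uniform case to set $\lambda_f=1/|A(f,\sigma)|$. These are the same number, and the remainder---the per-vertex overcount leading to $(1+(2n-1)(q-1)\mu)^4$ and the asymptotic check with $\beta=(2n-3)(2n-1)\mu=3$---is identical to the paper's computation.

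There is, however, a gap in your structural step (and the paper's own argument shares it). You assert that distinct flaws in an independent $S\in\IND(f_M)$ have disjoint vertex sets, forcing $|S|\le 4$. Under the relation $\sim$ from Section~\ref{sec:construction} (where $f_{M'}\sim f_{M''}$ iff $M'\cup M''$ is not a matching or $M'=M''$) this fails. Take $e=\{v_1,w\}$ with $v_1\in V(M)$ and let $e',e''$ be two further edges of the same colour as $e$, pairwise vertex-disjoint from $e$ and from each other. With $M'=\{e,e'\}$ and $M''=\{e,e''\}$ we have $f_{M'},f_{M''}\in\Gamma(f_M)$, and $M'\cup M''=\{e,e',e''\}$ is a matching with $M'\ne M''$, so $f_{M'}\nsim f_{M''}$; yet $V(M')\cap V(M'')\supseteq e$. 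Iterating this with more edges of that colour yields independent subsets of $\Gamma(f_M)$ of size up to roughly $q$, so your ``one representative vertex per flaw'' encoding does not cover all $S$. The clean fix is to run the count under the slightly coarser relation $\sim'$ that also declares $f_{M'}\sim' f_{M''}$ when $M'\cap M''\ne\varnothing$ (the relation the paper notes was used in~\cite{Harris:permutations,HarveyVondrak15}). Under $\sim'$, independent flaws are edge-disjoint and, since their union is a matching, vertex-disjoint; then $|S|\le 4$ genuinely holds, $\Gamma_{\sim'}(f_M)$ is still contained in $\bigcup_i\Gamma(v_i)$, and your overcount and numerics go through unchanged.
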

\begin{proof}
Consider flaw $f_M$ where $M=\{\{v_1,v_2\},\{v_3,v_4\}\}\in F$.
For node $v\in V$ let $\Gamma(v)\subseteq F$ be the set of flaws $f_{M'}$ such that
at least one of the edges in $M'$ is incident to $v$. We have $|\Gamma(v)|\le (2n-1)(q-1)$
(there are at most $2n-1$ choices for the node $v'\in V$ matched to $v$,
and then at most $q-1$ choices for the second edge of the same color as $\{v,v'\}$).
It can be seen that
$\Gamma(f_M)\subseteq\Gamma(v_1)\cup\Gamma(v_2)\cup\Gamma(v_3)\cup\Gamma(v_4)$.
Furthermore, any independent subset
 $S\in\IND(f_M)$ can be formed by selecting at most one flaw from each of $\Gamma(v_i)$ for $i\in[4]$.
Therefore,
$$
    \sum_{S\in\IND(f_M)}\mu(S) 
\le \prod_{i=1}^4 (\mu^0 + \underbrace{\mu^1+\ldots+\mu^1}_{|\Gamma(v_i)|\mbox{\footnotesize ~times}})
\le (1 + (2n-1)(q-1)\mu)^4
$$
By inspecting Algorithm~\ref{alg:sampling}
we can conclude that $A_f=(2n-3)(2n-1)$ for each $f\in F$. Thus, we get the following condition:
there must exist $\mu>0$ such that expression
$$
    \theta=\frac{1}{(2n-3)(2n-1) \mu}\cdot (1+(2n-1)(q-1)\mu)^4
$$
is a constant smaller than 1.
Denote $\beta=(2n-3)(2n-1) \mu$, then
$$
\theta \le 
\frac{1}\beta \cdot \left(1 + 2n\cdot (\gamma n)\cdot\frac{\beta}{4n^2}\cdot(1+o(1))\right)^4
=\frac{(1+\frac{1}{2}\gamma\beta+o(1))^4}{\beta}
$$
The last expression will be smaller than 1 (for a sufficiently large $n$) if $\beta=3$ and $\gamma=0.21$,
where we used the constants from~\cite{HarveyVondrak15}.
\end{proof}

Let us now estimate the expression in~\eqref{eq:Tseq}. 
We have $|\Omega|=(2n-1)!!$ and $\log|\Omega|=\Theta(n\log n)$.
When ${\tt supp}(\omegainit)=\{\sigmainit\}$ for some $\sigmainit\in\Omega$, we get $\gammainit=|\Omega|$ and thus
 $T=\Omega(n\log n)$. If, on the other hand, $\omegainit=\omega$ then
we can write
$$
  \sum\limits_{R\in\bigcup_{\sigma\in\Omega}\Ind(F_{\sigma})}\mu(R) 
\le   \sum\limits_{R\subseteq F}\mu(R) = \prod_{f\in F} (1+\mu_f)=(1+\mu)^{|F|}
$$
Observing that $|F|\le (2n)^2q=O(n^3)$ and $\mu=O(1/n^2)$, we obtain that 
$
T=O(|F|\log(1+\mu))=O(n^3\log(1+O(\frac{1}{n^2})))=O(n^3\cdot \frac{1}{n^2})=O(n)
$.
Thus, choosing $\omegainit=\omega$ leads to a better bound than initializing the algorithm with some fixed state $\sigmainit$.
This may not be surprising, given that in the latter situation we need to consider the worst case.
Note that a linear bound on the expected
number of resampling steps has also  been shown in~\cite{HarveyVondrak15}.

We can also compute a bound on the expected number of rounds of the parallel version (Algorithm~\ref{alg:parallel}), assuming that $\omegainit=\omega$.
From Theorem~\ref{th:parallel} we get
$$
\mbox{$
T=O(\log \sum_{f\in F}\mu_f)=O(\log (|F|\cdot \mu))=O(\log (n^3\cdot \frac{1}{n^2}))=O(\log n)
$}
$$
It should be noted, however, that at the moment it is not
known whether a round of Algorithm~\ref{alg:parallel} 
can be implemented efficiently (i.e.\ in an expected polylogarithmic time)
for matchings in a complete graph.
Such implementation has only been shown for permutations~\cite{Harris:permutations}.
We conjecture that the technique in~\cite{Harris:permutations}
can be extended to matchings in a complete graph, but leave this question outside the scope of this work.


\appendix


\section{Counting stable sequences: Proof of Theorems~\ref{th:StabCounting} and~\ref{th:StabCounting:rev}}\label{sec:proof:StabCounting}
We focus on the proof of Theorem~\ref{th:StabCounting}; the proof of~\ref{th:StabCounting:rev}
will be analogous. 
As mentioned in Section~\ref{sec:stable:definitions},
there is just one essential difference between formulations of Theorem~\ref{th:StabCounting} and~\ref{th:StabCounting:rev}:
the condition ``there exists walk $\tau$ with $\tau\myeq W$''
is present in the definitions of both ${\tt Stab}_\pi$ and ${\tt Stab}^{\tt rev}_\pi$,
and has not been ``reversed''. We use this condition just once, namely in the proof of Proposition~\ref{prop:StronglyStable}
below, where the two cases are stated separately.


We say that a sequence $\varphi=(I_1,\ldots,I_s)$ with $s\ge 1$ is {\em strongly stable}
if \\
(i) $I_r\in\Ind(F)$ for each $r\in[s]$, \\
(ii) $I_{r+1}\subseteq \Gamma(I_r)$ for each $r\in[s-1]$, and \\
(iii) $I_r\ne\varnothing$ for each $r\in[2,s]$. \\
(Compared to Definition~\ref{def:StableSequence}, we added condition (iii), and in (ii) replaced condition $I_{r+1}\subseteq \Gamma^+(I_r)$ with a stronger condition
$I_{r+1}\subseteq \Gamma(I_r)$). 
%
For a stable word $W$ let $\varphi_W=(I_1,\ldots,I_s)$ be the corresponding stable sequence;
if $W$ is empty then $\varphi_W=(\varnothing)$.

\begin{proposition}\label{prop:StronglyStable}
(a) For any $W\in{\tt Stab}_\pi$ the sequence $\varphi_W$ is strongly stable.
(b) For any $W'\in{\tt Stab}^{\tt rev}_\pi$ the sequence $\varphi_{W}$ is strongly stable,
where $W$ is the reverse of word $W'$.
\end{proposition}
\begin{proof}
We need to show that the sequence $\varphi_W=(I_1,\ldots,I_s)$ satisfies additionally property (ii).
We will prove this only in the case (a), i.e.\ under the assumption that there exists a walk $\tau\myeq W$.
The case (b), i.e.\ when there exists a walk $\tau\myeq W'$, is completely analogous.

Let $W=W_1\ldots W_s$ be the partitioning of $W$ given in Definition~\ref{def:Stable}.
It suffices to prove that if a flaw $f$ is present in adjacent segments $W_r$ and $W_{r+1}$
then $f\sim f$. Suppose not: $f\nsim f$.
Then by Lemma~\ref{lemma:technical}(b) there exists flaw $g$ between the two occurences of $f$
with $f\sim g$. We have $W_rW_{r+1}=\ldots f \ldots g \ldots f \ldots$.
We have either $g\in W_r$ or $g\in W_{r+1}$, and so we must have $f\nsim g$ - a contradiction.
\end{proof}

For a sequence $\varphi=(I_1,\ldots,I_s)$ with $s\ge 1$ let $R_\varphi=I_1$ be the first set in the sequence,
and  denote $|\varphi|=\sum_{r\in[s]}|I_r|$ and  $\lambda_\varphi=\prod_{r\in[s]}\prod_{f\in I_r} \lambda_f$.
Let {\tt Stab} be the set of strongly stable sequences,  ${\tt Stab}(R)=\{\varphi\in{\tt Stab}\::\:R_\varphi=R\}$,
and ${\tt Stab}(R, t)=\{\varphi\in{\tt Stab}(R)\::\:|\varphi|\ge t\}$.

Clearly, a $\pi$-stable word $W$ can be uniquely reconstructed from the corresponding sequence $\varphi_W$.
Thus, $W\mapsto \varphi_W$ is an injective mapping from ${\tt Stab}_\pi(R,t)$ to ${\tt Stab}(R,t)$.
Also, $\lambda_W=\lambda_{\varphi_W}$ for any $W\in{\tt Stab}_\pi(R,t)$.
This means that Theorem~\ref{th:StabCounting} will follow from the following result.

\begin{theorem}\label{th:StabCounting'}
Suppose that $(\rho,\sim)$ satisfies either the cluster expansion condition~\eqref{eq:Condition:b}
or the Shearer's condition from Definition~\ref{def:Shearer}. \footnote{Note that Bissacot et al.~\cite{Bissacot}
proved that the cluster expansion condition implies Shearer's condition, so it would suffice
to prove just the second claim. However, as mentioned in Remark~\ref{remark:cluster-expansion}, the definition of the cluster expansion condition
was slightly stronger than condition~\eqref{eq:Condition:b}.
Due to this annoying technicality we consider the two cases separately in the proof.
}
Then
\begin{equation}\label{eq:StabCounting'}
\sum_{\varphi\in{\tt Stab}(R,t)} \lambda_\varphi \le \mu(R) 
\cdot \theta^{t} \qquad \forall R \in \Ind(F)
\end{equation}
\end{theorem}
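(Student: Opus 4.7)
I plan to reduce Theorem~\ref{th:StabCounting'} to a single-step inequality and then complete the argument by induction on $t$. Define $\Phi_R(t):=\sum_{\varphi\in{\tt Stab}(R,t)}\lambda_\varphi$. Splitting any strongly stable sequence $\varphi=(R,I_2,\ldots,I_s)$ by whether $s=1$ (which contributes exactly when $|R|\ge t$) or $s\ge 2$ (in which case condition~(iii) forces $I_2\ne\varnothing$ and the tail is strongly stable with total length $\ge t-|R|$) yields the recursion
\[
\Phi_R(t) \;=\; \lambda_R\cdot\mathds{1}[\,|R|\ge t\,] \;+\; \lambda_R \!\!\sum_{\varnothing\ne I\in\Ind(\Gamma(R))}\!\! \Phi_I\bigl(\max(t-|R|,0)\bigr).
\]
The technical heart of the proof is the following \emph{single-step bound}, which I call $(\star)$: for every independent $R\subseteq F$,
\[
\lambda_R\sum_{I\in\Ind(\Gamma(R))}\mu(I) \;\le\; \mu(R)\,\theta^{|R|}.
\]

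Given $(\star)$, the theorem follows by induction on $t$. For $t\le|R|$ one has $\Phi_R(t)=\Psi_R:=\Phi_R(0)$; a secondary truncation induction on the number of components of $\varphi$ (let $\Psi_R^{(k)}$ denote the contribution of sequences with at most $k$ components) combined with the recursion, $(\star)$, and the convention $\Psi_\varnothing^{(k)}=1=\mu(\varnothing)$ gives $\Psi_R^{(k)}\le\mu(R)\theta^{|R|}$ for every $k$, and thus $\Psi_R\le\mu(R)\theta^{|R|}\le\mu(R)\theta^t$. For $t>|R|$, combining the recursion, the inductive hypothesis $\Phi_I(t-|R|)\le\mu(I)\theta^{t-|R|}$, and $(\star)$ yields
\[
\Phi_R(t) \;\le\; \theta^{t-|R|}\cdot\lambda_R\!\!\sum_{I\in\Ind(\Gamma(R))}\!\mu(I) \;\le\; \theta^{t-|R|}\cdot\mu(R)\theta^{|R|} \;=\; \mu(R)\theta^t.
\]

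It remains to prove $(\star)$, which I expect to be the main obstacle. Under the cluster expansion condition \eqref{eq:Condition:b}, the case $R=\{f\}$ is exactly the hypothesis; for general independent $R=\{f_1,\ldots,f_k\}$, I partition any $I\in\Ind(\Gamma(R))$ as $I=\bigsqcup_j I_j$ by assigning each $g\in I$ to the least $j$ with $g\in\Gamma(f_j)$, so that $I_j\in\Ind(\Gamma(f_j))$ and by multiplicativity $\mu(I)=\prod_j\mu(I_j)$. Relaxing the cross constraints (allowing arbitrary independent $I_j\subseteq\Gamma(f_j)$) gives $\sum_I\mu(I)\le\prod_j\sum_{I_j\in\Ind(\Gamma(f_j))}\mu(I_j)$, after which $\lambda_R=\prod_j\lambda_{f_j}$ and \eqref{eq:Condition:b} at each $f_j$ produce the factor $\theta^k=\theta^{|R|}$. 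Under Shearer's condition $\mu$ is no longer multiplicative, so I instead use the Shearer factorization $q_S(p)=p^S\cdot q_\varnothing(p|_{F\setminus\Gamma^+(S)})$, obtained by pulling $p^S$ out of the inclusion--exclusion defining $q_S$. Swapping summation order in $\sum_{I\in\Ind(\Gamma(R))}q_I(p)$ and applying $\sum_{I\subseteq L}(-1)^{|I|}=0$ unless $L=\varnothing$ collapses the sum to $\sum_{K\in\Ind(F),\,K\cap\Gamma(R)=\varnothing}(-1)^{|K|}p^K$; writing each such $K$ uniquely as $K=J\sqcup L$ with $J\subseteq R$ and $L\in\Ind(F\setminus\Gamma^+(R))$ (the condition $K\cap\Gamma(R)=\varnothing$ automatically makes $J$ and $L$ non-adjacent) factors this as $\prod_{f\in R}(1-p_f)\cdot q_\varnothing(p|_{F\setminus\Gamma^+(R)})$. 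Multiplying by $p^R$ and using Shearer's factorization for $q_R$ gives $p^R\sum_I q_I(p)=\prod_{f\in R}(1-p_f)\cdot q_R(p)\le q_R(p)$; dividing by $q_\varnothing(p)$ and using $\lambda_R\le\theta^{|R|}p^R$ yields $(\star)$.
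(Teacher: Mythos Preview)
Your proof is correct and takes a genuinely different route from the paper's.

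\textbf{Comparison.} For the cluster expansion case, the paper introduces an auxiliary family ${\tt Stab}(R,S,t,\ell)$ carrying an extra ``slack'' set $S$, and inducts on the lexicographic pair $(\ell,|R|)$, peeling off \emph{one} flaw $f\in R$ at a time. You instead isolate the full single-step inequality $(\star)$ for the entire set $R$ and then run a cleaner one-parameter induction on $t$ (with a secondary truncation on the number of components to establish finiteness). Your decomposition $I\mapsto(I_1,\dots,I_k)$ by ``least responsible $f_j$'' plus the over-count $\sum_{I\in\Ind(\Gamma(R))}\mu(I)\le\prod_j\sum_{I_j\in\Ind(\Gamma(f_j))}\mu(I_j)$ is the step that replaces the paper's element-by-element peeling; it is more direct but relies on the multiplicativity of $\mu$ (hence on $\mu_g>0$). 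For the Shearer case, the paper simply cites the bound from \cite{KolipakaSzegedy,HarveyVondrak15} as a black box, whereas you give a self-contained combinatorial proof of $(\star)$ via the factorization $q_R(p)=p^R\,q_\varnothing(p|_{F\setminus\Gamma^+(R)})$ and an inclusion--exclusion collapse. That is a nice addition.

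\textbf{Two small corrections in the Shearer step.} First, when you write each admissible $K$ as $J\sqcup L$ with $J\subseteq R$, the constraint $K\cap\Gamma(R)=\varnothing$ actually forces $J\subseteq R':=\{f\in R:f\nsim f\}$ (if $f\sim f$ then $f\in\Gamma(R)$, so $f\notin K$). Hence the factor is $\prod_{f\in R'}(1-p_f)$, not $\prod_{f\in R}(1-p_f)$. This only helps you, since it is the former product you must bound by $1$. Second, that bound requires $p_f\in[0,1]$; you should state that this follows from Shearer's condition (e.g.\ by the standard monotonicity of the Shearer region: restricting $p$ to the single coordinate $f$ keeps Shearer, giving $q_\varnothing=1-p_f>0$). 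With these two remarks your Shearer argument is complete.
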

\begin{proof}
First, assume that the Shearer's condition  holds.
In this case the claim has been proven in~\cite{KolipakaSzegedy,HarveyVondrak15}.
To elaborate, let $p\in\mathbb R^{|F|}$ be the vector from Definition~\ref{def:Shearer},
and define
\begin{eqnarray}
p_\varphi&=&\prod_{r\in[s]}\prod_{f\in I_r} p_f
\end{eqnarray}
Let ${\tt Stab}'$ be the set of sequences $\phi=(I_1,\ldots,I_s)$ that satisfy conditions (i) and (ii) given
in the beginning of this section. For any integer $\ell$
we can write
\begin{equation}\label{eq:FADSGAHSIA}
\sum_{\substack{\varphi\in{\tt Stab}(R,t) \\ \mbox{\small length of $\varphi$ is at most $\ell$}}} p_\varphi \le \sum_{\varphi=(R,I_2,\ldots,I_\ell)\in{\tt Stab}'} p_\varphi \le
\frac{q_R(p)}{q_\varnothing(p)}
\qquad \forall R \in \Ind(F)
\end{equation}
where the first inequality holds since for any $\varphi=(I_1,\ldots,I_s)\in{\tt Stab}(R,t)$ with $s\le \ell$
there is a corresponding sequence $\varphi'=(R,I_2,\ldots,I_s,\varnothing,\ldots,\varnothing)\in{\tt Stab}'$ of length $\ell$,
and the second inequality appears implicitly eq. (2) in~\cite{KolipakaSzegedy} and as Lemma 5.10 in \cite{HarveyVondrak15}.
Taking a limit $\ell\rightarrow\infty$ in~\eqref{eq:FADSGAHSIA}
and observing that  $\mu(R)=\frac{q_R(p)}{q_\varnothing(p)}$, we get
$\sum_{\varphi\in{\tt Stab}(R,t)} p_\varphi \le \mu(R)$.
From definitions we have $\lambda_\varphi\le p_\varphi\cdot\theta^t$ for any $\varphi\in {\tt Stab}(R,t)$,
this gives~\eqref{eq:StabCounting'}.

Now assume that condition~\eqref{eq:Condition} holds.
We say that a pair of subsets $(R,S)$ is {\em independent} if $R\cap S=\varnothing$ and $R\cup S \in \Ind(F)$.
For such pair let ${\tt Stab}(R,S,t,\ell)$ be the set of sequences of the form $\varphi=(R,I_2,\ldots,I_s)$
with $s\le \ell$, $|\varphi|\ge t$
that satisfy one of the following: 
\begin{itemize}
\item $|s|=1$ (i.e. $\varphi=(R)$) and $S=\varnothing$;
\item $|s|\ge 2$, $S= I_2-\Gamma(R)$ and $(I_2,\ldots,I_s)\in{\tt Stab}$.
\end{itemize}
It can be seen that ${\tt Stab}(R,\varnothing,t,\infty)={\tt Stab}(R,t)$
for an independent set $R$.
%
%

We say that a tuple $(R,S,t,\ell)$ is {\em valid} if $(R,S)$ is an independent pair
and $t\ge 0$, $\ell\ge 1$ are integers.
We will prove the following for any valid tuple $(R,S,t,\ell)$:
\begin{equation}\label{eq:ALKDSFNAKSJFA}
\sum_{\varphi\in{\tt Stab}(R,S,t,\ell)} \lambda_\varphi \le \mu(R)\mu(S)\cdot \theta^{t} 
\end{equation}
%
Let us introduce a partial order $\sqsubseteq$ on tuples $(R,S,t,\ell)$
as the lexicographical order on vectors $(\ell,|R|)$ (the first component is more significant).
We use induction on this partial order. The base case is given by a tuple $(R,S,t,1)$.
We can assume that $S=\varnothing$ and $t\le |R|$ (otherwise ${\tt Stab}(R,S,t,1)$ is empty).
In this case ${\tt Stab}(R,S,t,1)$ contains a single sequence $\varphi=(R)$, and so
$$
\sum_{\varphi\in{\tt Stab}(R,\varnothing,t,1)}\lambda_\varphi\;=\; \prod_{f\in R}\lambda_f \;\le\; \prod_{f\in R} (\mu_f\cdot\theta)=\mu(R)\cdot\theta^{|R|} \;\le\;\mu(R)\cdot\theta^t
$$
where we used inequality $\lambda_f\le\mu_f\cdot\theta$ that follows from condition~\eqref{eq:Condition:b} with $S=\varnothing$.

Now consider a valid tuple $(R,S,t,\ell)$ with $\ell\ge 2$, and assume that the claim holds for lower tuples. Two cases are possible.

\begin{itemize}
\item {$R=\varnothing$.} We have a natural isomorphism between sets ${\tt Stab}(\varnothing,S,t,\ell)$
and ${\tt Stab}(S,\varnothing,t,\ell-1)$, namely ${\tt Stab}(\varnothing,S,t,\ell)=\{(\varnothing,I_1,\ldots,I_r)\:|\:(I_1,\ldots,I_r)\in
{\tt Stab}(S,\varnothing,t,\ell-1)\}$. This gives
$$
\sum_{\varphi\in{\tt Stab}(\varnothing,S,t,\ell)}\lambda_\varphi\;=\; \sum_{\varphi\in{\tt Stab}(S,\varnothing,t,\ell-1)}\lambda_\varphi\;\le\;\mu(S)\cdot \theta^t
$$
where in the last inequality we used the induction hypothesis.

\item {$R\ne\varnothing$.}
Pick $f\in R$, and denote $R^-=R-\{f\}$. Let $\tilde\Gamma(f)$ be the set of flaws $g\in \Gamma(f)$
that (i) can occur in the second set of a sequence $\varphi\in {\tt Stab}(R,S,t,\ell)$, and (ii) do belong to $\Gamma(f')$ for any $f'\in R^-$.
Formally, $\tilde\Gamma(f)=\{g\in \Gamma(f)\:|\:g\nsim f'\mbox{ for all } f'\in R\cup S\}$.
We can write
\begin{eqnarray*}
\sum_{\varphi\in{\tt Stab}(R,S,t,\ell)} \lambda_\varphi 
&=&  \sum_{T\in\Ind(\tilde\Gamma(f))} \sum_{\varphi\in{\tt Stab}(R^-,S\cup T,t-1,\ell)} \lambda_f\cdot\lambda_\varphi \\
&\stackrel{\mbox{\tiny(a)}}\le& \lambda_f \sum_{T\in\Ind(\tilde\Gamma(f))} \mu(R^-)\mu(S\cup T)\cdot \theta^{t-1} \\
&\stackrel{\mbox{\tiny(b)}}= &  \mu(R)\mu( S)\theta^{t-1}\cdot \frac{\lambda_f}{\mu_f} \sum_{T\in\Ind(\tilde\Gamma(f))} \mu(T) \\
&\stackrel{\mbox{\tiny(c)}}\le& \mu(R)\mu( S)\theta^{t-1}\cdot \theta
\end{eqnarray*}
where (a) is by the induction hypothesis, (b) is true since $\mu(R^-)=\mu(R)/\mu_f$ and $\mu(S\cup T)=\mu(S)\mu(T)$,
and (c) follows from condition~\eqref{eq:Condition:b}. 
\end{itemize}

\end{proof}

\section*{Acknowledgments}
The author thanks Dimitris Achlioptas and anonymous reviewers for useful suggestions that helped to improve the presentation of the paper.
This work was supported by the European Research Council under the European Unions Seventh Framework Programme (FP7/2007-2013)/ERC grant agreement no 616160.


\bibliographystyle{plain}
\bibliography{LLL}

\begin{thebibliography}{10}

\bibitem{Achlioptas}
Dimitris Achlioptas and Fotis Iliopoulos.
\newblock Random walks that find perfect objects and the {L}ov{\'{a}}sz local
  lemma.
\newblock In {\em 55th {IEEE} Annual Symposium on Foundations of Computer
  Science ({FOCS})}, pages 494--503, 2014.

\bibitem{Achlioptas:new}
Dimitris Achlioptas and Fotis Iliopoulos.
\newblock Focused stochastic local search and the {L}ov\'asz local lemma.
\newblock In {\em SODA}, 2016.

\bibitem{LLL:submitted}
Dimitris Achlioptas, Fotis Iliopoulos, and Vladimir Kolmogorov.
\newblock Focused stochastic local search and the {L}ov\'asz local lemma.
\newblock arXiv, September 2018.

\bibitem{Bissacot}
R.~Bissacot, R.~Fern\'andez, A.~Procacci, and B.~Scoppola.
\newblock An improvement of the {L}ov\'asz local lemma via cluster expansion.
\newblock {\em Combin. Probab. Comput.}, 20:709--719, 2011.

\bibitem{Boettcher}
Julia B{\"o}ttcher, Yoshiharu Kohayakawa, and Aldo Procacci.
\newblock Properly coloured copies and rainbow copies of large graphs with
  small maximum degree.
\newblock {\em Random Structures and Algorithms}, 40(4):425--436, 2012.

\bibitem{LLL75}
P.~Erd\H{o}s and L.~Lov\'asz.
\newblock Problems and results on 3-chromatic hypergraphs and some related
  questions.
\newblock In {\em Colloq. Math. Soc. J. Bolyai, Infinite and Finite Sets},
  volume~10, pages 609--627. North-Holland, 1975.

\bibitem{ErdosSpencer:LLLL}
P.~Erd\H{o}s and J.~Spencer.
\newblock The lopsided {L}ov\'asz local lemma and latin transversals.
\newblock {\em Discrete Applied Mathematics}, 30:151--154, 1991.

\bibitem{Giotis}
Ioannis Giotis, Lefteris Kirousis, Kostas~I. Psaromiligkos, and Dimitrios~M.
  Thilikos.
\newblock On the algorithmic {L}ov\'asz local lemma and acyclic edge coloring.
\newblock In {\em Proceedings of ANALCO}, 2015.

\bibitem{Harris:permutations}
David~G. Harris and Aravind Srinivasan.
\newblock A constructive algorithm for the {L}ov\'asz local lemma on
  permutations.
\newblock In {\em SODA}, pages 907--925, 2014.

\bibitem{HarveyVondrak15}
Nicholas Harvey and Jan Vondr{\'{a}}k.
\newblock An algorithmic proof of the {L}ov\'asz local lemma via resampling
  oracles.
\newblock In {\em FOCS}, 2015.

\bibitem{Knuth}
Donald Knuth.
\newblock The art of computer programming, {V}olume 4{B} (draft, pre-fascicle
  6a).
\newblock
  \url{http://web.archive.org/web/20150815030301/http://www-cs-faculty.stanford.edu/~uno/news.html},
  July 2015.

\bibitem{KolipakaSzegedy}
Kashyap Babu~Rao Kolipaka and Mario Szegedy.
\newblock {M}oser and {T}ardos meet {L}ov\'asz.
\newblock In {\em Proceedings of the Forty-third Annual ACM Symposium on Theory
  of Computing}, STOC, pages 235--244, 2011.

\bibitem{vnk:LLL:TR}
Vladimir Kolmogorov.
\newblock Commutativity in the random walk formulation of the {L}ov\'asz local
  lemma.
\newblock {\em CoRR}, abs/1506.08547v1, June 2015.

\bibitem{LLL:focs}
Vladimir Kolmogorov.
\newblock Commutativity in the algorithmic {L}ov\'asz local lemma.
\newblock In {\em FOCS}, 2016.

\bibitem{Lu:13}
Lincoln Lu, Austin Mohr, and L\'aszl\'o Sz\'ekely.
\newblock Quest for negative dependency graphs.
\newblock In D.~Bilyk, L.~De~Carli, A.~Petukhov, A.M. Stokolos, and B.D. Wick,
  editors, {\em Recent Advances in Harmonic Analysis and Applications}.
  Springer, 2013.

\bibitem{Mohr:PhD}
Austin Mohr.
\newblock {\em Applications of the lopsided {L}ov\'asz local lemma regarding
  hypergraphs}.
\newblock PhD thesis, University of South Carolina, 2013.

\bibitem{MoserTardos}
Robin~A. Moser and G\'abor Tardos.
\newblock A constructive proof of the general {L}ov\'asz local lemma.
\newblock {\em J. ACM}, 57(2), 2010.

\bibitem{Ndreca}
Sokol Ndreca, Aldo Procacci, and Benedetto Scoppola.
\newblock Improved bounds on coloring of graphs.
\newblock {\em European Journal of Combinatorics}, 33(4):592--609, 2012.

\bibitem{Pedgen}
Wesley Pegden.
\newblock An extension of the {M}oser-{T}ardos algorithmic local lemma.
\newblock {\em SIAM Journal on Discrete Mathematics}, 28(2):911--917, 2014.

\bibitem{Shearer}
James~B. Shearer.
\newblock On a problem of {S}pencer.
\newblock {\em Combinatorica}, 5(3):241--245, 1985.

\bibitem{Szegedy:survey}
Mario Szegedy.
\newblock The {L}ov\'asz local lemma - a survey.
\newblock In Andrei~A. Bulatov and Arseny~M. Shur, editors, {\em Computer
  Science – Theory and Applications}, volume 7913, pages 1--11. Springer,
  Lecture Notes in Computer Science, 2013.

\end{thebibliography}
\end{document}